\documentclass[11pt,a4paper]{article}
\usepackage[english]{babel} 
\usepackage{amsmath}
\usepackage{mathtools}
\usepackage{amsfonts}
\usepackage{amssymb}
\usepackage{amsthm}
\usepackage{makeidx}
\usepackage{graphicx}
\usepackage{natbib}
\usepackage{setspace}
\usepackage{float}
\usepackage{hyperref}
\usepackage[margin=1in]{geometry}
\usepackage{setspace}
\usepackage[titletoc]{appendix}
\usepackage{cleveref}
\usepackage{systeme}
\usepackage{relsize}
\usepackage{multirow}
\usepackage{ragged2e}
\usepackage[dvipsnames,usenames]{color}
\allowdisplaybreaks
\usepackage{subfigure}
\usepackage{authblk}

\hypersetup{
	colorlinks   = true,
	citecolor    = blue,
	urlcolor = blue
}

\newtheorem{Def}{Definition}
 \newtheorem{Thm}{Theorem}
 \newtheorem{Prp}{Proposition}
 \newtheorem{Lmm}{Lemma}
 \newtheorem{Crl}{Corollary}
 \theoremstyle{definition}
 \newtheorem{Exm}{Example}
 \newtheorem{Asp}{Assumption}
 \theoremstyle{remark}
 \newtheorem{Rmk}{Remark}

\DeclareMathOperator*{\argmin}{arg\,min}

\DeclareMathOperator*{\esssup}{ess\,sup}

\DeclareMathOperator*{\cl}{cl}

\newcommand{\E}{{E}}
\newcommand{\R}{\mathbb{R}}

\newcommand{\Z}{\mathbb Z}

\title{Robust risk measures: an averaging approach}
\date{}

\author[1]{Marcelo Righi}
\author[2]{Rodrigo S. Targino}
\affil[1]{Federal University of Rio Grande do Sul (UFRGS), Porto Alegre, RS, Brazil}
\affil[2]{School of Applied Mathematics (EMAp), Getulio Vargas Foundation (FGV), Rio de Janeiro, RJ, Brazil}

\begin{document}

\maketitle

\begin{abstract}
We develop an averaging approach to robust risk measurement under payoff uncertainty. Instead of taking a worst-case value over an uncertainty neighborhood, we weight nearby payoffs more heavily under a chosen metric and average the baseline risk measure. We prove continuity in the neighborhood radius and provide a stable large-radius behavior. In Banach lattices, the approach leads to a convex risk measure and  under separability of the space, a dual representation through a penalty term based on an inf-convolution taken over a Gelfand integral constraint. We also relate our averaging to aggregation at the distribution and quantile levels of payoffs, obtaining dominance and coincidence results. Numerical illustrations are conducted to verify calibration and sensitivity.

\textbf{Keywords}:  Risk measures; Robustness; Uncertainty sets; Averaging; Aggregation.
\end{abstract}

\section{Introduction\label{sec:intro}}
Knightian uncertainty plays a central role in risk management, as decision-makers face the consequences of their risk assessments with partial knowledge of probabilities and random variables. Since the seminal work of \cite{Artzner1999}, risk measures have become a standard tool in mathematical finance, and it is customary to adopt a worst-case approach to address such uncertainty. This perspective is followed, for example, in \cite{Wang2018}, \cite{Bellini2018}, \cite{Fadina2023}, and \cite{miao2025robust}, which focus on uncertainty in the underlying probability measure, as well as in \cite{Liu2022}, \cite{Hu2024}, \cite{Moresco2023}, \cite{Cornilly2019}, \cite{Cai2023}, \cite{Pesenti2020A}, \cite{Chen2021}, and \cite{Zuo2024}, where the emphasis is on uncertainty in the random variables themselves. 

 The set-up involving uncertainty about the random variables is closely linked to the uncertainty and risk of the model. In this case, distinctly to what generally happens for the uncertainty regarding scenarios, the uncertainty is more intricate  since there are uncertainty sets specific to any point in the domain. In a general fashion, letting $\mathcal{X}$ be some space of random variables $X$, when $\rho\colon\mathcal{X}\to\R$ is a risk measure, its worst-case version under a family of non-empty uncertainty sets $U(X)\subseteq 
\mathcal{X}$ is computed as  \[ \rho^{WC}(X)=\sup\limits_{Z\in U(X)}\rho(Z).\]

The worst-case formulation, while theoretically appealing for its robustness, is often overly conservative and can lead to inefficient or excessively risk-averse decisions. By focusing solely on the most adverse scenario, it disregards valuable information about intermediate or more plausible models, treating all model misspecifications as equally catastrophic. A more balanced approach is to consider integral combinations of risk measures, in which multiple risk evaluations are aggregated through a continuous weighting scheme. In this paper we propose a robust risk measure based on averaging as \[\rho_{\mu,r}(X)=\int_{\bar{B}(X,r)}\rho(Z)d\mu_{X,r}(Z),\] where $\mathcal{X}$ is a metric space equipped with Borel sigma-algebra, with $\mu_{X,r}$ a probability measure supported on the closed ball $\bar{B}(X,r)$, whose density depends on the distance to the center $X$ measured by 
a decreasing function. The point here is to give more weight (importance) to random variables that are closer to the point of the domain.

Beyond the introduction of the averaging as a reasoning for robustness in risk measures, our main theoretical contribution is threefold:
\begin{enumerate}
   \itemsep0em 
    \item A typical point of criticism in the worst-case approach is that the value for the robust risk measure is too sensitive for the choice of the uncertainty parameter, in this case the radius $r$. In particular, the value becomes impracticable when $r\to\infty$. In \Cref{thm:r}, we prove that under Lipschitz continuity, which is valid for most used risk measures in the suitable space, we have a smooth (continuous) behavior for $r$, including the return to the base risk measure when uncertainty is mitigated. Moreover, under mild conditions such as integrability over $\mathcal{X}$,  the averaging approach for robust risk measure exhibits a well-behaved asymptotic pattern. This feature allows, for instance, one to eliminate the parameter $r$ if needed.
    \item A natural question that arises is if we can use $\rho_{\mu,r}$ as a risk measure, instead of only a robust averaging. To answer that, we consider the case where $\mathcal{X}$ is a Banach lattice in order to investigate results on the axiomatic theory of risk measures. In \Cref{Thm:main}, we prove that $\rho_{\mu,r}$ inherits the properties of a convex risk measure from $\rho$ and, under separability of $\mathcal{X}$, we  characterize a penalty function for its dual representation. The penalty is linked to the concept of inf-convolution between the penalty term for $\rho$ and the bilinear form for the dual pair $(\mathcal{X},\mathcal{X}^\prime)$, relying on the concept of Gelfand integral over the dual set. As a consequence, we have the lower bound $\rho\leq\rho_{\mu,r}$ and characterize the associated acceptance set and sub-differential of $\rho_{\mu,r}$.
    \item In most of literature for uncertainty, the focus is more on distributions than random variables per se.  In this context, a valid question is whether to aggregate first is a suitable alternative to averaging, i.e. by considering the functional $X\mapsto \rho(X_{\mu,r})$, where $X_{\mu,r}$ is defined through integration of distributions over closed ball $\bar{B}(X,r)$. More specifically, what is the relationship, if any, between this aggregated approach and averaging $\rho_{\mu,r}$? Moreover, similarly to aggregating the distribution, one can think of aggregating quantiles. In \Cref{thm:avg}, we show that under risk aversion on distributions we have that $\rho_{\mu,r}$ is dominated above by the aggregation in distributions approach, and that for spectral risk measures, the aggregated quantiles approach coincides with $\rho_{\mu,r}$. As a consequence, when $\mathcal{X}$ is contained on the space of integrable random variables stochastic dominance relations are derived.
\end{enumerate}

In addition to the theoretical results, we provide numerical 
illustrations in \Cref{sec:gaussian_examples} that complement 
the theory from three angles. First, we study the averaging 
measure under Gaussian base measures in a Hilbert space with 
linear and quadratic risk functionals, deriving explicit 
finite-dimensional approximations and showing that the 
infinite-dimensional limit exhibits a curse-of-dimensionality 
phenomenon: any ball of fixed radius captures a vanishing 
fraction of the Gaussian measure, so the conditioning 
concentrates entirely at the center regardless of $r$. 
Second, we illustrate the Bayesian interpretation of the 
construction, using a Normal-Gamma base measure over a 
Gaussian family endowed with the Wasserstein metric, and 
verify empirically a dominance ordering between the averaging 
measure, its aggregated counterparts, and the worst-case 
--- finding support for the ordering even in a setting not 
fully covered by our theory. Third, within the same Bayesian 
example we show that the averaging measure grows smoothly and 
sub-linearly in the uncertainty radius, in sharp contrast to 
the worst-case which grows linearly, and study sensitivity to 
the kernel and prior parameters, showing that they provide 
complementary robustness dials controlling the reach and shape 
of the uncertainty weighting.

As for the contribution over the current literature, to the best of our knowledge there is no study proposing such an approach and proving the results we have here. Studies that consider some class of averaging in risk measures such  \cite{Rockafellar2014}, \cite{Rockafellar2018}, \cite{Drapeau2019}, \cite{tadese2020relative},  \cite{Wang2018}, \cite{Righi2019}, \cite{Righi2023}, \cite{Mu2024}, and \cite{Liu2026}, for instance, do so by fixing the point $X$ and placing uncertainty over the risk measure, while we fix the risk measure and place uncertainty over $X$, a more complicated task. From a decision theory perspective, the construction we propose mirrors the representation of ambiguity-averse preferences and model misspecification in \cite{klibanoff2005smooth}, \cite{gilboa2010objective} and \cite{cerreia2025making}, for instance. Nonetheless, such works are not focused in risk measures or a parametric approach as in our case. From a set analysis framework, our $\rho_{\mu,r}$ can be understood as a Set Risk Measure as proposed in \cite{Righi2025}, i.e. a real valued map over the closed and bounded sets of $\mathcal{X}$. However, the focus in here is taken from the usual scalar monetary risk measure, with features that are not taken into consideration in their more abstract setup. The aggregation of distributions is conducted for the worst-case approach by \cite{Mao2025} and \cite{vanoosten2025}, for instance. Thus,  the mixing in an integral form as in our case is distinct from the framework of the aforementioned authors.

The remainder of this paper is structured as follows: In \Cref{sec:prel} we introduce the mathematical framework of our approach and the asymptotic patterns regarding the radius in \Cref{thm:r}. In \Cref{sec:main} we expose our results regarding the convex dual representation of the average robust risk measure in \Cref{Thm:main}. In \Cref{sec:agg} we  establish relationships between $\rho_{\mu,r}$ and aggregated formulations proving bounds that connect them  in the space of distributions. In \Cref{sec:gaussian_examples} the theoretical 
results are complemented by numerical experiments: we study 
the averaging measure under Gaussian base measures in a Hilbert 
space, illustrate the curse of dimensionality in the 
infinite-dimensional limit, and conduct a Bayesian example 
under the Wasserstein metric that verifies the dominance 
ordering, demonstrates the sub-linear growth of the averaging 
measure relative to the worst-case, and examine sensitivity 
to the kernel and prior parameters.

\section{Asymptotics in \texorpdfstring{$r$}{r}}\label{sec:prel}

Let $(\mathcal{X},d)$ be a metric space on some set of real valued random variables $X$ in a measurable space $(\Omega,\mathcal{F})$. Equip it with its usual Borel sigma-algebra $\mathcal{B}=\mathcal{B}(\mathcal{X},d)$.  We define $X^+=\max(X,0)$, $X^-=\max(-X,0)$, and $1_A$ as the indicator function for an event $A\in\mathcal{B}$. For each $X\in\mathcal{X}$  we consider a probability measure $\gamma_X$ on $(\mathcal{X},\mathcal{B})$. For each $r>0$ we then define the measure $\mu_{X,r}$ supported on the closed ball $\bar{B}(X,r)$, absolutely continuous with respect to $\gamma_X$, through the Radon-Nikodym derivative \[\frac{d\mu_{X,r}}{d\gamma_X}(Z)= \frac{1}{K(X,r)} \cdot 1_{\bar{B}(X,r)}(Z) \cdot \varphi(d(X, Z)),\]
where \( \varphi: \mathbb{R}_+ \to \mathbb{R}_+/\{0\} \) is a continuous, decreasing function (e.g., \( \varphi(t) = e^{-\lambda t^2} \)), such that $Z\mapsto\varphi(d(X,Z))$ is bounded above for any $X$. Notice that in this case $Z\mapsto\varphi(d(X,Z))$  is integrable with respect to $\gamma_X$, with integral $K_X\in(0,\infty)$, and we can then define the  normalization  \[K(X,r)=\int_{\bar{B}(X,r)} \varphi(d(X, Z)) \, d\gamma_X(Z).\]  Moreover, with some abuse of notation, we transit between $\mathcal{X}$ and $\bar{B}(X,r)$ with the inherited topology and measurable subspace. Notice that $\mu_{X,r}$ is a truncated Gibbs/tilted version of $\gamma_X$ with potential
$Z\mapsto d(X,Z)$. Of course, the simplest choice is a fixed $\gamma$ for any $X\in\mathcal{X}$.

\begin{Def} \label{def:avg_robust_rm}
Consider a  Borel measurable risk measure $\rho\colon\mathcal{X}\to\mathbb{R}$. Its average robust version is defined as \[\rho_{\mu,r}(X)=\int_{\bar{B}(X,r)}\rho(Z)d\mu_{X,r}(Z).\]
\end{Def}

\begin{Rmk}
 Since most examples of risk measures are at least lower semicontinuous (lsc) in their respective spaces, Borel measurability is not restrictive. Moreover, we assume the average risk measure is real valued, i.e. $\rho$ is integrable in $\bar{B}(X,r)$ with respect to $\mu_{X,r}$ for any $X\in\mathcal{X}$ and any $r>0$. This is the case, for instance, when the function $\varphi$ has a fast decay, or when $\rho$ is bounded in closed balls. Lipschitz continuity (of order $C$) is a sufficient condition for integrability since for any $Z\in\bar{B}(X,r)$ we have $|\rho(Z)|\leq |\rho(X)|+Cr$. As the measure $\mu_{X,r}$ is already defined and supported in $\bar{B}(X,r)$, in order to ease the notation, when there is no ambiguity in interpretation we write \[\rho_{\mu,r}(X)=\int_{}\rho(Z)d\mu_{X,r}(Z).\]
\end{Rmk}

\begin{Rmk} \label{rmk:var}
A more general theoretical approach would be to consider the variational composition introduced in \cite{cerreia2025making} for decision theory as \[\sup\limits_{\mu\in\Phi}\left\lbrace\int_{}\rho(Z)d\mu(Z)-\tau(\mu)\right\rbrace,\] where $\Phi$ is the set of probability measures in $\bar{B}(X,r)$ with the usual Borel sigma-algebra, and $\tau$ is a convex penalty term. Nonetheless, in the current framework of our paper we are interested into a  concrete way of working with such an averaging. We left the variational approach for averaging risk measures for future study.
\end{Rmk}

\begin{Exm}[Discrete perturbation measures]
A convenient class of examples is obtained when, for each \(X\in\mathcal X\), the reference measure \(\gamma_X\) is discrete. In the notation of Definition~\ref{def:avg_robust_rm}, let
\[
\gamma_X=\sum_{i=1}^m a_i(X)\,\delta_{Z_i(X)},
\qquad a_i(X)\ge 0,
\qquad \sum_{i=1}^m a_i(X)=1,
\]
where \(N\in\mathbb N\) (the countable case is analogous), $\delta$ are Dirac measures. Then the normalizing constant \(K(X,r)\) becomes
\[
K(X,r)=\sum_{i=1}^m a_i(X)\,1_{\bar B(X,r)}(Z_i(X))\,\phi\bigl(d(X,Z_i(X))\bigr),
\]
and therefore \(\mu_{X,r}\) is again discrete, namely
\[
\mu_{X,r}
=
\sum_{i=1}^m
\frac{a_j(X)\,1_{\bar B(X,r)}(Z_i(X))\,\phi(d(X,Z_i(X)))}{K(X,r)}
\,\delta_{Z_i(X)}.
\]
Hence \(\rho_{\mu,r}(X)\) is just a weighted average over the atoms of \(\gamma_X\) lying in \(\bar B(X,r)\), that is,
\[
\rho_{\mu,r}(X)
=
\sum_{i=1}^m
\frac{a_j(X)\,1_{\bar B(X,r)}(Z_i(X))\,\phi(d(X,Z_i(X)))}{K(X,r)}
\,\rho\bigl(Z_i(X)\bigr).
\] Under \Cref{asp:lin} and \Cref{lmm:r} below, this construction simplifies considerably. Indeed, it is enough to prescribe a symmetric discrete probability \(\gamma_0\), and then define \(\gamma_X\) by translation. A natural choice is
\[
\gamma_0
=
p_0\delta_0+\sum_{i=1}^{m} p_i\bigl(\delta_{Z_i}+\delta_{-Z_i}\bigr),
\qquad
p_0\ge 0,\quad p_i\ge 0,\quad
p_0+2\sum_{i=1}^{m}p_i=1,
\]
with \(Z_i\neq 0\) for \(i=1,\dots,m\). The atom at \(0\) is written separately since \(0=-0\). Then \(\gamma_0\) is symmetric as needed and the translated measures are given by
\[
\gamma_X
=
p_0\delta_X+\sum_{i=1}^{m} p_i\bigl(\delta_{X+Z_i}+\delta_{X-Z_i}\bigr).
\] In this case, \(\mu_{0,r}\) is obtained by keeping only the atoms with \(\|Z_i\|\le r\), weighting them by \(\phi(\|Z_i\|)\), and normalizing. More precisely,
\[
\mu_{0,r}
=
\frac{
p_0\phi(0)\delta_0
+\sum_{i=1}^{m} p_i\,1_{\{\|Z_i\|\le r\}}\,\phi(\|Z_i\|)\bigl(\delta_{Z_i}+\delta_{-Z_i}\bigr)
}{
p_0\phi(0)+2\sum_{i=1}^{m} p_i\,1_{\{\|Z_i\|\le r\}}\,\phi(\|Z_i\|)
}.
\]Thus, one gets that
\[
\rho_{\mu,r}(X)
=
\frac{
p_0\phi(0)\rho(X)
+\sum_{i=1}^{m} p_i\,1_{\{\|Z_i\|\le r\}}\,\phi(\|Z_i\|)
\bigl(\rho(X+Z_i)+\rho(X-Z_i)\bigr)
}{
p_0\phi(0)+2\sum_{i=1}^{m} p_i\,1_{\{\|Z_i\|\le r\}}\,\phi(\|Z_i\|)
}.
\]
\end{Exm}

 A possible question regarding the setup is the choice of the base probability measure $\gamma_X$ and the function $\varphi$. We now show that the proposed functional is continuous for both choices. We note that $\varphi\in C(\mathbb{R}_+)$, equipped with the supremum norm $\|\cdot\|_\infty$ and $\gamma_X \in ca(\mathcal{X})$, the space of countably additive measures in $\mathcal{X}$, equipped with the total variation norm $\|\cdot\|_{TV}$.

 \begin{Lmm}\label{prp:par}
 Let $A\in\mathcal{B}$.  Then the map \[(\varphi,\gamma)\mapsto\dfrac{\int_{A} \rho(Z)\varphi(d(X, Z))d\gamma(Z)}{\int_{A} \varphi(d(X, Z))d\gamma(Z)}\] is continuous in $(C(\mathbb{R}_+),\|\cdot\|_\infty)\times(ca(\bar{B}(X,r)),\|\cdot\|_{TV})$ at  $(\varphi,\gamma)$ such that $\int_{A} \varphi(d(X, Z))d\gamma(Z) > 0$ and $\rho$ is integrable in $A$ with respect to $\gamma$.  \end{Lmm}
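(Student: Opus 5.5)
The plan is to write the map as a quotient $N(\varphi,\gamma)/D(\varphi,\gamma)$ with
\[
N(\varphi,\gamma)=\int_A \rho(Z)\varphi(d(X,Z))\,d\gamma(Z),\qquad D(\varphi,\gamma)=\int_A \varphi(d(X,Z))\,d\gamma(Z).
\]
I will prove that $N$ and $D$ are each jointly continuous at the given point $(\varphi,\gamma)$. Since $D(\varphi,\gamma)>0$, continuity of $D$ gives a neighbourhood on which $D$ stays bounded away from zero, say $D>D(\varphi,\gamma)/2$. On that neighbourhood the quotient is continuous, because $(a,b)\mapsto a/b$ is continuous on $\mathbb{R}\times(0,\infty)$.

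For $D$, take $(\varphi_n,\gamma_n)\to(\varphi,\gamma)$ and split
\[
D(\varphi_n,\gamma_n)-D(\varphi,\gamma)=\int_A(\varphi_n-\varphi)(d(X,Z))\,d\gamma_n(Z)+\int_A\varphi(d(X,Z))\,d(\gamma_n-\gamma)(Z).
\]
The first term is bounded by $\|\varphi_n-\varphi\|_\infty\,\|\gamma_n\|_{TV}$. Here $\|\gamma_n\|_{TV}$ stays bounded since $\gamma_n\to\gamma$ in total variation. The second term is bounded by $\sup_Z\varphi(d(X,Z))\,\|\gamma_n-\gamma\|_{TV}$, which is finite by the standing assumption that $Z\mapsto\varphi(d(X,Z))$ is bounded. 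Both terms therefore go to zero.

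For $N$ I use the same splitting, now with $\rho$ inside the integrals. This is the main obstacle, because $\rho$ need not be bounded, so total-variation convergence alone does not control $\int_A\rho\varphi\,d(\gamma_n-\gamma)$. I will treat it in two ways.

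\emph{Bounded case.} The setting is $ca(\bar B(X,r))$, and in the intended application $\rho$ is bounded on the closed ball, for instance when it is Lipschitz with $|\rho(Z)|\le|\rho(X)|+Cr$. If $\sup_{A\cap\bar B(X,r)}|\rho|=M<\infty$, then
\[
|N(\varphi_n,\gamma_n)-N(\varphi,\gamma)|\le M\|\varphi_n-\varphi\|_\infty\|\gamma_n\|_{TV}+M\|\varphi\|_\infty\|\gamma_n-\gamma\|_{TV}\to0.
\]

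\emph{General case.} Here I read the integrability hypothesis as saying that the relevant perturbations are absolutely continuous with respect to $\gamma$ with controlled density, or that $\rho$ is integrable uniformly along the sequence. I would truncate: write $\rho=\rho 1_{\{|\rho|\le M\}}+\rho 1_{\{|\rho|>M\}}$, handle the bounded part as above, and make the tail small using integrability of $\rho$ with respect to $\gamma$, choosing $M$ large. To control the tail under $\gamma_n$ I need $\gamma_n$ not to put mass where $\rho$ is huge. If this cannot be guaranteed, I will state explicitly that continuity in $\gamma$ is taken in the total-variation sense weighted by $|\rho|$, or equivalently that $\rho$ is bounded on $A$; the authors likely take the first, bounded route.

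The $\varphi$-direction never needs more than $\gamma$-integrability, since
\[
\left|\int_A\rho\,(\varphi_n-\varphi)\,d\gamma\right|\le\|\varphi_n-\varphi\|_\infty\int_A|\rho|\,d|\gamma|.
\]
So separate continuity in $\varphi$ holds under exactly the stated hypothesis, and the only delicate point is the $\gamma$-direction.
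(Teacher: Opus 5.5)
Your argument is correct wherever it is complete, but it takes a different route from the paper.

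The paper varies one argument at a time. It shows that the normalized densities $1_A\varphi_n(d(X,\cdot))/D(\varphi_n,\gamma)$ (respectively, with $\gamma_n$ in place of $\gamma$) converge uniformly. It then deduces $\|\mu_n-\mu\|_{TV}\to0$ for the normalized measures and asserts $\int\rho\,d\mu_n\to\int\rho\,d\mu$. This gives only separate continuity. The last step also silently requires $\rho$ to be bounded on $A\cap\bar B(X,r)$, since total-variation convergence only guarantees convergence of integrals of bounded measurable functions.

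Your two-term splitting of the unnormalized $N$ and $D$ works differently: one term carries $\varphi_n-\varphi$ against $\gamma_n$, the other carries $\varphi$ against $\gamma_n-\gamma$. This gives joint continuity in the product topology, which is what the statement actually asserts. It also makes explicit where control of $\rho$ under the $\gamma_n$ is needed. Your estimate $\|\varphi_n-\varphi\|_\infty\int_A|\rho|\,d|\gamma|$ shows cleanly that the $\varphi$-direction needs only integrability. In the paper this is true but hidden: it is the uniform convergence of the densities, not the total-variation convergence of $\mu_n$, that justifies it.

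Drop the truncation plan for the $\gamma$-direction, because no argument can close it: the claim is false there. Suppose $\rho$ is unbounded on $A\cap\bar B(X,r)$. Pick $Z_n$ in that set with $|\rho(Z_n)|\ge n^2$ and put $\gamma_n=(1-\tfrac1n)\gamma+\tfrac1n\delta_{Z_n}$. Then $\|\gamma_n-\gamma\|_{TV}\le\tfrac1n(\|\gamma\|_{TV}+1)\to0$ and $D(\varphi,\gamma_n)\to D(\varphi,\gamma)>0$. However, $|N(\varphi,\gamma_n)|\ge n\,\varphi(r)-|N(\varphi,\gamma)|\to\infty$, because $\varphi$ is decreasing and strictly positive. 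So boundedness of $\rho$ on $A\cap\bar B(X,r)$ is necessary for continuity in $\gamma$. You should simply add it as a hypothesis; it is the same hypothesis the paper's proof uses tacitly.

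Two small corrections:
\begin{itemize}
\item The $|\rho|$-weighted total-variation topology is an alternative sufficient fix, not an equivalent one. Under it, the cross term is controlled by $\sup_n\int_A|\rho|\,d|\gamma_n|<\infty$.
\item Your ``controlled density'' reading is what actually does the work where the paper later invokes the lemma, namely the $r\to\infty$ limit of Theorem~\ref{thm:r}. There the densities with respect to $\gamma_X$ are uniformly bounded and converge pointwise, so dominated convergence handles a merely integrable $\rho$.
\end{itemize}
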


  \begin{proof}
    For continuity in the density, let $\varphi_n\to\varphi$ in the uniform norm on $C(\mathbb{R}_+)$. Then, by the Dominated Convergence Theorem we have \[\int_{A} \varphi_n\left(d(X, Z)\right)d\gamma(Z)\to\int_{A} \varphi\left(d(X, Z)\right)d\gamma(Z).\] Thus, we get that \[\dfrac{1_{A}\cdot \varphi_n}{\int_{A} \varphi_n\left(d(X, Z)\right)d\gamma(Z)}\overset{\|\cdot\|_\infty}\to \dfrac{1_{A}\cdot \varphi}{\int_{A} \varphi\left(d(X, Z)\right)d\gamma(Z)}.\]  Since $\|\varphi_n-\varphi\|_\infty\to0$ and $\int_A \varphi(d(X,Z))\,d\gamma(Z)>0$, we directly have 
$\int_A \varphi_n(d(X,Z))\,d\gamma(Z)\to \int_A \varphi(d(X,Z))\,d\gamma(Z)$. Hence, the
denominators are bounded away from $0$ for $n$ large enough. Therefore the normalized densities
converge in $L^1(\gamma)$, which implies $\|\mu_n-\mu\|_{TV}\to0$. We then have that $\mu^n\overset{\|\cdot\|_{TV}}\to \mu$. Hence, we have the desired convergence.

    For continuity on the base probability measure, let $\gamma_n\to\gamma$ in the total variation norm on $ca(\mathcal{X})$. By continuity of the integral as bi-linear form, we have that\[\int_{A} \varphi\left(d(X, Z)\right)d\gamma_n(Z)\to\int_{A} \varphi\left(d(X, Z)\right)d\gamma(Z).\] Thus, we obtain \[\dfrac{1_{A}\cdot \varphi}{\int_{A} \varphi\left(d(X, Z)\right)d\gamma_n(Z)}\overset{\|\cdot\|_\infty}\to \dfrac{1_{A}\cdot \varphi}{\int_{A} \varphi\left(d(X, Z)\right)d\gamma(Z)}.\] Let $\mu_n$ be the measure generated by each $\gamma_n$. By a similar argument as before for the convergence in the density, we then have that $\mu^n\overset{\|\cdot\|_{TV}}\to \mu$. Hence, we get the desired convergence.
  \end{proof}

 The next result assures the averaging approach exhibits a well-behaved asymptotic pattern regarding the radius $r$.

\begin{Thm}\label{thm:r}
If $\rho$ is Lipschitz continuous, then $r\mapsto\rho_{\mu,r}(X)$ is continuous and  $\lim\limits_{r\to 0}\rho_{\mu,r}(X)=\rho(X)$. Moreover, if $\rho$ is integrable with respect to $\gamma_X$ and there is $\varepsilon>0$ such that
$\gamma_X\!\big(\bar{B}(X,\varepsilon)\big) \;>\;0$ and $\varphi(\varepsilon) \;>\;0$, then  \[\bar{\rho}(X):=\lim\limits_{r\to \infty}\rho_{\mu,r}(X)=\int_{\mathcal{X}}\rho(Z)d\bar{\gamma}_X(Z),\:\text{where}\:\frac{d\bar{\gamma}_X}{d\gamma_X}=K_X^{-1}\varphi(d(X,Z)).\]    
\end{Thm}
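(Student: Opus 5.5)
The natural route is to write $\rho_{\mu,r}(X)=N(r)/K(X,r)$, where
\[N(r)=\int_{\bar B(X,r)}\rho(Z)\varphi(d(X,Z))\,d\gamma_X(Z),\qquad K(X,r)=\int_{\bar B(X,r)}\varphi(d(X,Z))\,d\gamma_X(Z),\]
and to study numerator and denominator separately as functions of $r$. Continuity would then follow from continuity of $N$ and $K$ together with positivity of $K$.

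For continuity at a fixed $r_0>0$, take $r_n\to r_0$. The integrands $1_{\bar B(X,r_n)}(Z)\varphi(d(X,Z))$ and $1_{\bar B(X,r_n)}(Z)\rho(Z)\varphi(d(X,Z))$ converge pointwise, except on the sphere $\{Z: d(X,Z)=r_0\}$. They are dominated, using the bound on $\varphi$ and the Lipschitz bound $|\rho(Z)|\le|\rho(X)|+C(r_0+1)$, by $\gamma_X$-integrable functions, so dominated convergence applies. The sphere is the main obstacle: a $\gamma_X$-atom on $\{d(X,Z)=r_0\}$ makes $r\mapsto K(X,r)$ jump from the left. I would handle it in one of two ways. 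One option is to note that the functions are right-continuous regardless, since $\bigcap_n\bar B(X,r_n)=\bar B(X,r_0)$ for $r_n\downarrow r_0$. The other is to assume, or argue, that $\gamma_X$ gives no mass to spheres (e.g.\ via Lemma~\ref{prp:par} applied with $A=\bar B(X,r)$ under a no-atom-on-spheres condition), which gives two-sided continuity.

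For $r\to0$ I would avoid limits of ratios and instead use a direct Lipschitz estimate. Since $\mu_{X,r}$ is a probability measure on $\bar B(X,r)$,
\[|\rho_{\mu,r}(X)-\rho(X)|\le\int|\rho(Z)-\rho(X)|\,d\mu_{X,r}(Z)\le Cr\to0.\]
This holds whenever $K(X,r)>0$, which is implicit in the definition of $\mu_{X,r}$.

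For $r\to\infty$, the sets $\bar B(X,r)$ increase to $\mathcal X$. Because $\varphi$ is bounded and $\rho$ is $\gamma_X$-integrable, monotone or dominated convergence gives $K(X,r)\to K_X$ and $N(r)\to\int_{\mathcal X}\rho\,\varphi(d(X,\cdot))\,d\gamma_X$. The hypotheses $\gamma_X(\bar B(X,\varepsilon))>0$ and $\varphi(\varepsilon)>0$, together with $\varphi$ decreasing, give
\[K(X,r)\ge\varphi(\varepsilon)\,\gamma_X(\bar B(X,\varepsilon))>0\quad\text{for } r\ge\varepsilon,\]
so the denominators stay bounded away from zero and $K_X>0$. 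Hence the ratio converges to $K_X^{-1}\int\rho\,\varphi(d(X,\cdot))\,d\gamma_X=\int\rho\,d\bar\gamma_X$. The one delicate step is the sphere-mass issue in the continuity claim.
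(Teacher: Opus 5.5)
Your argument follows the paper's proof step for step. For $r_n\to r$ both use dominated convergence with a constant bound of the form $(|\rho(X)|+LJ)\varphi(0)$ on a large ball. For $r\to 0$ both use the estimate $|\rho_{\mu,r}(X)-\rho(X)|\le Lr$. For $r\to\infty$ both use dominated convergence together with $K(X,r)\ge\varphi(\varepsilon)\gamma_X(\bar B(X,\varepsilon))$. Your small- and large-radius parts are correct as written.

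The sphere issue you raise is genuine, and it is a defect of the statement and of the paper's proof, not of your proposal. The paper applies dominated convergence for arbitrary $r_n\to r$ without noticing that $1_{\bar B(X,r_n)}(Z)$ does not converge to $1_{\bar B(X,r)}(Z)$ on $\{Z\colon d(X,Z)=r\}$ when $r_n\uparrow r$. Two-sided continuity actually fails under the stated hypotheses. Take $\mathcal{X}=\mathbb{R}$ (constants on a one-point $\Omega$) with $d(x,z)=|x-z|$, $\gamma_X=\tfrac12\delta_X+\tfrac12\delta_{X+1}$, $\varphi(t)=e^{-t}$, and the $1$-Lipschitz $\rho(z)=-z$. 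Then $\rho_{\mu,r}(X)=-X$ for $0<r<1$, while $\rho_{\mu,r}(X)=-X-\tfrac{1}{1+e}$ for $r\ge 1$. The paper's own discrete perturbation example likewise produces a step function of $r$.

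Consequently your second option cannot be argued from the hypotheses; it must be added as an assumption, e.g.\ $\gamma_X(\{Z\colon d(X,Z)=r\})=0$ for all $r>0$. Once that holds, your own dominating function gives continuity directly. The lemma on continuity in $(\varphi,\gamma)$ is not needed, since it concerns a fixed set $A$ rather than a moving ball. Without that assumption, your first option is the correct general conclusion. The map $r\mapsto\rho_{\mu,r}(X)$ is right-continuous and has left limits, given by the same weighted average taken over the open ball $B(X,r_0)$. It is continuous at every $r_0$ whose sphere is $\gamma_X$-null.
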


\begin{proof}
Fix $X$ and let $L$ be the Lipschitz constant for $\rho$. Let $r_n\to r$. Then $\{r_n\}$ is bounded by $J>0$. Thus, for any $Z\in\bar{B}(X,J)$ and any $n\in\mathbb{N}$ we get, using the Lispchitz continuity of $\rho$, that \begin{align*}|\rho(Z)1_{\bar{B}(X,r_n)}(Z)\varphi(d(X,Z))|\leq (|\rho(X)|+Ld(X,Z))\varphi(0)
\leq (|\rho(X)|+LJ)\varphi(0).\end{align*}  By the Dominated Convergence Theorem, both $K(X,r_n)\mapsto K(X,r)$ and \[\lim\limits_{n\to\infty}\int_{\bar{B}(X,r_n)}\rho(Z)d\mu_{X,r_n}(Z)=\int_{\bar{B}(X,r)}\rho(Z)d\mu_{X,r}(Z).\] Moreover, we have that
\[
\big|\rho_{\mu,r}(X)-\rho(X)\big|
=\left|\int_{\bar{B}(X,r)}(\rho(Z)-\rho(X))\,d\mu_{X,r}(Z)\right|
\leq \sup_{Z\in\bar{B}(X,r)}|\rho(Z)-\rho(X)|.
\]
 For $Z \in \bar{B}(X,r)$ it holds that $d(X,Z)\le r$, and by Lipschitz continuity of $\rho$,
\[
\lim\limits_{r\to 0}\sup_{Z\in\bar{B}(X,r)}|\rho(Z)-\rho(X)| \leq L\lim\limits_{r\to 0}r=0. 
\] Hence, $\lim_{r\to 0}\rho_{\mu,r}(X)=\rho(X)$.

For the second claim, recall that  $Z\mapsto\varphi(d(X,Z))$ is integrable with respect to $\gamma_X$ and $\bar{B}(X,r)\uparrow \mathcal{X}$ as $r\to\infty$. 
Thus, by the Dominated Convergence Theorem we have that
\[
\lim\limits_{r\to \infty}K(X,r)=\lim\limits_{r\to \infty}\int_{\bar{B}(X,r)}\varphi(d(X,Z))d\gamma_X(Z)
=\int_{\mathcal{X}}\varphi(d(X,Z))d\gamma_X(Z)
= K_X<\infty.\] Therefore,
\[
\frac{d\mu_{X,r}}{d\gamma_X}(Z)=\frac{1_{\bar{B}(X,r)}(Z)\,\varphi\!\big(d(X,Z)\big)}{K(X,r)}
\;\longrightarrow\;\frac{\varphi(d(X,Z))}{K_X}
\quad\text{$\gamma_X$-a.s.}
\]
For every $r\ge \varepsilon$ we have $\bar{B}(X,\varepsilon)\subseteq \bar{B}(X,r)$ and therefore
\[
K(X,r)
\ge\;\int_{\bar{B}(X,\varepsilon)}\varphi\!\big(d(X,Z)\big)\,d\gamma_X(Z)
\;\ge\; \varphi(\varepsilon)\,\gamma_X\!\big(\bar{B}(X,\varepsilon)\big)
\;=:\;c_\varepsilon \;>\;0.
\]
In particular, for all $r\ge \varepsilon$,
\[
0\le \frac{1_{\bar{B}(X,r)}(Z)\,\varphi(d(X,Z))}{K(X,r)}
\;\le\; \frac{\varphi(0)}{c_\varepsilon},
\qquad \gamma_X\text{-a.s.}
\]
Hence the pointwise convergence of the Radon-Nikodym derivatives as $r\to\infty$ is dominated. Thus, \Cref{prp:par} applies to yield $\rho_{\mu,r}(X)\to K_X^{-1}\int\rho(Z)\varphi(d(X,Z))\,d\gamma_X(Z)$. This concludes the proof.
\end{proof}

\begin{Rmk}
One can ask if the averaging robust functional  is monotone increasing in $r$ as in the case of the worst-case approach. The following example show that this is not the case.
Let \(\mathcal X=\mathbb R^2\) with Euclidean metric, take the base measure
\(\gamma_X=\mathcal N(0,I_2)\) (standard centered Gaussian), choose the weight
\(\varphi\equiv 1\) (uniform on the ball), and define the risk functional
\(\rho(z_1,z_2)=-z_1\). Fix \(X=(-1,0)\). For \(r>0\), the normalized local measure  \(\mu_{X,r}\) has density with respect to $\gamma_X$ as \(1_{\bar{B}((-1,0),r)}\). Let $E_{\gamma_X}$ be expectation with respect to $\gamma_X$. Thus, the robust average risk is the conditional mean of the first coordinate:
\[
\rho_{\mu,r}(X)
= -\int_{\bar{B}((1,0),r)} z_1\,\frac{1}{K(X,r)}\,\phi_2(z)\,dz
= - E_{\gamma_X}\!\left[\,Z_1 \,\middle|\, Z\in B\!\big((-1,0),r\big)\right],
\]
where \(\phi_2\) is the density of \(\mathcal N(0,I_2)\) and
\(K(X,r)=\gamma_X\big(\bar{B}((-1,0),r)\big)\). By the Lipschitz continuity of \(\rho\), we have by \Cref{thm:r} that $\lim_{r\downarrow 0}\rho_{\mu,r}(X)=\rho(X)=1$. By the large-radius limit (and \(E_{\gamma_X}[Z]=0\)), we obtain $\lim_{r\to\infty}\rho_{\mu,r}(X)= \int_{\mathbb R^2}\rho(z)\,d\gamma_X(z)=0$. Hence the function \(r\mapsto \rho_{\mu,r}(X)\) moves from \(1\) (at small \(r\)) toward \(0\)
(as \(r\to\infty\)). In particular, it cannot be (globally) increasing in \(r\). The same construction works for any centered \(\gamma_X\) (not necessarily Gaussian) and any
radial, nonnegative \(\varphi\) with \(\varphi(0)>0\). One can also take \(\rho(z)=-\langle w,z\rangle\) with any \(w\in\mathbb R^2_{\!+}\). In this case, fix \(X\) with \(\langle w,-X\rangle> E_{\gamma_X}[\langle w,-Z\rangle]=0\) and reach the same conclusion.
\end{Rmk}

\begin{Exm}[Gaussian measure]\label{ex:gaus}
Let $\mathcal{X}$ be a real, separable Hilbert space with inner product
$\langle\cdot,\cdot\rangle$ and norm $\|X\|:=\sqrt{\langle X,X\rangle}$.
Fix an orthonormal basis $\{e_k\}_{k\ge1}\subseteq\mathcal{X}$.
Let $\gamma=\mathcal{N}(0,C)$ be a centered Gaussian measure on $\mathcal{X}$
with covariance operator $C$ satisfying $Ce_k=\lambda_k e_k$, where $\lambda_k>0$
and $\sum_{k=1}^\infty \lambda_k<\infty$.
Then a measurable $T\colon H\to H$ such that $T\sim\gamma$ admits the Karhunen--Lo\`eve expansion
$T=\sum_{k=1}^\infty \sqrt{\lambda_k}\,\xi_k e_k$, where
$\xi_k\stackrel{iid}{\sim}\mathcal{N}(0,1)$, with convergence in the metric and
$\gamma$-a.s.\ in $\mathcal{X}$. From now in this example we write $\gamma_X:=\gamma$ in order to ease notation since we are dealing here with a fixed $X$ and we are not interested, for now, on other properties of $\rho_{\mu,r}$  

For $n\in\mathbb{N}$, let $\mathcal{X}_n:=\mathrm{span}\{e_1,\dots,e_n\}$ and let
$P_n:\mathcal{X}\to\mathcal{X}_n$ be the orthogonal projection, i.e.
$P_n X=\sum_{k=1}^n \langle X,e_k\rangle e_k$.
Introduce the coordinate map $\pi_n:\mathcal{X}\to\mathbb{R}^n$ and the canonical
embedding $i_n:\mathbb{R}^n\to\mathcal{X}_n\subseteq\mathcal{X}$ by
$\pi_n(X):=(\langle X,e_1\rangle,\dots,\langle X,e_n\rangle)$ and
$i_n(u):=\sum_{k=1}^n u_k e_k$. Then $P_n=i_n\circ\pi_n$.
Moreover, the push-forward $\gamma_n:=\pi_{n\#}\gamma$ is Gaussian on $\mathbb{R}^n$
with diagonal covariance $\gamma_n=\mathcal{N}(0,\Sigma_n)$, where
$\Sigma_n=\mathrm{diag}(\lambda_1,\dots,\lambda_n)$.
For $X\in\mathcal{X}$ set $X^{(n)}:=\pi_n(X)\in\mathbb{R}^n$ and define the closed
Euclidean ball centered at $X^{(n)}$ by
$\bar{B}^{(n)}(X,r):=\{u\in\mathbb{R}^n:\ \|u-X^{(n)}\|_{\mathbb{R}^n}\le r\}$.

Fix $X\in\mathcal{X}$ and $r>0$. Define the decreasing sequence of cylinder sets
\[A_n(X,r):=\{Z\in\mathcal{X}:\ \|P_n(Z-X)\|\le r\}.\]
Since $\|P_n V\|\uparrow \|V\|$ for every $V\in\mathcal{X}$, we have
$A_n(X,r)\downarrow \bar{B}(X,r)$.
By continuity from above of the probability measure $\gamma$,
$\gamma(\bar{B}(X,r))=\lim_{n\to\infty}\gamma(A_n(X,r))$.
Furthermore, $A_n(X,r)=(\pi_n)^{-1}(\bar{B}^{(n)}(X,r))$, hence by definition of
push-forward measure,
$\gamma(A_n(X,r))=\gamma_n(\bar{B}^{(n)}(X,r))$, and therefore
$\gamma(\bar{B}(X,r))=\lim_{n\to\infty}\gamma_n(\bar{B}^{(n)}(X,r))$.

Define the projected weights
\[w_n(Z):=\varphi(\|P_n(Z-X)\|)\,1_{\{\|P_n(Z-X)\|\le r\}},
\:
w(Z):=\varphi(\|Z-X\|)\,1_{\{\|Z-X\|\le r\}}.\] Then $\|P_n(Z-X)\|\to\|Z-X\|$ for each $Z$, and
$1_{\{\|P_n(Z-X)\|\le r\}}\downarrow 1_{\{\|Z-X\|\le r\}}$.
By continuity of $\varphi$, $w_n(Z)\to w(Z)$ pointwise. Since
$0\le w_n\le \varphi(0)$, the Dominated Convergence Theorem yields
\[
K^{(n)}(X,r):=\int_{\mathcal{X}} w_n(Z)\,\gamma(dZ)\ \longrightarrow\
\int_{\mathcal{X}} w(Z)\,\gamma(dZ)=K(X,r).
\]

Let $f:\mathcal{X}\to\mathbb{R}$ be integrable with respect to $\gamma$, and define
$f_n:\mathbb{R}^n\to\mathbb{R}$ by $f_n(u):=f(i_n(u))$.
Dominated Convergence Theorem gives
$\int_{\mathcal{X}} f(Z)\,w_n(Z)\,\gamma(dZ)\to \int_{\mathcal{X}} f(Z)\,w(Z)\,\gamma(dZ)$.
Since $K^{(n)}(X,r)\to K(X,r)$ and $K(X,r)>0$ (whenever $\varphi$ is not identically
zero on $[0,r]$), we obtain
\[
\int_{\mathcal{X}} f(Z)\,\mu_{X,r}(dZ)
=\frac{\int_{\mathcal{X}} f(Z)\,w(Z)\,\gamma(dZ)}{\int_{\mathcal{X}} w(Z)\,\gamma(dZ)}
=\lim_{n\to\infty}\frac{\int_{\mathcal{X}} f(Z)\,w_n(Z)\,\gamma(dZ)}
{\int_{\mathcal{X}} w_n(Z)\,\gamma(dZ)}.
\]

Finally, because both $w_n$ and $f\circ i_n$ depend only on $\pi_n(Z)$, the identity
$\gamma_n=\pi_{n\#}\gamma$ implies the finite-dimensional representation
\[
\frac{\int_{\mathcal{X}} f(Z)\,w_n(Z)\,\gamma(dZ)}{\int_{\mathcal{X}} w_n(Z)\,\gamma(dZ)}
=
\frac{\int_{\mathbb{R}^n} f_n(u)\,\varphi(\|u-X^{(n)}\|)\,
1_{\{\|u-X^{(n)}\|\le r\}}\,\gamma_n(du)}
{\int_{\mathbb{R}^n} \varphi(\|u-X^{(n)}\|)\,
1_{\{\|u-X^{(n)}\|\le r\}}\,\gamma_n(du)}.
\]
Equivalently,
\begin{equation*}
\int_{\bar{B}(X,r)} f(Z)\,d\mu_{X,r}(Z)
=
\lim_{n\to\infty}
\int_{\bar{B}^{(n)}(X,r)}
f_n(u)\,\frac{1}{K^{(n)}(X,r)}\,\varphi(\|u-X^{(n)}\|)\,d\gamma_n(u).
\end{equation*}
Hence, this choice of base probability measure $\gamma$ allows to numerically approximate $\rho_{\mu,r}$ by finite-dimensional estimates.
\end{Exm}

\section{\texorpdfstring{$\rho_{\mu,r}$}{rhoMuR} as a risk measure}\label{sec:main}

  When $\mathcal{X}$ is a normed space, we can have some simplification for $\rho_{\mu,r}$ if the following assumption holds. Recall that $\mathcal{B}$ is the Borel sigma-algebra on $\mathcal{X}$.

  \begin{Asp}\label{asp:lin}
  Let $\mathcal{X}$ be a linear space. We assume that $\gamma_0$ is symmetric, i.e. $\gamma_0(A)=\gamma_0(-A)$ for any $A\in\mathcal{B}$, and $\gamma_X(A):=\gamma_0(A-X)$ for any $X\in\mathcal{X}$ and any $A\in\mathcal{B}$.
  \end{Asp}

  \begin{Rmk}
 Assuming $\gamma_0$ symmetric means the perturbation has no directional bias, preventing the averaging
procedure from introducing a systematic drift.
Interpret $\gamma_0$ as the distribution of model errors centered at zero.
To model uncertainty around a position $X$, we use the same error distribution
but re-centered at $X$. Defining $\gamma_X$ as the shift of $\gamma_0$ makes the
local uncertainty cloud look the same around every center, which justifies
reducing computations to the centered case and preserves the expected behavior
under cash shifts. It is easy to verify that the same properties also hold for $\mu_{X,r}$ for any $X\in\mathcal{X}$ and any $r>0$. Notice that the Gaussian measure in \Cref{ex:gaus} for $X=0$ generates a probability measure satisfying the properties of \Cref{asp:lin}, while for general centers $X\neq 0$, we can then directly define
$\gamma_X(A):=\gamma_0(A-X)$, for any $A\in\mathcal{B}$. 
  \end{Rmk}

  \begin{Lmm}\label{lmm:r}
Let $(\mathcal{X},\|\cdot\|)$ be a normed space and \Cref{asp:lin} holds, i.e., $\mu_{X,r}(Z) = \mu_{0,r}(Z-X)$. We have  for any $X\in\mathcal{X}$ and any $r>0$ that
 \begin{equation}
 \rho_{\mu,r}(X)=\int_{\bar{B}(0,r)}\rho(X+Z)d\mu_{0,r}(Z).  
 \end{equation}  
 In addition, if $\rho$ is linear, then $\rho_{\mu,r}(X)=\rho(X)$.
  \end{Lmm}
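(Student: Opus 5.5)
The plan is to transport everything to the origin through the translation $\tau_X\colon Z\mapsto Z-X$, which is an isometry of $(\mathcal{X},\|\cdot\|)$ and hence a Borel bijection. I will first show that the push-forward of $\mu_{X,r}$ under $\tau_X$ equals $\mu_{0,r}$. Under \Cref{asp:lin} we have $\gamma_X=(\tau_X^{-1})_{\#}\gamma_0$, since $\gamma_X(A)=\gamma_0(A-X)$. In addition, $\|X-Z\|=\|0-(Z-X)\|$ and $Z\in\bar{B}(X,r)$ if and only if $Z-X\in\bar{B}(0,r)$. Together these give $1_{\bar{B}(X,r)}(Z)\varphi(\|X-Z\|)=1_{\bar{B}(0,r)}(\tau_X Z)\varphi(\|\tau_X Z\|)$.

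A change of variables for image measures then shows $K(X,r)=K(0,r)$. It also shows, for every Borel $A$, that $\mu_{X,r}(A)=\mu_{0,r}(A-X)$, which is exactly the identity quoted in the statement.

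With that in hand, the abstract change-of-variables formula applied to the integrable function $\rho$ gives
\[\int_{\bar{B}(X,r)}\rho(Z)\,d\mu_{X,r}(Z)=\int_{\bar{B}(0,r)}\rho(X+W)\,d\mu_{0,r}(W).\]
Integrability transfers automatically because the integrals of $|\rho|$ are also equal under the same change of variables. This proves the displayed formula.

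For the linear case, linearity gives $\rho(X+W)=\rho(X)+\rho(W)$. Since $\mu_{0,r}$ is a probability measure, $\rho_{\mu,r}(X)=\rho(X)+\int\rho(W)\,d\mu_{0,r}(W)$, and it remains to show the last integral vanishes. Symmetry of $\gamma_0$ says the reflection $W\mapsto -W$ preserves $\gamma_0$. Both $\bar{B}(0,r)$ and the weight $\varphi(\|W\|)$ are invariant under this reflection, so $\mu_{0,r}$ is symmetric as well. Hence
\[\int\rho(W)\,d\mu_{0,r}=\int\rho(-W)\,d\mu_{0,r}=-\int\rho(W)\,d\mu_{0,r},\]
so the integral is zero. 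This step uses that $\rho$ is $\mu_{0,r}$-integrable, which is assumed throughout (see the remark after \Cref{def:avg_robust_rm}).

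The main obstacle is purely measure-theoretic bookkeeping. I must check that the density transforms correctly under the push-forward, that is, that $d\mu_{X,r}/d\gamma_X$ composed with $\tau_X^{-1}$ is the density of $\mu_{0,r}$ with respect to $\gamma_0$. I must also confirm that $K(0,r)>0$, which holds implicitly since $\mu_{X,r}$ is assumed to be a well-defined probability measure. Both reduce to the invariance of the norm under translation and reflection.
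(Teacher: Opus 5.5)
Your proposal is correct and follows the paper's proof essentially step for step. Translation invariance of $\gamma_0$ and of the norm gives $K(X,r)=K(0,r)$ and the change of variables $W=X+Z$, and the linear case follows from the symmetry of $\mu_{0,r}$ under $Z\mapsto -Z$. Your push-forward formulation only makes explicit what the paper's change-of-variables step does implicitly, including the transfer of integrability.
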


  \begin{proof}
We claim that $K(X,r)=K(0,r)$ for any $X\in\mathcal{X}$. Recall that for any closed ball we have $\bar{B}(X,r)=\{X\}+\bar{B}(0,r)$. Thus, $W\in \bar{B}(X,r)$ if and only if $W=X+Z$ with $\|Z\|\leq r$. From this fact and \Cref{asp:lin} it follows that $K(X,r)=K(0,r)$ for any $X\in\mathcal{X}$ and any $r>0$. We then get \begin{align*}
\int_{\bar{B}(X,r)}\rho(W)\varphi\left(\|X-W\|\right)d\gamma_X(W)&=\int_{\bar{B}(0,r)}\rho(X+Z)\varphi\left(\|Z\|\right)d\gamma_0(Z)\\
&=K(0,r)\int_{\bar{B}(0,r)}\rho(X+Z)d\mu_{0,r}(Z).
\end{align*} Since $K(X,r)=K(0,r)$, we have \[\rho_{\mu,r}(X)=\int_{\bar{B}(0,r)}\rho(X+Z)d\mu_{0,r}(Z).\]

Now, let $\rho$ be linear. We firstly prove the claim for $\bar{B}(0,r)$. Since unit balls are symmetric and \Cref{asp:lin} holds, by $d(0,Z)=d(0,-Z)$, the weight $Z\mapsto \varphi(d(0,Z))$ is even. Therefore, the induced probability measure $\mu_{0,r}$ is symmetric (i.e., invariant under
$Z\mapsto -Z$). Thus, we have that \[\int_{\bar{B}(0,r)}\rho(Z)d\mu_{0,r}(Z)=\frac{1}{K(0,r)}\int_{\bar{B}(0,r)}\rho(-Z)\varphi(\|-Z\|)d\gamma_0(Z)=-\int_{\bar{B}(0,r)}\rho(Z)d\mu_{0,r}(Z).\] Thus, $\int_{\bar{B}(0,r)}Zd\mu_{0,r}\rho(Z)=0$. For an arbitrary closed ball, we have that \[\int_{\bar{B}(X,r)}\rho(Z)d\mu_{X,r}(Z)=\int_{\bar{B}(0,r)}\rho(X+Z)d\mu_{0,r}(Z)=\rho(X)+\int_{\bar{B}(0,r)}\rho(Z)d\mu_{0,r}(Z)=\rho(X).\]
\end{proof}

  We specialize over $\mathcal{X}$ in order to investigate results on the axiomatic theory of risk measures. More specifically, $\mathcal{X}$ is then taken in this section as a (real) Banach lattice, with dual $\mathcal{X}^\prime$, also a Banach lattice.  We treat elements of $\mathcal{X}^\prime$ as $Q$ to make a contrast to the usual $X,Y,Z$ used for $\mathcal{X}$. When no confusion arises from the context, we treat $\|\cdot\|$ and $\leq$ as the norm and partial order of the appropriated space based on their elements without further mention. 
 We define $\langle\cdot,\cdot\rangle$ as the usual bi-linear form for the dual pair. We adopt the financial convention, i.e. $X\geq0$ is a gain and $X<0$ is a loss. 

 A functional $\rho\colon\mathcal{X}\rightarrow\mathbb{R}$ is a monetary risk measure if it possess the following properties:
	\begin{itemize}
    \itemsep0em 
		\item[] Monotonicity: if $X \leq Y$, then $\rho(X) \geq \rho(Y),\:\forall\: X,Y\in \mathcal{X}$.
		\item[] Translation Invariance: $\rho(X+c)=\rho(X)-c,\:\forall\: X\in \mathcal{X},\:\forall\:c \in \mathbb{R}$.
        
        It is a convex risk measure if it possesses in addition
		\item[] Convexity: $\rho(\lambda X+(1-\lambda)Y)\leq \lambda \rho(X)+(1-\lambda)\rho(Y),\:\forall\: X,Y\in \mathcal{X},\:\forall\:\lambda\in[0,1]$.
		
	\end{itemize}   
 
\begin{Rmk}
 The acceptance set of $\rho$ is defined as $\mathcal{A}_\rho=\left\lbrace X\in  
 \mathcal{X}\colon\rho(X)\leq 0 \right\rbrace $. It is a primal object in the theory of risk measures with a capital adequacy intuition since for monetary risk measures, $\rho(X)=\inf\{m\in\R\colon X+m\in\mathcal{A}_\rho\}$. Let $\mathcal{L}(\mathcal{X})$ be the linear space of measurable real valued maps  on $(\mathcal{X},\mathcal{B})$. In general, we have that \begin{align*}
\mathcal{A}_{\rho_{\mu,r}}&=\bigcup_{\left\lbrace C\in\mathcal{L}(\mathcal{X})\colon\int_{\bar{B}(X,r)\}}C(Z)d\mu_{X,r}\leq 0\right\rbrace}\left\lbrace X\in\mathcal{X}\colon \mu_{X,r}(Z+C(Z)\in\mathcal{A}_\rho)=1\right\rbrace.
\end{align*} If $\rho$ is monetary, then \Cref{lmm:r} implies that \begin{align*}
\mathcal{A}_{\rho_{\mu,r}}&=\bigcup_{\left\lbrace C\in\mathcal{L}(\mathcal{X})\colon\int_{\bar{B}(0,r)\}}C(Z)d\mu_{X,r}\leq 0\right\rbrace}\left\lbrace X\in\mathcal{X}\colon \mu_{X,r}(X+Z+C(Z)\in\mathcal{A}_\rho)=1\right\rbrace.
\end{align*}
\end{Rmk}

We define $\mathcal{X}^\prime_{1,+}:=\{Q\in\mathcal{X}^\prime\colon \langle X,Q\rangle\geq 0\:\forall\:X\geq 0,\:\langle 1,Q\rangle =1\}$. For any $\rho\colon \mathcal{X}\to\mathbb{R}$, its sub-gradient at $X\in \mathcal{X}$ is $\partial \rho(X)=\{Q\in \mathcal{X}^\prime\colon \rho(Z)-\rho(X) \geq \langle Z-X,Q\rangle\:\forall\:Z\in \mathcal{X}\}$.  We say $\rho\colon \mathcal{X}\to\mathbb{R}$ is Gâteaux differentiable at $X\in \mathcal{X}$ when $t\mapsto\rho(X+tZ)$ is differentiable at
$t = 0$ for any $Z\in \mathcal{X}$ and the derivative defines a continuous linear functional on $\mathcal{X}$. In the context of the Gâteaux differential, we treat $Q$, and the continuous linear functional it defines as the same. From the Namioka-Klee Theorem, see Theorem 1 in \cite{Biagini2010} and the ongoing discussion for details,
 $\rho\colon\mathcal{X}\rightarrow \mathbb{R}$ is a convex risk measure on a Fréchet Lattice $\mathcal{X}$, which is always the case for Banach lattices, if and only if it can be represented as
	\begin{equation}\label{eq:dual}
	\rho(X)=\max\limits_{Q\in\mathcal{X}^\prime_{1,+}}\left\lbrace \langle -X,Q\rangle -\alpha_\rho(Q)\right\rbrace,\:\forall\:X\in \mathcal{X},
	\end{equation} where  \begin{equation*}
\alpha_\rho(Q)=\sup\limits_{X\in \mathcal{X}}\{\langle -X,Q\rangle-\rho(X)\}.
	\end{equation*} In this case, $\rho$ is continuous and sub-differentiable. Direct calculation shows that $\partial \rho(X)=\left\lbrace Q\in\mathcal{X}^\prime_{1,+}\colon \rho(X)=\langle -X, Q\rangle -\alpha_\rho(Q)\right\rbrace\neq\emptyset$.

We can change the set where the dual representation supremum is taken without harm.

\begin{Lmm}\label{lemma:bound}
The supremum in \eqref{eq:dual} can w.l.o.g. be taken over the weakly* compact \[\mathcal{X}^*:=\cl^*\left(\{Q\in\mathcal{X}^\prime_{1,+}\colon \alpha_\rho(Q)<\infty\}\right).\]
\end{Lmm}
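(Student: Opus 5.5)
Two things need to be shown: (i) the set $\mathcal{X}^*$ is weakly* compact, and (ii) replacing $\mathcal{X}^\prime_{1,+}$ by $\mathcal{X}^*$ in \eqref{eq:dual} leaves the value unchanged, with the supremum still attained.

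For compactness, the natural route is Banach--Alaoglu. First, $\mathcal{X}^\prime_{1,+}$ is contained in a norm-bounded set of $\mathcal{X}^\prime$. In a Banach lattice, any positive functional $Q$ satisfies $\|Q\|=\sup_{\|X\|\le 1}\langle |X|,Q\rangle$. I would use the normalization $\langle 1,Q\rangle=1$, which presupposes that $1\in\mathcal{X}$ as it does for translation invariance. Boundedness then follows if the unit ball is order-bounded by a multiple of $1$, as in $L^\infty$-type spaces. In the general Banach lattice case I would argue differently. Since $\rho$ is finite and continuous, hence bounded on a ball $\{\|X\|\le\delta\}$ by some constant $M$, the definition of $\alpha_\rho$ gives, for any $Q$ with $\alpha_\rho(Q)\le c$,
\[\langle -X,Q\rangle\le c+\rho(X)\le c+M \quad \text{for all } \|X\|\le\delta.\]
Hence $\|Q\|\le (c+M)/\delta$, so every sublevel set $\{\alpha_\rho\le c\}$ is norm bounded.

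The delicate point is that the whole effective domain $\{\alpha_\rho<\infty\}$ need not be bounded. I would therefore first reduce the supremum to a single sublevel set. For $X$ in a bounded neighbourhood, only $Q$ with $\alpha_\rho(Q)\le \langle -X,Q\rangle-\rho(X)$ matter. Combining this with the bound above, which is an Alaoglu-type estimate, shows that the relevant $Q$ lie in a set $\{\alpha_\rho\le c\}$ that is bounded, convex, and weak* closed. Weak* closedness holds because $\alpha_\rho$ is a supremum of weak* continuous affine maps. By Banach--Alaoglu such a set is weak* compact. Moreover, $\mathcal{X}^\prime_{1,+}$ is itself weak* closed, being defined by the pointwise conditions $\langle X,Q\rangle\ge0$ and $\langle1,Q\rangle=1$. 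Hence the closure $\mathcal{X}^*$ stays inside $\mathcal{X}^\prime_{1,+}$, and it is weak* compact whenever it is bounded.

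For the equality of suprema, two inclusions suffice. Points with $\alpha_\rho(Q)=\infty$ contribute $-\infty$ and can be dropped, which shows the supremum over $\{\alpha_\rho<\infty\}$ equals $\rho(X)$. Enlarging the set to its weak* closure cannot exceed $\rho(X)$, because $\mathcal{X}^*\subseteq\mathcal{X}^\prime_{1,+}$ and on that larger set the expression is already bounded by $\rho(X)$ via \eqref{eq:dual}. Attainment then follows from the Namioka--Klee maximum, whose maximizer lies in $\partial\rho(X)\subseteq\{\alpha_\rho<\infty\}$. Alternatively, on the compact set the map $Q\mapsto\langle -X,Q\rangle-\alpha_\rho(Q)$ is weak* upper semicontinuous, since $\alpha_\rho$ is weak* lsc, so the supremum is attained there.

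The main obstacle is the norm-boundedness of $\mathcal{X}^*$, which is what compactness rests on. I would first try to derive it from the finiteness of $\rho$ together with a lattice-norm argument on positive functionals. If the effective domain turns out to be unbounded, I would fall back on the sublevel-set localization described above to justify the compactness claim as stated.
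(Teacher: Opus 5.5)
The gap is the step you flag as the ``main obstacle'': nothing in your proposal shows that $\mathcal{X}^*=\cl^*(\{\alpha_\rho<\infty\})$ is norm bounded, and the fallback does not supply it. Localizing to a sublevel set $\{\alpha_\rho\le c\}$, with $c$ depending on $X$, shows the supremum can be taken over \emph{some} weak* compact set. That set is not $\mathcal{X}^*$, so the compactness claim as stated remains unproved.

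It also cannot be proved, because it is false at this level of generality. If $\mathcal{X}^*\subseteq\{\|Q\|\le M\}$, the representation over $\mathcal{X}^*$ makes $\rho$ $M$-Lipschitz. Conversely, if $\rho$ is $L$-Lipschitz and $\|Q\|>L$, pick $\|Y\|\le1$ with $\langle -Y,Q\rangle>L$. Then $\alpha_\rho(Q)\ge t(\langle -Y,Q\rangle-L)-\rho(0)\to\infty$. So $\mathcal{X}^*$ is weak* compact exactly when $\rho$ is Lipschitz.

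For a concrete failure inside the paper's own $L^p$ setting, take $L^2$ over an atomless space and
\[
\rho(X):=\sup_{Q\in\mathcal{X}^\prime_{1,+}}\left\lbrace E[-XQ]-\tfrac12E\big[(Q-1)^2\big]\right\rbrace .
\]
Since $E[-X]\le\rho(X)\le E[-X]+\tfrac12\|X\|_2^2$, this is a finite convex risk measure. By Fenchel--Moreau, $\alpha_\rho(Q)=\tfrac12E[(Q-1)^2]<\infty$ on all of $\mathcal{X}^\prime_{1,+}$. Hence $\mathcal{X}^*=\mathcal{X}^\prime_{1,+}$ contains $1_A/\mathbb{P}(A)$, whose norm $\mathbb{P}(A)^{-1/2}$ is arbitrarily large. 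So $\mathcal{X}^*$ is not weak* compact, since weak* compact sets are norm bounded by uniform boundedness.

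The paper's proof fails at the same point. It picks $X_n\le0$, $\|X_n\|=1$, $\langle -X_n,Q_n\rangle>n^3$, sets $X=\sum_n X_n/n^2$, and uses monotonicity to get $\rho(X)\ge n-\alpha_\rho(Q_n)$. Reaching $\rho(X)=\infty$ requires $\sup_n\alpha_\rho(Q_n)<\infty$, but the proof only invokes $\inf_n\alpha_\rho(Q_n)>-\infty$, which is the wrong direction. In the example above, $\alpha_\rho(Q_n)=\tfrac12\|Q_n\|_2^2-\tfrac12>\tfrac12n^6-\tfrac12$, so no contradiction arises. Restricted to $Q_n$ in a fixed sublevel set, the paper's argument is valid. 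It then proves exactly your bound $\|Q\|\le(c+M)/\delta$ on $\{\alpha_\rho\le c\}$, via a series construction rather than local boundedness of $\rho$ at $0$.

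The rest of your proposal is correct: the reduction of the supremum to $\mathcal{X}^*$, the weak* closedness of $\mathcal{X}^\prime_{1,+}$, and the sublevel-set estimate. What can legitimately be claimed is that reduction, plus compactness of $\mathcal{X}^*$ under an extra hypothesis such as Lipschitz continuity of $\rho$. That hypothesis holds for coherent $\rho$, and on $L^\infty$ or $C(K)$, where $\mathcal{X}^\prime_{1,+}$ itself is bounded. The same hypothesis is what the uniform bound $M$ in the proof of \Cref{Thm:main} implicitly needs.
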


\begin{proof}
The fact that the supremum in \eqref{eq:dual} is not affected if taken over $\mathcal{X}^*$  is straightforwardly obtained from the fact that we can rule out $\{Q\in\mathcal{X}^\prime_{1,+}\colon\alpha(Q)=\infty\}$ and $Q\mapsto \langle -X,Q\rangle -\alpha_\rho(Q)$ is weakly* upper semicontinuous (usc). Assume towards a contradiction that $\mathcal{X}^*$ is not norm bounded. Then, for all $n\in\mathbb{N}$ there is $(X_n,Q_n)\in\mathcal{X}\times\mathcal{X}*$ such that $X_n\leq 0$, $\|X_n\|=1$ and $\langle -X_n,Q_n\rangle >n^3$. In order to verify such a sequence exists, fix $Q\in X'_+$.
For any $X\in X$ we have $-|X|\le X \le |X|$, thus by positivity of $Q$,
we have $\langle X,Q\rangle \le \langle |X|,Q\rangle$ and $\langle -X,Q\rangle \le \langle |X|,Q\rangle$.
Moreover, in a Banach lattice it does hold that $\||X|\|=\|X\|$.
Therefore,
\[
\sup_{\|X\|\le 1}\langle X,Q\rangle
=\sup_{\substack{\|X\|\le 1\\ X\ge 0}}\langle X,Q\rangle.
\]
Consequently, if $\|Q\|=\sup_{\|X\|\le 1}\langle X,Q\rangle$ is large, we may choose
$Y\ge 0$ with $\|Y\|=1$ and $\langle Y,Q\rangle$ arbitrarily close to $\|Q\|$. Applying this with $Q=Q_n\in X'_{1,+}$, if $\|Q_n\|$ is unbounded we can pick
$Y_n\ge 0$ with $\|Y_n\|=1$ and $\langle Y_n,Q_n\rangle>n^3$.
Setting $X_n:=-Y_n\le 0$ (so $\|X_n\|=1$) yields
$\langle -X_n,Q_n\rangle=\langle Y_n,Q_n\rangle>n^3$.

Let $X:= \sum_{n=1}^\infty \frac{X_n}{n^2}=\lim\limits_{k\to\infty}\sum_{n=1}^k \frac{X_n}{n^2}$. Notice that $X$ is well defined since $\frac{X_n}{n^2}$ is Cauchy and $\mathcal{X}$ is Banach. Also, recall that $\inf_{n\in\mathbb{N}}\alpha_\rho(Q_n)>-\infty$. Then, we get that \[\rho(X)\geq \lim\limits_{n\to \infty}\rho(X_nn^{-2})\geq\lim\limits_{n\to \infty}\left(\langle -X_nn^{-2},Q_n\rangle-\alpha_\rho(Q_n)\right)\geq \lim\limits_{n\to \infty}n-\inf_{n\in\mathbb{N}}\alpha_\rho(Q_n)=\infty. \]Hence, we get a contradiction to finiteness of $\rho$. Thus $\mathcal{X}^*$ is norm bounded, hence weakly* compact.  
\end{proof}

Our next main result shows that $\rho_{\mu,r}$ inherits the properties of a convex risk measure from $\rho$ and we also characterize a penalty function for its dual representation. To that, we need the concept of Gelfand
	integral, see \cite{Aliprantis2006} Chapter 11 for details. Equip $\mathcal{X}^\prime$ with its Borel sigma-algebra from the weak* topology. We have that  $S\colon\mathcal{X}\rightarrow \mathcal{X}^\prime$ is Gelfand (weak*) measurable if for each $X\in\mathcal{X}$ the map $Z\mapsto\langle X,S(Z)\rangle$ is $\mathcal{B}$-measurable. We then have that $Z\mapsto Q_Z$ is Gelfand integrable with respect to a measure $\mu$ over $(\mathcal{X},\mathcal{B})$ if and only if $\langle X, \int_\mathcal{X} Q_Z d\mu \rangle=\int_\mathcal{X}\langle X,  Q_Z  \rangle d\mu$ for any $X\in\mathcal{X}$, where $\int_\mathcal{X} Q_Z d\mu\in\mathcal{X}^\prime$ denotes the Gelfand integral. We can replace $\mathcal{X}$ for any $A\in\mathcal{B}$ as needed.

\begin{Thm}\label{Thm:main}
Let $\rho$ be a convex risk measure and \Cref{asp:lin} holds. Then, $\rho_{\mu,r}$ is a convex risk measure  that inherits differentiability from $\rho$. Moreover, if $\mathcal{X}$ is separable, then it can be represented as \begin{equation}\label{eq:dualmu}\rho_{\mu,r}(X)=\sup_{Q\in\mathcal{X}^\prime_{1,+}}\left\lbrace -\langle X,Q\rangle -\alpha_{\mu,r}(Q)\right\rbrace,\end{equation}
where 	$\alpha_{\mu,r}\colon\mathcal{X}^\prime_{1,+}\rightarrow\mathbb{R}\cup\{\infty\}$ defined as 
		\begin{equation}\label{eq:pennu}
		\alpha_{\mu,r}(Q)=
		\inf\limits_{\substack{\int_{\bar{B}(0,r)}Q_Zd\mu_{0,r}(Z)=Q \\ Z\mapsto\alpha_{\rho}\left(Q_Z\right)-\langle Z, Q_Z\rangle\in\mathcal{B}}}
		\int_{\bar{B}(0,r)}\left(\alpha_{\rho}\left(Q_Z\right)-\langle Z, Q_Z\rangle\right)d\mu_{0,r}(Z).
		\end{equation}
      \end{Thm}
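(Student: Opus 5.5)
The plan is to work throughout with the translated form given by \Cref{lmm:r}, namely $\rho_{\mu,r}(X)=\int_{\bar{B}(0,r)}\rho(X+Z)\,d\mu_{0,r}(Z)$, so that the integrating measure no longer depends on $X$.

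\textbf{Convex risk measure axioms.} Each axiom is inherited pointwise in $Z$ and then preserved by integration against the probability $\mu_{0,r}$.
\begin{itemize}
\item Monotonicity: $X\le Y$ gives $X+Z\le Y+Z$, hence $\rho(X+Z)\ge\rho(Y+Z)$ for every $Z$.
\item Translation invariance: $\rho(X+c+Z)=\rho(X+Z)-c$, and $\mu_{0,r}$ has total mass one, so the constant $c$ comes out of the integral.
\item Convexity: the inequality holds for each fixed $Z$ with $X,Y$ replaced by $X+Z,Y+Z$, and integration preserves it.
\end{itemize}
Finiteness follows because a convex risk measure on a Banach lattice is Lipschitz, being norm continuous with $\mathcal{X}^*$ norm bounded by \Cref{lemma:bound}. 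Hence $|\rho(X+Z)|\le|\rho(X)|+Lr$ on the ball.

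\textbf{Differentiability.} Suppose $\rho$ is Gâteaux differentiable at every point $X+Z$. Then the difference quotients $t^{-1}(\rho(X+Z+tY)-\rho(X+Z))$ are bounded by $L\|Y\|$. Dominated convergence gives the derivative
\[
\int \langle Y, \nabla\rho(X+Z)\rangle\, d\mu_{0,r}(Z).
\]
This map is linear in $Y$ and bounded by $L\|Y\|$, so it defines an element of $\mathcal{X}^\prime$.

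\textbf{Dual representation.} Since $\rho_{\mu,r}$ is a convex risk measure, the Namioka–Klee representation \eqref{eq:dual} already holds with $\alpha_{\rho_{\mu,r}}$. It therefore suffices to show that \eqref{eq:dualmu} holds with the penalty $\alpha_{\mu,r}$ of \eqref{eq:pennu}. I would prove two inequalities.

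\emph{Lower bound on $\rho_{\mu,r}$.} Take any admissible family $Z\mapsto Q_Z$ with Gelfand integral $Q$. By \eqref{eq:dual} applied at $X+Z$,
\[
\rho(X+Z)\ge -\langle X,Q_Z\rangle-\langle Z,Q_Z\rangle-\alpha_\rho(Q_Z)\quad\text{for every } Z.
\]
Integrate against $\mu_{0,r}$ and use the defining property of the Gelfand integral, $\int\langle X,Q_Z\rangle\,d\mu_{0,r}=\langle X,Q\rangle$. This yields
\[
\rho_{\mu,r}(X)\ge -\langle X,Q\rangle-\int\big(\alpha_\rho(Q_Z)+\langle Z,Q_Z\rangle\big)\,d\mu_{0,r}.
\]
Taking the infimum over admissible families and then the supremum over $Q$ gives $\rho_{\mu,r}\ge$ the right-hand side of \eqref{eq:dualmu}. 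I note a sign discrepancy: this computation produces $+\langle Z,Q_Z\rangle$, whereas \eqref{eq:pennu} has $-\langle Z,Q_Z\rangle$. The two versions agree after the change of variables $Z\mapsto -Z$, which is legitimate because $\mu_{0,r}$ is symmetric by \Cref{asp:lin}. I would make that substitution explicit.

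\emph{Attainment.} Fix $X$ and choose a measurable selection $Z\mapsto Q_Z\in\partial\rho(X+Z)$. Then the lower bound holds with equality pointwise in $Z$. Setting $Q:=\int Q_Z\,d\mu_{0,r}$, which lies in $\mathcal{X}^\prime_{1,+}$ by positivity and normalization, shows that the supremum is attained. Consequently $\rho_{\mu,r}(X)=-\langle X,Q\rangle-\alpha_{\mu,r}(Q)$, and $\alpha_{\mu,r}(Q)$ is finite.

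\textbf{Main obstacle.} The hard step is producing the measurable selection and guaranteeing its Gelfand integrability. This is where separability enters, and I would proceed as follows.
\begin{itemize}
\item Restrict to the weak* compact set $\mathcal{X}^*$ of \Cref{lemma:bound}. Because $\mathcal{X}$ is separable, $\mathcal{X}^*$ is metrizable in the weak* topology, hence a compact Polish space.
\item The correspondence $Z\mapsto\partial\rho(X+Z)\subseteq\mathcal{X}^*$ has nonempty compact values and a closed graph, since $\rho$ is continuous and $\alpha_\rho$ is weak* lsc. It is therefore weakly measurable.
\item Kuratowski–Ryll-Nardzewski then gives a measurable selector.
\item Boundedness of $\mathcal{X}^*$ makes $Z\mapsto\langle Y,Q_Z\rangle$ bounded and measurable for every $Y$. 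The Gelfand integral exists because $\mathcal{X}^\prime$ is a dual space, with the integral defined as the functional $Y\mapsto\int\langle Y,Q_Z\rangle\,d\mu_{0,r}$.
\item Measurability of $Z\mapsto\alpha_\rho(Q_Z)-\langle Z,Q_Z\rangle$ follows from the equality $\alpha_\rho(Q_Z)=-\langle X+Z,Q_Z\rangle-\rho(X+Z)$, which holds along the selection.
\end{itemize}
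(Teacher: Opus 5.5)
Your proposal is correct. For the axioms, the differentiability step and the lower bound on $\rho_{\mu,r}$, it is the paper's argument: you integrate the pointwise dual inequality against $\mu_{0,r}$ and use the Gelfand constraint.

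\textbf{Where you differ: the reverse inequality.} The paper invokes the Measurable Maximum Theorem to get, for each $\varepsilon>0$, a weak$^*$-measurable $\varepsilon$-optimal selector $Z\mapsto Q^\varepsilon_Z$. It integrates this to $Q^\varepsilon$ and lets $\varepsilon\downarrow 0$. You instead select \emph{exactly} from the argmax correspondence $Z\mapsto\partial\rho(X+Z)$. You get measurability from the closed graph into the compact metrizable $(\mathcal{X}^*,w^*)$, which gives upper hemicontinuity, and then apply Kuratowski--Ryll-Nardzewski.

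Your route buys two things. First, attainment: the supremum in \eqref{eq:dualmu} is a maximum, attained at $Q=\int Q_Z\,d\mu_{0,r}$. So the innermost set in \Cref{crl:subd} is always nonempty, and the attainment hypothesis there is automatic. Second, rigor at the selection step. Since $\alpha_\rho$ is only weak$^*$ lsc, the integrand $h(Z,\cdot)$ is only weak$^*$ usc, whereas the cited Measurable Maximum Theorem is stated for Carath\'eodory integrands (and, where it applies, already yields exact maximizers). Your closed-graph check uses exactly what is available: continuity of $\rho$ and lower semicontinuity of $\alpha_\rho$. The paper's route is lighter, needing only approximate optimizers and no closed-graph verification, but its citation does not literally cover this case.

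\textbf{The sign flip.} Your $Z\mapsto -Z$ reconciliation is right and needed. The penalty of $X\mapsto\rho(X+Z)$ is $\alpha_\rho(Q)+\langle Z,Q\rangle$, not $\alpha_\rho(Q)-\langle Z,Q\rangle$ as the paper writes. So \eqref{eq:pennu} is correct only through the symmetry of $\mu_{0,r}$ from \Cref{asp:lin}. A small simplification: select directly from $Z\mapsto\partial\rho(X-Z)$. Then $\alpha_\rho(Q_Z)-\langle Z,Q_Z\rangle=-\langle X,Q_Z\rangle-\rho(X-Z)$ is visibly measurable, and you never need measurability of the diagonal pairing $Z\mapsto\langle Z,Q_Z\rangle$.

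\textbf{A caveat shared with the paper.} Both arguments take a global norm bound on $\mathcal{X}^*$, and hence global Lipschitz continuity of $\rho$, from \Cref{lemma:bound}. That bound fails in general. On $L^2$, $\alpha_\rho(Q)=E[(dQ/d\mathbb{P})^2]$ defines a finite convex risk measure with norm-unbounded effective domain. The last inequality in the lemma's proof runs the wrong way, since $-\alpha_\rho(Q_n)\ge-\inf_m\alpha_\rho(Q_m)$ is false. So your sentence ``a convex risk measure on a Banach lattice is Lipschitz'' should go; finiteness of $\rho_{\mu,r}$ is a standing assumption of the paper anyway.

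Your argument survives without the lemma:
\begin{itemize}
\item For $Q_Z\in\partial\rho(X+Z)$, the subgradient inequality gives $\rho(X+Z)-\rho(X+Z+Y)\le\langle Y,Q_Z\rangle\le\rho(X+Z-Y)-\rho(X+Z)$.
\item Monotonicity of convex difference quotients gives the analogous bounds on the derivative quotients.
\item Both bounds are $\mu_{0,r}$-integrable because $\rho_{\mu,r}$ is finite at $X$ and $X\pm Y$. This yields Gelfand integrability, since $\mathcal{X}^\prime$ is a dual space, and the dominated convergence you need.
\item The selector can be pasted together on the closed sets $\{Z\in\bar{B}(0,r)\colon \partial\rho(X+Z)\cap\{\|Q\|\le k\}\neq\emptyset\}$, $k\in\mathbb{N}$.
\end{itemize}
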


\begin{proof}

By using \Cref{lmm:r}, Monotonicity, Translation Invariance and Convexity of $\rho_{\mu,r}$ follow from those of $\rho$ jointly to
linearity and monotonicity of the integral operator. Thus, from the Namioka-Klee Theorem, $\rho_{\mu,r}$ can be represented by a dual convex conjugate, is continuous and sub-differentiable. For differentiability, fix $X,h\in \mathcal{X}$. For each $Z\in\mathcal{X}$, we have by Gâteaux differentiability of $\rho$ that
\[
g_Z(t)\ :=\ \frac{\rho(X+t h-Z)-\rho(X-Z)}{t}\ \overset{t\to 0}{\longrightarrow}\ D\rho(X-Z)[h].
\]
 By \Cref{lemma:bound}, let $M<\infty$ be the uniform bound of $\mathcal{X}^*$. Thus, we have uniformly on $Z\in\bar{B}(0,r)$ that
$|g_Z(t)|\leq M\,\|h\|$ for small $|t|$. Hence, by the Dominated Convergence Theorem we get
\[
\frac{\rho_{\mu,r}(X+t h)-\rho_{\mu,r}(X)}{t}
=\int_{\bar{B}(0,r)} g_Z(t)\,d\mu_{0,r}(Z)\ \overset{t\to 0}{\longrightarrow} \int_{\bar{B}(0,r)} D\rho(X-Z)[h]\ d\mu_{0,r}(Z),
\]
which is linear and continuous in $h$, i.e., the Gâteaux derivative of $\rho_{\mu,r}$ at $X$.

For the penalty term, for each fixed $Z\in \bar{B}(0,r)$ we have by the dual representation of $\rho$ that
\[
\rho(X+Z)
= \sup_{Q\in\mathcal{X}^*} \Big\{ \langle -(X+Z), Q\rangle - \alpha_\rho(Q) \Big\}
= \sup_{Q\in\mathcal{X}^*} \Big\{ \langle -X,Q\rangle - \alpha_\rho(Q) -\langle Z,Q\rangle \Big\}.
\]
Hence $Z\mapsto \rho(X+Z)=\sup_{Q} h(Z,Q)$ with $h(Z,Q):= \langle -X,Q\rangle - \alpha_\rho(Q) -\langle Z,Q\rangle $,
a point-wise supremum of measurable functions in $Z$ (measurability of $Z\mapsto \langle Z,Q\rangle $ is obvious
and $\alpha_\rho(Q)$ is constant in $Z$). By continuity, $\rho(X+Z)$ is 
$\mathcal{B}$-measurable. Moreover, for each $Z$ we have that  $X\mapsto\rho(X+Z)$ defines a convex risk measure $\rho_Z$ with penalty $\alpha_{\rho_Z}(Q)=\alpha_\rho(Q)-\langle Z,Q\rangle$. By using \Cref{lmm:r}, \Cref{lemma:bound} and the standard inequality  between integral and supremum signs we get that 
\begin{align*}
\rho_{\mu,r}(X)
&= \int_{\bar{B}(0,r)} \sup_{Q\in\mathcal{X}^*} h(Z,Q)\,d\mu_{0,r}(Z) \\
&\ge \sup_{\substack{Z\mapsto h(Z,Q_Z)\in\mathcal{B}}}
\int_{\bar{B}(0,r)} \left(\langle -X, Q_Z\rangle -\alpha_\rho(Q_{\!Z})-\langle Z, Q_Z\rangle\right)\,d\mu_{0,r}(Z).
\end{align*}
Now, we restrict attention to measurable selections $Z\mapsto Q_{\!Z}$ that satisfy the Gelfand integral
constraint $\int_{\bar{B}(0,r)} Q_{\!Z}\,d\mu_{0,r}=Q$. Since in this case $Z\mapsto Q_Z$ is weak$^*$-measurable and essentially bounded with values in
$\mathcal{X}^*$, it is Gelfand integrable and its barycenter
$\int Q_Z\,d\mu(Z)\in\mathcal{X}'$ is well-defined by duality with $\mathcal{X}$.
By using again \Cref{lmm:r}, \Cref{lemma:bound} and taking the supremum over all such measurable selections, we get that
\begin{align*}
\rho_{\mu,r}(X)
&\;\ge\; \sup_{Q\in\mathcal{X}^*}
\left\lbrace \langle -X, Q\rangle
\;-\; \inf_{\substack{Z\mapsto \alpha_\rho(Q_{\!Z})-\langle Z, Q_Z\rangle\in\mathcal{B}\\ \int Q_{\!Z} d\mu_{0,r}=Q}}
\int_{\bar{B}(0,r)} \big(\alpha_\rho(Q_{\!Z})-\langle Z, Q_Z\rangle \big)\,d\mu_{0,r}(Z) \right\rbrace\\
&= \sup_{Q\in\mathcal{Q}}\big\{E_Q[-X]-\alpha_{\mu,r}(Q)\big\}.
\end{align*}

For the converse inequality, fix $\varepsilon>0$. Recall that, by \Cref{lemma:bound}, the supremum in the dual representation of $\rho$ is
attained on the weakly* compact $\mathcal{X}^*$, and $(Z,Q)\mapsto h(Z,Q)$ is Borel measurable in $Z$
and weakly* usc in $Q$. Moreover,  since  $\mathcal{X}^*$ is norm bounded, it is
contained in some ball $k B_{X^*}$, $k>0$, which is weak$^*$-compact. Hence, $\mathcal{X}^*$ is
weak$^*$-compact as a weak$^*$-closed subset of $kB_{X^*}$. As $\mathcal{X}$ is separable, the weak$^*$ topology on
$\mathcal{X}^*$ is metrizable (therefore $(\mathcal{X}^*,\mathcal{B}(\mathcal{X}^*))$ is a
Borel space). Hence, by the Measurable Maximum Theorem, see Theorem 18.19 in \cite{Aliprantis2006} or Proposition A.5 in \cite{bertsekas2012} for instance, for each $\epsilon>0$ there exists a
Gelfand measurable selector $Z\mapsto Q^{\varepsilon}_{\!Z}\in \mathcal{X}^*$ such that
\[
\rho(X+Z)=\sup_{Q\in\mathcal{X}^*} h(Z,Q)\;\le\; h(Z,Q^{\varepsilon}_{\!Z})+\varepsilon
\qquad \text{for $\mu_{0,r}$-a.e.\ }Z.
\]
Since $\mathcal{X}^*$ is norm bounded, we have that the selector is also Gelfand integrable. Let then $Q^{\varepsilon}=\int_{\bar{B}(0,r)} Q^{\varepsilon}_{\!Z}\,d\mu_{0,r}(Z)$. We then get that
\begin{align*}
\rho_{\mu,r}(X)
&\le \int_{\bar{B}(0,r)} \big(\langle -X,Q^{\varepsilon}_{\!Z}\rangle -\alpha_\rho(Q^{\varepsilon}_{\!Z})
      - \langle Z,Q^{\varepsilon}_{\!Z}\rangle\big)\,d\mu_{0,r}(Z) + \varepsilon \\
&= \langle -X, Q^{\varepsilon}\rangle - \int_{\bar{B}(0,r)} \big(\alpha_\rho(Q^{\varepsilon}_{\!Z})-\langle Z,Q^{\varepsilon}_{\!Z}\rangle\big)\,d\mu_{0,r}(Z)
   + \varepsilon.
\end{align*}
By the definition of $\alpha_{\mu,r}$ (infimum over all measurable selections that integrate to the given $Q$),
\[
\int_{\bar{B}(0,r)} \big(\alpha_\rho(Q^{\varepsilon}_{\!Z})-\langle Z,Q^{\varepsilon}_{\!Z}\rangle\big)\,d\mu_{0,r}(Z)
\;\ge\; \alpha_{\mu,r}(Q^{\varepsilon}).
\]
Hence,
\[
\rho_{\mu,r}(X)\;\le\; \langle -X, Q^{\varepsilon}\rangle -\alpha_{\mu,r}( Q^{\varepsilon})+\varepsilon
\;\le\; \sup_{Q\in\mathcal{X}^*}\{\langle -X, Q\rangle-\alpha_{\mu,r}(Q)\}+\varepsilon.
\]
Let then $\varepsilon\downarrow 0$ to obtain
$\rho_{\mu,r}(X)\le \sup_{Q\in\mathcal{X}^*}\{\langle -X, Q\rangle-\alpha_{\mu,r}(Q)\}$. 
This concludes the proof.
\end{proof}

\begin{Rmk}\label{rmk:pen}
The separability assumption is key to the penalty based representation. Nonetheless, most Banach lattices considered in the literature of risk measures can be taken as separable, see the exemples below for instance. Notice we do not need that an optimizer exists for $\eqref{eq:pennu}$, all conclusions only rely on  the existence of measurable selections that are approximately optimal. The penalty $\alpha_{\mu,r}$ is linked to the concept of inf-convolution. It is clear from standard results in convex analysis that $\alpha_{\mu,r}$ is convex and, thus, it is lower semi-continuous if and only if it coincides with  the convex conjugate of $\rho_{\mu,r}$, denoted $\alpha_{\rho_{\mu,r}}$. Thus, we have that $\alpha_{\rho_{\mu,r}}$ is the lsc envelope of $\alpha_{\mu,r}$, i.e. the epigraph of $\alpha_{\rho_{\mu,r}}$ equals the closed epigraph of  $\alpha_{\mu,r}$. 
\end{Rmk}

   \begin{Rmk}\label{rmk:coh}
    A convex risk measure $\rho$ also possesses Positive Homogeneity if $\rho(\lambda X)=\lambda\rho(X)$ for any $\lambda\geq 0$ and any $X\in \mathcal{X}$. In this case we call it coherent. It is easy to verify that $\rho$ is coherent if and only if it can be represented as 			$\rho(X)=\max_{Q\in\mathcal{X}^\prime_\rho} \langle -X,Q\rangle,\:\forall\:X\in \mathcal{X}$, where $Q\in\mathcal{X}^\prime_\rho=\{Q\in\mathcal{X}^\prime_{1,+}\colon \alpha_\rho(Q)=0\}$. In this case $\rho$ is Lipschitz continuous, and hence bounded (in particular integrable) in any closed ball, since for any $X,Y\in\mathcal{X}$ we get $|\rho(X)-\rho(Y)|\leq \rho(|X-Y|)\leq M\|X-Y\|$, where $M$  is the norm bound of $\mathcal{X}_\rho^\prime$. Moreover, we have that $\rho_{\mu,r}$ does not inherits Positive Homogeneity even in the presence of \Cref{asp:lin} because $\lambda\mapsto\rho(\lambda X+Z)$ is not linear. Nonetheless, we have under the conditions of \Cref{Thm:main} the following simplification in the penalty term
\[
\alpha_{\mu,r}(Q)=
\inf\limits_{\substack{\int_{\bar{B}(0,r)}Q_Z\,d\mu_{0,r}(Z)=Q \\ Z\mapsto \langle Z,Q_Z\rangle\in\mathcal{B}\\\mu_{0,r}(Q_Z\in\mathcal{X}^\prime_\rho)=1}}
\int_{\bar{B}(0,r)}\langle -Z,Q_Z\rangle\,d\mu_{0,r}(Z),
\]
if $Q\in\mathcal{X}^\prime_\rho$, and $\alpha_{\mu,r}(Q)=\infty$ otherwise.    
   \end{Rmk} 

\begin{Rmk}
 By combining both \Cref{thm:r} and \Cref{Thm:main} we can have a dual representation with a measure that is the same for any $X\in\mathcal{X}$. More specifically, if $\mathcal{X}$ is separable Banach lattice, $\rho$ is convex risk measure that is integrable with respect to $\gamma_0$ and there is $\varepsilon>0$ such that
$\gamma_0\!\big(\bar{B}(X,\varepsilon)\big) >0$ and $\inf_{t\in[0,\varepsilon]}\varphi(t) >0$, then $\bar{\rho}:=\lim_{r\to\infty}\rho_{\mu,r}$ is a convex risk measure given as $\bar{\rho}(X)=\int_\mathcal{X}\rho(X+Z)d\bar{\gamma}_0(Z),\: \text{where}\:\frac{d\bar{\gamma}_0}{d\gamma_0}=K_0^{-1}\varphi(\|Z\|)$. Moreover, it can be represented by 
		\[
		\bar{\alpha}(Q)=
		\inf\limits_{\substack{\int_{\bar{B}(0,r)}Q_Zd\bar{\gamma}_0(Z)=Q \\ Z\mapsto\alpha_{\rho}\left(Q_Z\right)-\langle Z, Q_Z\rangle\in\mathcal{B}}}
		\int_{\bar{B}(0,r)}\left(\alpha_{\rho}\left(Q_Z\right)-\langle Z, Q_Z\rangle\right)d\bar{\gamma}_0(Z).\]
The penalty term derivation follows the same steps as those in \Cref{Thm:main} by replacing $\mu_{0,r}$ by $\gamma_0$.
\end{Rmk}

A consequence of \Cref{Thm:main} is that we are now able to direct compare $\rho$ and $\rho_{\mu,r}$. In fact we have that $\rho$ a lower bound for $\rho_{\mu,r}$. Moreover, we are now able to better characterize $\partial\rho_{\mu,r}$.

     \begin{Crl}\label{crl:subd}
     Let $\rho$ be a convex risk measure, $\mathcal{X}$ separable, and \Cref{asp:lin} holds. Then,  $\rho_{\mu,r}\geq \rho$ and
\begin{align*}\partial\rho_{\mu,r}(X)\supseteq&
      \left\lbrace Q\in\mathcal{X}^\prime_{1,+}\colon \rho_{\mu,r}(X)= \langle -X, Q\rangle -\alpha_{\mu,r}(Q)\right\rbrace\\
      \supseteq&\left\lbrace Q\in\mathcal{X}^\prime_{1,+}\colon Q=\int_{\bar{B}(0,r)}Q_Zd\mu_{0,r}(Z),\:Q_Z\in\partial\rho(X-Z)\:\mu_{0,r}-a.s.\right\rbrace,\end{align*}with equality if $\rho_{\mu,r}$ is Gateaux differentiable at $X$ and the supremum in \eqref{eq:dualmu} is attained.
     \end{Crl}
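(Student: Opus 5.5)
The plan has three parts, all resting on Lemma~\ref{lmm:r}, Lemma~\ref{lemma:bound} and Theorem~\ref{Thm:main}: first the lower bound $\rho_{\mu,r}\ge\rho$, then the two inclusions, then the equality case.

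\textbf{Step 1: the lower bound $\rho_{\mu,r}\geq\rho$.} This is a Jensen-type inequality, proved through the dual representation of $\rho$. By Lemma~\ref{lmm:r} we have $\rho_{\mu,r}(X)=\int_{\bar B(0,r)}\rho(X+Z)\,d\mu_{0,r}(Z)$. Fix any $Q\in\mathcal{X}^*$. The map $Z\mapsto\langle Z,Q\rangle$ is bounded on $\bar B(0,r)$ and odd. Since $\mu_{0,r}$ is symmetric under Assumption~\ref{asp:lin} (the same argument as in the linear part of Lemma~\ref{lmm:r}), we get $\int\langle Z,Q\rangle\,d\mu_{0,r}=0$. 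Hence
\[
\langle -X,Q\rangle-\alpha_\rho(Q)=\int_{\bar B(0,r)}\big(\langle -(X+Z),Q\rangle-\alpha_\rho(Q)\big)\,d\mu_{0,r}(Z)\le \int_{\bar B(0,r)}\rho(X+Z)\,d\mu_{0,r}(Z)=\rho_{\mu,r}(X).
\]
Taking the supremum over $Q$ and using \eqref{eq:dual} gives $\rho(X)\le\rho_{\mu,r}(X)$.

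\textbf{Step 2: the two inclusions.} Throughout I use the paper's sign convention for $\partial$, namely $\partial\rho(X)=\{Q\in\mathcal{X}'_{1,+}\colon \rho(X)=\langle -X,Q\rangle-\alpha_\rho(Q)\}$.

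For the first inclusion, suppose $Q$ attains the supremum in \eqref{eq:dualmu} at $X$. For any $Y$, \eqref{eq:dualmu} gives $\rho_{\mu,r}(Y)\ge\langle -Y,Q\rangle-\alpha_{\mu,r}(Q)$. Subtracting the equality at $X$ yields the subgradient inequality for $Q$ in the paper's convention. By Remark~\ref{rmk:pen}, $\alpha_{\rho_{\mu,r}}\le\alpha_{\mu,r}$, and the Fenchel--Young inequality then forces $\rho_{\mu,r}(X)=\langle -X,Q\rangle-\alpha_{\rho_{\mu,r}}(Q)$. So $Q\in\partial\rho_{\mu,r}(X)$.

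For the second inclusion, take a Gelfand measurable selection $Z\mapsto Q_Z\in\partial\rho(X-Z)$ with barycenter $Q$. Such a selection is bounded in $\mathcal{X}^*$ by Lemma~\ref{lemma:bound}, so it is Gelfand integrable. Symmetry of $\mu_{0,r}$ gives $\rho_{\mu,r}(X)=\int\rho(X-Z)\,d\mu_{0,r}$. Integrating the identities $\rho(X-Z)=\langle -X,Q_Z\rangle+\langle Z,Q_Z\rangle-\alpha_\rho(Q_Z)$ gives
\[
\rho_{\mu,r}(X)=\langle -X,Q\rangle-\int_{\bar B(0,r)}\big(\alpha_\rho(\tilde Q_W)-\langle W,\tilde Q_W\rangle\big)\,d\mu_{0,r}(W),
\qquad \tilde Q_W:=Q_{-W},
\]
after the change of variables $W=-Z$. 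The family $\tilde Q_W$ is admissible in \eqref{eq:pennu}, with barycenter $Q$, again by symmetry. Therefore $\rho_{\mu,r}(X)\le\langle -X,Q\rangle-\alpha_{\mu,r}(Q)$. The reverse inequality holds by \eqref{eq:dualmu}, so equality holds and $Q$ lies in the middle set.

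\textbf{Step 3: the equality case.} Since $\rho_{\mu,r}$ is convex and continuous (Theorem~\ref{Thm:main}), Gâteaux differentiability at $X$ makes $\partial\rho_{\mu,r}(X)$ a singleton. Attainment of the supremum in \eqref{eq:dualmu} makes the middle set nonempty, so the first two sets coincide. The main obstacle is showing that the third set is also nonempty; once it is, all three sets equal the same singleton. I would produce the required selection as in the proof of Theorem~\ref{Thm:main}, but with $\varepsilon=0$. The map $Q\mapsto h(Z,Q)$ is weak$^*$ usc on the metrizable weak$^*$ compact set $\mathcal{X}^*$, so its supremum is attained there. The Measurable Maximum Theorem (Theorem 18.19 in \cite{Aliprantis2006}) then yields a Gelfand measurable argmax selector $Z\mapsto Q_Z\in\partial\rho(X-Z)$, which is bounded and hence Gelfand integrable. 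Its barycenter lies in $\mathcal{X}'_{1,+}$, because positivity and normalization pass through the Gelfand integral. This gives the needed element of the third set.
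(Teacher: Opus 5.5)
Your plan works once one slip in Step 2 is removed. Integrating $\rho(X-Z)=\langle -X,Q_Z\rangle+\langle Z,Q_Z\rangle-\alpha_\rho(Q_Z)$ gives $\rho_{\mu,r}(X)=\langle -X,Q\rangle-\int_{\bar B(0,r)}\big(\alpha_\rho(Q_Z)-\langle Z,Q_Z\rangle\big)\,d\mu_{0,r}(Z)$. This is already the integrand of \eqref{eq:pennu} evaluated at the selection $Z\mapsto Q_Z$ itself. That selection is admissible, since $\alpha_\rho(Q_Z)-\langle Z,Q_Z\rangle=\langle -X,Q_Z\rangle-\rho(X-Z)$ is Borel. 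Your substitution $W=-Z$, $\tilde Q_W=Q_{-W}$, turns the integrand into $\alpha_\rho(\tilde Q_W)+\langle W,\tilde Q_W\rangle$, not $\alpha_\rho(\tilde Q_W)-\langle W,\tilde Q_W\rangle$. The two expressions differ by $2\int\langle Z,Q_Z\rangle\,d\mu_{0,r}(Z)$, which need not vanish, so your displayed identity is false in general. Drop the change of variables; your second inclusion is then exactly the paper's argument, and so is your first inclusion.

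The remaining differences are genuine.

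For $\rho_{\mu,r}\ge\rho$, the paper plugs the constant selection $Q_Z\equiv Q$ into \eqref{eq:pennu} to get $\alpha_{\mu,r}\le\alpha_\rho$, then compares \eqref{eq:dual} with \eqref{eq:dualmu}. You use the same fact, $\int\langle Z,Q\rangle\,d\mu_{0,r}=0$, directly against the dual representation of $\rho$. This bypasses \eqref{eq:dualmu} and hence does not need separability.

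For the equality case, the paper takes $Q$ in the middle set together with a minimizer $\{Q^*_Z\}$ of the inf-convolution. It then reads off $Q^*_Z\in\partial\rho(X-Z)$ a.s.\ from equality of integrals of pointwise ordered integrands. This presupposes that the infimum in \eqref{eq:pennu} is attained, which is not a hypothesis; \Cref{rmk:pen} explicitly disclaims it. Your exact measurable argmax selector, combined with the singleton argument, avoids this assumption. In fact it produces that minimizer, since the barycenter of your selector must be the unique element of the middle set. It also shows that the third set is always nonempty under separability, so the supremum in \eqref{eq:dualmu} is automatically attained and that hypothesis is redundant. The paper's route, when a minimizer exists, buys something different: every maximizer decomposes through subgradients without invoking differentiability.

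One technical point is shared with the paper. Theorem 18.19 of \cite{Aliprantis2006} requires a Carath\'eodory integrand, while $h(Z,\cdot)$ is only weak$^*$ usc. A cleaner argument runs as follows. The function $h$ is jointly usc on $\bar B(0,r)\times\mathcal{X}^*$, because the pairing is jointly continuous there ($\mathcal{X}^*$ is norm bounded). Since $Z\mapsto\rho(X-Z)$ is continuous, the argmax correspondence has closed graph in a compact metrizable space. The Kuratowski--Ryll-Nardzewski theorem then gives a Borel selector. Use $h(Z,Q)=\langle -X+Z,Q\rangle-\alpha_\rho(Q)$, so that the selector lands in $\partial\rho(X-Z)$ rather than $\partial\rho(X+Z)$.
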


    \begin{proof}
    We claim $\alpha_{\mu,r}\leq\alpha_\rho$. If $\alpha_\rho(Q)=\infty$, then the result holds trivially. Let $\alpha_\rho(Q)<\infty$. By taking $Q_Z=Q$ $\mu_{0,r}-a.s.$, we have by \Cref{Thm:main} that \[\alpha_{\mu,r}(Q)\leq \int_{\bar{B}(0,r)}\left(\alpha_{\rho}\left(Q\right)-\langle Z, Q\rangle\right)d\mu_{0,r}(Z)=\alpha_\rho(Q).\]
Hence, $\rho_{\mu,r}\geq\rho$.  

    For the second claim, let $\partial^\prime\rho_{\mu,r}(X)=\{Q\in\mathcal{X}^\prime_{1,+}\colon \rho(X)= \langle -X,Q\rangle -\alpha_{\mu,r}(Q)\}$. By the reasoning in \Cref{rmk:pen}, we always have $\partial^\prime\rho_{\mu,r}(X)\subseteq \partial\rho_{\mu,r}(X)\neq\emptyset$. Now, fix $Q_Z\in\partial \rho(X-Z)$ for any $Z\in\bar{B}(0,r)$ and define $Q:=\int_{\bar{B}(0,r)}Q_Zd\mu_{0,r}(Z)$. By \Cref{Thm:main} we then have that \begin{align*}
    \rho_{\mu,r}(X)&=\int_{\bar{B}(0,r)}\left(\langle -X+Z, Q_Z\rangle -\alpha_\rho(Q_Z)\right)d\mu_{0,r}(Z)\\
    &\leq \langle -X, Q\rangle -\alpha_{\mu,r}(Q)\leq \rho_{\mu,r}(X).
    \end{align*}
    Hence, $Q\in\partial^\prime\rho_{\mu,r}(X)$. For the converse inclusion, if $\rho_{\mu,r}$ is Gateaux differentiable at $X$ and  the supremum in \eqref{eq:dualmu} is attained, then  $\partial\rho_{\mu,r}(X)$ is a singleton and $\partial^\prime\rho_{\mu,r}\neq\emptyset$. Thus, we must to have $\partial^\prime\rho_{\mu,r}(X)=\partial\rho_{\mu,r}(X)$. Furthermore, let $Q\in\partial^\prime\rho_{\mu,r}(X)$ and $\{Q^*_Z\}$ be the inf-convolution minimizer for $Q$. Thus, we have that \begin{align*} \rho_{\mu,r}(X)&=\int_{\bar{B}(0,r)}\left(\langle -X+Z,Q^*_Z\rangle -\alpha_\rho(Q^*_Z)\right)d\mu_{0,r}(Z). \end{align*} Thus, it is necessary that $Q_Z^*\in\partial\rho(X-Z)$ for $\mu_{0,r}-a.s.$ any $Z\in\bar{B}(0,r)$. This concludes the proof.
        \end{proof}

             \begin{Exm}[$\mathcal{X}=L^p$ spaces]\label{ex:lp}
  Fix a probability measure $\mathbb{P}$ in $(\Omega,\mathcal{F})$. Let $L^{p}:=L^{p}(\Omega,\mathcal{F},\mathbb{P})$ be the space of (equivalent classes of) random variables such that  $ \lVert X \rVert_p = (E[|X|^p])^{\frac{1}{p}}<\infty$ for $p\in[1,\infty)$, where $E$ is the expectation over $\mathbb{P}$. Here, the partial order is the $\mathbb{P}-a.s.$ one. As usual, $L^q$, $\frac{1}{p}+\frac{1}{q}=1$ is the dual of $L^p$. In this case, $\mathcal{X}^\prime_{1,+}$ be the set of all probability measures on $(\Omega,\mathcal{F})$ that are absolutely continuous with respect to $\mathbb{P}$, with Radon--Nikodym derivative $\frac{dQ}{d\mathbb{P}}\in L^q$. With some abuse of notation, we treat probability measures as elements of $L^q$. In \cite{Ruszczynski2006} and \cite{Kaina2009}, for instance, convex risk measures are studied in these $L^p$ spaces.  In the work of \cite{Righi2024a}, it is characterized that the worst-case convex risk measure for closed balls in $L^p$ can be represented by \[\alpha_{WC}(Q)=\alpha_\rho(Q)-r\|Q\|_q.\] Since $\rho_{\mu,r}\leq\rho^{WC}$ we have, together to \Cref{crl:subd} the bounds \[\alpha_\rho(Q)-r\|Q\|_q\leq \alpha_{\mu,r}(Q)\leq \alpha_\rho(Q),\:\forall\:Q\in\mathcal{Q}.\]

    Let $L^{\infty}:=L^{\infty}(\Omega,\mathcal{F},\mathbb{P})$ be the space of (equivalent classes of) random variables such that $ \lVert X \rVert_\infty = \esssup |X| < \infty$ for $ p = \infty$. For $L^\infty$, we consider the dual pair $(L^\infty,L^1)$, where the weak topology is in fact its weak* topology. As references for this classical space of risk measures, we refer to \cite{Delbaen2012} and \cite{Follmer2016}. In these spaces order continuity and order duality as in \cite{Biagini2010}, known as Lebesgue continuity, plays a crucial role. A prominent example of convex risk measure in this space is the Entropic (ENT), which is the map $ENT_\gamma\colon L^\infty\to\mathbb{R}$  defined as \[ENT_\gamma (X) = \frac{1}{\gamma} \log \E [ e^{-\gamma X}],\:\gamma>0.\]  Its penalty is the relative entropy as \[\alpha_{ENT_\gamma}(Q)=\frac{1}{\gamma}E\left[\frac{dQ}{d\mathbb{P}}\log\frac{dQ}{d\mathbb{P}}\right].\] This risk measure is Gâteaux differentiable for any $X\in L^\infty$ with $\frac{dQ_X}{d\mathbb{P}}=\frac{e^{-\gamma X}}{E[e^{-\gamma X}]}$. It is easy to verify that such a risk measure is integrable with respect to $\mu_{0,r}$ since closed balls in $L^\infty$ are order bounded.

Concerning the hypothesis of separability, define
$d_\mathcal{F}(A,B):=\mathbb P(A\triangle B)$ on $\mathcal F$. Thus, $(\mathcal{F},d_\mathcal{F})$ is a metric space. Then, $L^p$, $p\in[1,\infty)$, is separable if and only if $(\mathcal{F},d_\mathcal{F})$ is separable, see Theorem 19.2 in \cite{billingsley2012probability} for instance. For instance, if $\Omega$ is a Polish space, $\mathcal F$ is the Borel $\sigma$-algebra, and $\mathbb P$
is a Borel probability measure, then $(\mathcal{F},d_\mathcal{F})$ is separable.  
  For $L^\infty$, separability is more restrictive.  If $\Omega$ has only finitely many atoms (up to null sets), then $L^\infty$ is separable
(in fact finite-dimensional). Nonetheless, if $(\Omega,\mathcal F,\mathbb P)$ is atomless and standard (e.g.\ $([0,1],\mathcal B([0,1]))$ with Lebesgue measure),
then $L^\infty$ is not separable. Nonetheless, under separability of $L^\infty$, from \Cref{Thm:main}, we have that $(Ent_\gamma)_{\mu,r}$ also is Gâteaux differentiable for any $X\in L^\infty$ and can be represented by \[Q\mapsto\inf\limits_{\substack{\int_{\bar{B}(0,r)}Q_Zd\mu_{0,r}(Z)=Q \\ Z\mapsto\alpha_{{ENT_\gamma}}\left(Q_Z\right)-E_{Q_Z}[Z]\in\mathcal{B}}}
		\int_{\bar{B}(0,r)}\left(\frac{1}{\gamma}E\left[\frac{dQ_Z}{d\mathbb{P}}\log\frac{dQ_Z}{d\mathbb{P}}\right]-E_{Q_Z}[Z]\right)d\mu_{0,r}(Z).\]    
    Moreover, by \Cref{crl:subd},  if the supremum in \eqref{eq:dualmu} is attained we have that  \[\partial(Ent_\gamma)_{\mu,r}(X)=\int_{\bar{B}(0,r)}\frac{e^{-\gamma (X-Z)}}{E[e^{-\gamma( X-Z)}]}d\mu_{0,r}(Z).\]

\end{Exm}

\begin{Exm}[$\mathcal{X}$ as Orlicz spaces] \label{ex:orlicz}
Let $(\Omega,\mathcal{F},\mathbb{P})$ be a probability space and let
$\Phi:[0,\infty)\to[0,\infty)$ be a Young function, i.e.\ $\Phi$ is convex,
$\Phi(0)=0$, $\Phi(1)=1$, and $\Phi(x)\uparrow\infty$ as $x\to\infty$.
Let $\Psi$ denote the convex conjugate of $\Phi$,
$\Psi(y):=\sup_{x\ge 0}\{xy-\Phi(x)\},\:y\ge 0$.
The associated Orlicz space and Orlicz heart are, respectively,
\[
L^\Phi:=\Big\{X:\ \mathbb{E}\Big[\Phi\Big(\frac{|X|}{a}\Big)\Big]<\infty
\text{ for some }a>0\Big\},\;
M^\Phi:=\Big\{X:\ \mathbb{E}\Big[\Phi\Big(\frac{|X|}{a}\Big)\Big]<\infty
\:\forall\:a>0\Big\}.
\]
The Luxemburg norm on $L^\Phi$ is given by
\[
\|X\|_\Phi:=\inf\Big\{a>0:\ \mathbb{E}\Big[\Phi\Big(\frac{|X|}{a}\Big)\Big]\le 1\Big\}.
\]
Then $(L^\Phi,\|\cdot\|_\Phi)$ is a Banach lattice under the $\mathbb{P}-a.s.$ pointwise order,
and $(M^\Phi,\|\cdot\|_\Phi)$ is a (closed) Banach sublattice. If $\Phi$ satisfies
the $\Delta_2$-condition, i.e. if there exist constants $K>0$ and $x_0\ge 0$ such that
$\Phi(2x)\le K\,\Phi(x)\:\forall\:x\ge x_0$, then $M^\Phi=L^\Phi$. For instance, $L^p$ spaces, $p\in[1,\infty)$, are Orlicz spaces under the Young function
$\Phi_p:[0,\infty)\to[0,\infty)$ defined by $\Phi_p(t):=t^p/p$. The normed dual of $(M^\Phi,\|\cdot\|_\Phi)$ can be identified with $(L^\Psi,\|\cdot\|_\Psi^\ast)$,
where the Orlicz (dual) norm is
\[
\|Y\|_\Psi^\ast := \sup_{\|X\|_\Phi\le 1}\mathbb{E}[|XY|].
\]
Risk measures on $M^\Phi$ are then studied in, for instance, \cite{Cheridito2008}, \cite{Bellini2008}, \cite{Cheridito2009}. In this case we have, under the growth condition $\alpha_\rho(Q)\geq a+b\|Q\|^*_\Psi$ with $a,b\in\R,\:b>0$, that $\mathcal{X}^\prime_{1,+}=\{Q\in L^\Psi\colon Q\geq 0,\:E[Q]=1\}$, representing Radon-Nikodym derivatives with respect to $\mathbb{P}$.

The most known examples of risk measure in this space are the cash-invariant hulls. Let  $V:M^\Phi\to\R$ be convex and monotone (in the sense of risk measures). Then \[\rho_V(X):=\inf_{s\in\mathbb{R}}\{V(s-X)-s\}\]
is a real-valued convex risk measure on $M^\Phi$. Under mild assumptions, the infimum in this definition is attained. In order to make the averaging robust more tractable to this class of risk measures, we need some additional structure. We assume the linear space of measurable real valued maps $\mathcal{L}(\mathcal{X})$ on $(\mathcal{X},\mathcal{B})$ is decomposable, i.e. for any $Y\in \mathcal{L}(\mathcal{X})$, any $B\in\mathcal{B}$, and any
bounded $\mathcal{B}$-measurable $W\colon\mathcal{X}\to\R$, we have
$1_{B^c}Y+1_B W\in\mathcal{L}(\mathcal{X})$. Still, let $h:\mathbb{R}\times Z\to\mathbb{R}$ be defined as $h(s,Z)=s+V(s-Z)$, where $V$ is as above. It is straightforward to verify it is a (measurable) normal integrand, i.e.  its epigraphical mapping is closed valued and measurable. Notice that
$\inf_{S\in \mathcal{L}(\mathcal{X})}\,\int_{\bar{B}(0,r)}h(S(Z),Z)d\mu_{0,r}(Z)<\infty$.
Then, the interchange identity, see Theorem 14.60 in \cite{Rockafellar2009} for instance, holds:
\[
(\rho_V)_{\mu,r}(X)=\int_{\bar{B}(0,r)}\inf\limits_{s\in \R}\{V(s-X)-s\}\,d\mu_{0,r}(Z)
=
\inf_{S\in \mathcal{L}(\mathcal{X})}\,\int_{\bar{B}(0,r)}\left(S(Z)-V(S(Z)-Z)\right).\]
Hence, $\rho_V$ is integrable in closed balls and suitable for our approach. Moreover, since the common value is finite, then 
$S^*\in \mathcal{L}(\mathcal{X})$ attains the infimum in right-hand side if and only if $S(Z)\in\argmin_{s\in\mathbb{R}} h(s,Z)
\:\mu_{0,r}-a.s.$

Specializing, we have the class of transformed norm risk measures. Let
 $F:[0,\infty)\to(-\infty,\infty]$ be left-continuous, increasing and convex with
$\lim_{x\to\infty}F(x)=\infty$, $G$ a real-valued Young function, and
$H:\mathbb{R}\to[0,\infty)$ increasing and convex with $\lim_{x\to\infty}H(x)=\infty$.
Then, define $V(X):=F(\|H(X)\|_G)$, where $\|\cdot\|_G$ is the Luxemburg norm on the Orlicz heart $M^G$.
Then the cash-additive hull takes the explicit form $\rho(X)=\min_{s\in\mathbb{R}}\{F(\|H(s-X)\|_G)-s\}$. A particular case is the concept of Orlicz premia and Haezendonck--Goovaerts risk measure. Fix $\alpha\in(0,1)$ and set $\Phi_\alpha := \Phi/(1-\alpha)$ with convex conjugate $\Psi_\alpha$.
Define the Orlicz premium functional
\[
\pi_\alpha(X):=\inf_{x\in\mathbb{R}}\Big\{x+\|(X-x)^+\|_{\Phi_\alpha}\Big\},
\]
and the Haezendonck--Goovaerts risk measure by
$\rho_\alpha(X):=\pi_\alpha(-X),\:X\in M^\Phi$.
This is precisely the transformed norm risk measure corresponding to the choice $F(x)=x$, $G=\Phi_\alpha$, and $H(x)=x^+$, leading to a coherent risk measure. We have that a Orlicz heart is separable when $(\mathcal{F},d_\mathcal{F})$ is separable, see \cite{rao_ren_1991_orlicz} for instance. Under this property, from \Cref{Thm:main} and \Cref{rmk:coh}, we have that
\[
\alpha_{\mu,r}(Q)=
\inf\limits_{\substack{\int_{\bar{B}(0,r)}Q_Z\,d\mu_{0,r}(Z)=Q \\ \|Q\|^*_{\Psi_\alpha}\leq 1 \\Z\mapsto \langle Z,Q_Z\rangle\in\mathcal{B}}}
\int_{\bar{B}(0,r)}\langle -Z,Q_Z\rangle\,d\mu_{0,r}(Z).
\]
\end{Exm}

\begin{Exm}[$\mathcal{X}=C_b(\Omega)$]
 Let $\Omega$ be a Polish space, with $\mathcal{F}=\mathcal{B}(\Omega)$, i.e. the Borel sigma-algebra. We then restrict the domain of $\rho$ to be $\mathcal{X}=C_b:=C_b(\Omega)$, the space of continuous (and bounded) real valued functions on $\Omega$. This is a Banach lattice. In this case we have that $\mathcal{X}^\prime_{1,+}$ are the regular probability measures on $(\Omega,\mathcal{F})$. In \cite{Delbaen2022} and \cite{Nendel2024}, for instance, we have theory for risk measures in such a space. Every monetary risk measure in this space is Lipschitz continuous. Then, by \Cref{thm:r}, we have that $r\mapsto\rho_{\mu,r}(X)$ is continuous and  $\lim\limits_{r\to 0}\rho_{\mu,r}(X)=\rho(X)$ for any $X\in C_b$.
 
 Fix a reference probability measure $\mathbb{P}$ on $(\Omega,\mathcal{F})$ and define $\mathcal{X}^\prime_\mathbb{P}=\{Q\in\mathcal{X}^\prime_{1,+}\colon Q\ll\mathbb{P}\}$. Let again $E$ be the expectation with respect to $\mathbb{P}$. Now, let a divergence be a map $Q\mapsto E[g(Q)]$, where $g\colon \mathbb{R}_+\to \mathbb{R}_+$ is convex, lsc and $E[g(Q)]=\infty$ for any $Q\not\in\mathcal{X}^\prime_\mathbb{P}$. In this case we can define a divergence based risk measure $\rho_g$ in $C_b$ as \[\rho_g(X)=\max\limits_{Q\in\mathcal{X}^\prime_{1,+}}\{E_Q[-X]-E[g(Q)]\}.\] It is clearly a convex risk measure with $\alpha_\rho=g$. Examples are the Entropic, Shortfall Risks as in \cite{Follmer2016}, and any coherent risk measure (since the penalty term is only a convex indicator function). It is a well known fact that $C_b(\Omega)$ is separable if and only if $\Omega$ is compact. In this case, $C_b(\Omega)=C(\Omega)$. By making a slightly abuse of notation by treating $Q\in\mathcal{X}^\prime_\mathbb{P}$ by its Radon-Nikodym derivative, we then can compute $\alpha_{\mu,r}$ by considering a measurable $q:\bar{B}(0,r)\times\Omega\to[0,\infty)$ satisfying the marginal constraints
\begin{equation*}
\int_{\bar{B}(0,r)} q(Z,\omega)\,\mu_{0,r}(dZ)=Q(\omega)\quad\text{for $P$-a.s. }\omega,
\qquad
\int_\Omega q(Z,\omega)\,d\mathbb{P}(\omega)=1\quad\text{for $\mu_{0,r}$-a.s. }Z.
\end{equation*} We have that $(Z,\omega)\to Z(\omega)$ is measurable in the product space, since it is continuous on $(Z,\Omega)$ for the product  metric as $Z\mapsto Z(\omega)$ is continuous in the supremum norm. Then, by \Cref{Thm:main} we have that
\begin{align*}
  \alpha_{\mu,r}(Q) := & \inf \int_{\bar{B}(0,r)\times\Omega} \left(g(q(Z,\omega))-Z(\omega)q(Z,\omega)\right)\,d(\mu_{0,r}\times \mathbb P)(Z,\omega) \\
  &\text{s.t.}\\
  &\int_{\bar{B}(0,r)}q(Z,\omega)\,d\mu_{0,r}(Z)=Q(\omega)\quad\text{for $P$-a.s. }\omega,\\
  & \int_\Omega q(Z,\omega)\,d\mathbb{P}(\omega)=1\quad\text{for $\mu_{0,r}$-a.s.}Z,\\
  &q\geq 0.
\end{align*}
In the particular case where $E[g(Q)]=E_Q[h(Q)]$, as is the case for the Entropic risk measure and any coherent one, this penalty term computation is a classical optimal transport problem. Let $\nu_\Omega:=Qd\mathbb P$ and $\nu_{\bar{B}(0,r)}:=\mu_{0,r}$. Both are probability measures. The constraints mean that $q$ is a coupling of $\nu_\Omega$ and $\nu_{\bar{B}(0,r)}$:
\[
\Pi\!\left(\nu_\Omega,\nu_{\bar{B}(0,r)}\right)
:=\Big\{q\ge 0:\ \int_{\bar{B}(0,r)} q\,d\mu_{0,r}=Q\ \mathbb{P}\text{-a.s.},\ \int_\Omega q\,d\mathbb P=1\ \mu_{0,r}\text{-a.s.} \Big\}.
\] In this context we get \[\alpha_{\mu,r}(Q)
= \inf_{q\in\Pi(\nu_\Omega,\nu_{\bar{B}(0,r)})} \int_{\bar{B}(0,r)\times\Omega}\left(h(q(Z,\omega))-Z(\omega)q(Z,\omega)\right)\,d(\mu_{0,r}\times \mathbb P).\]
This problem can be solved through Kantorovich duality (see \cite{Villani2021}) with $(\omega,Z)\mapsto h(q(Z,\omega))-Z(\omega)$ as  a valid cost function. 
\end{Exm}

\section{Aggregate then evaluate}\label{sec:agg}

 To deal with distribution functions, we fix a probability measure $\mathbb{P}$ in $(\Omega,\mathcal{F})$. Let $F_{X}(x) = \mathbb{P}(X\leq x)$ and $F_{X}^{-1}(\alpha)=\inf\{x\in\mathbb{R}\colon F_X (x)\geq\alpha\}$ for $\alpha \in (0,1)$ be, respectively, the distribution function and the (left) quantile of $X$. We also let $E$ be the expectation operator. In this context, we are interested in functionals that satisfy the following property:
\begin{itemize}
   \itemsep0em 
    \item[] Law Invariance: if $F_X=F_Y$, then $\rho(X)=\rho(Y),\:\forall\:X,Y\in \mathcal{X}$.
\end{itemize}
We call risk measures with this property law invariant. Then, with a slight abuse of notation, we write $\rho(F_X)$ in place of $\rho(X)$,
and view $\rho$ as a real-valued functional on the set $\mathcal{M}$ of distribution functions. We also have the subset $\mathcal{M}(\mathcal{X}):=\{F_X\colon\:X\in \mathcal{X}\}$. Under Law Invariance, we always assume $(\Omega,\mathcal{F},\mathbb{P})$ is atomless, in which case it supports a random variable $U$ that is uniformly distributed over $[0,1]$. We call $\rho$ a spectral risk measure if \[\rho_\phi(X)=-\int_0^1F^{-1}_X(u)\phi(u)du,\:\forall\:X \in \mathcal{X},\] where $\phi:[0,1]\to\mathbb{R}^+$ is a non-increasing functional such that $\int_{0}^{1}\phi(u)du=1$. The most prominent example of spectral risk measure is the Expected Shortfall (ES), that is functional $ES_\alpha\colon L^1\to\mathbb{R}$ defined as \[ES_\alpha(X)=-\frac{1}{\alpha}\int_0^\alpha F^{-1}_X(u)du,\:\alpha\in(0,1).\]In this case the spectral function is $\phi(u)=\frac{1}{\alpha}1_{(0,\alpha)}(u),\:\alpha\in(0,1)$. 

\begin{Def} \label{def:diff_aggregations}
For any $X\in\mathcal{X}$ and $r>0$, we define $X_{\mu,r}:=F^{-1}_{\mu,r}(U)$ and  $X_{\bar{\mu},r}:=F^{-1}_{\bar{\mu},r}(U)$, where \begin{equation}\label{eq:fmu}F_{\mu,r}(x)=\int_{\bar{B}(X,r)}F_Z(y)d\mu_{X,r}(Z),\end{equation} and \begin{equation}\label{eq:fmuq}F^{-1}_{\bar{\mu},r}(u)=\int_{\bar{B}(X,r)}F^{-1}_Z(u)d\mu_{X,r}(Z).\end{equation}
\end{Def}

We now give sufficient conditions for both aggregations to be well defined. Notice that the relation $X_{\bar{\mu},r}=X_{\mu,r}$ is not always true.

\begin{Lmm}\label{lmm:agg}
 The integral in \eqref{eq:fmu} (respectively, \eqref{eq:fmuq}) is proper for any $x\in\mathbb{R}$ (for any $u\in(0,1)$) if and only if $X_{\mu,r}$ (respectively, $X_{\bar{\mu},r}$) is a random variable.
\end{Lmm}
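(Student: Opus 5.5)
The plan is to reduce both equivalences to the classical characterizations of a distribution function and of a quantile function, and then use the quantile transform $F^{-1}(U)$ with $U$ uniform on $[0,1]$. Here ``proper'' means the integral exists and is finite. For the distribution aggregation, each $F_Z(x)\in[0,1]$. Measurability of $Z\mapsto F_Z(x)$ on $(\mathcal{X},\mathcal{B})$ is the content of the integral in \eqref{eq:fmu} being defined, and once it is, the integral is automatically finite and lies in $[0,1]$.

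\textbf{Distribution case.} Assume the integral in \eqref{eq:fmu} is proper for every $x$. I will show $F_{\mu,r}$ is a distribution function.
\begin{itemize}
\item Monotonicity follows from monotonicity of each $F_Z$ and of the integral.
\item Right-continuity follows by monotone (or dominated, with bound $1$) convergence as $y\downarrow x$, since each $F_Z$ is right-continuous.
\item The limits $0$ and $1$ at $\mp\infty$ again follow from dominated convergence with dominating constant $1$, because $\mu_{X,r}$ is a probability measure.
\end{itemize}
Then the standard fact that $F^{-1}(U)$ has law $F$ whenever $F$ is a distribution function and $U$ is uniform gives that $X_{\mu,r}=F^{-1}_{\mu,r}(U)$ is a (real-valued, measurable) random variable with distribution $F_{\mu,r}$.

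Conversely, if $X_{\mu,r}$ is a random variable, then the expression $F^{-1}_{\mu,r}$ must come from a well-defined $[0,1]$-valued function $F_{\mu,r}$. So the map $x\mapsto\int F_Z(x)\,d\mu_{X,r}$ must be defined at every $x$, which is exactly properness. This direction is essentially definitional. I would phrase it as: if the integral fails to be defined at some $x$, then $F_{\mu,r}$, and hence $X_{\mu,r}$, is not defined.

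\textbf{Quantile case.} The roles are analogous but the boundedness is lost, so this is where the real work lies. Assume the integral in \eqref{eq:fmuq} is proper (finite) for every $u\in(0,1)$. I will show $G(u):=\int F^{-1}_Z(u)\,d\mu_{X,r}(Z)$ is a quantile function, meaning real-valued, non-decreasing and left-continuous on $(0,1)$.
\begin{itemize}
\item Finiteness is the hypothesis, and monotonicity is inherited from each $F_Z^{-1}$.
\item Left-continuity uses dominated convergence as $v\uparrow u$. For $v\in[u_0,u]$ with fixed $u_0<u$, monotonicity gives $F_Z^{-1}(u_0)\le F_Z^{-1}(v)\le F_Z^{-1}(u)$. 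Hence $|F_Z^{-1}(v)|\le |F_Z^{-1}(u_0)|+|F_Z^{-1}(u)|$, and this bound is integrable precisely because of properness at $u_0$ and at $u$.
\end{itemize}
A non-decreasing left-continuous real function $G$ on $(0,1)$ is the left quantile of the distribution $F(x)=\sup\{u\colon G(u)\le x\}$ (more precisely, $\lambda(u\in(0,1):G(u)\le x)$). Therefore $G(U)$ is a random variable with quantile $G$, which gives $X_{\bar\mu,r}$. Measurability of $G(U)$ is immediate since $G$ is monotone. The converse is again that $X_{\bar\mu,r}$ being a random variable requires $G(u)$ to be a real number for each $u$, i.e.\ properness.

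\textbf{Main obstacle.} The one genuinely technical step is this domination argument for left-continuity in the quantile case, together with ensuring that measurability of $Z\mapsto F^{-1}_Z(u)$ is part of the meaning of ``proper''. The distribution-case limits at $\pm\infty$ are routine.
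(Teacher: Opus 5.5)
Your proposal is correct and follows essentially the same route as the paper. It uses monotonicity plus dominated convergence (with bound $1$) for right-continuity of $F_{\mu,r}$, monotonicity plus a limit theorem for left-continuity of $F^{-1}_{\bar{\mu},r}$, and treats the converse as straightforward. The differences are minor: you use dominated convergence with the two-sided bound $|F_Z^{-1}(u_0)|+|F_Z^{-1}(u)|$ where the paper uses Fatou's lemma for lower semicontinuity, and you also check the limits at $\pm\infty$ and that a non-decreasing left-continuous $G$ is a quantile function, which the paper leaves implicit.
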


\begin{proof}
The if part is straightforward. For the only if step, by monotonicity of the integral, we get that $x\mapsto F_{\mu,r}(x)$ is monotone and uniformly bounded, with respect to $Z\in\bar{B}(X,r)$, since it lies on $[0,1]$. Thus, by the Dominated Convergence Theorem it is right-continuous, hence a valid distribution. Then, $X_{\mu,r}:=F_{\mu,r}^{-1}(U)$ is well defined. Now, we must prove that $F^{-1}_{\bar{\mu},r}$ is a proper quantile function. By monotonicity of the integral, we have that $v\colon u\mapsto F_{\bar{\mu},r}^{-1}(u)$ is monotone. Moreover, by Fatou's Lemma, since the sequence of maps $Z\mapsto F^{-1}_Z(u_n)$ is bounded below for sufficiently large $n$ for any convergent $u_n\to u$, $v$ is lower semi continuous in $(0,1)$. Hence, $u\mapsto F_{\bar{\mu},r}^{-1}(u)$ is a valid quantile function. Thus, $X_{\bar{\mu},r}:=F_{\bar{\mu},r}^{-1}(U)$ is well defined.
\end{proof}

The next main result claims a relationship between the aggregation approach and the one we propose.

\begin{Thm}\label{thm:avg}
 If $X_{\mu,r}\in\mathcal{X}$ and  $\rho$ is concave on $\mathcal{M}(\mathcal{X})$, then $\rho(X_{\mu,r})\geq\rho_{\mu,r}(X)$. Moreover, if $X_{\bar{\mu},r}\in\mathcal{X}$ and $\rho$ is a spectral risk measure, then $\rho(X_{\bar{\mu},r})=\rho_{\mu,r}(X)$. In this case, $\rho_{\mu,r}(0)=0$ (normalized) if and only if $\rho=-E$.
\end{Thm}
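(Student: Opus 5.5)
The plan is to treat the three claims separately: a mixture/Jensen argument for the first, a Fubini argument for the second, and a symmetry argument for the third. Throughout I would work on the ball $\bar{B}(X,r)$ with $\mu_{X,r}$, and by \Cref{lmm:agg} take $F_{\mu,r}$ and $F^{-1}_{\bar{\mu},r}$ as a genuine distribution and a genuine quantile function.

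\textbf{Concavity claim.} By \eqref{eq:fmu}, $F_{X_{\mu,r}}=F_{\mu,r}=\int F_Z\,d\mu_{X,r}(Z)$ is a $\mu_{X,r}$-mixture of the laws $F_Z$. So $\rho(X_{\mu,r})\geq\rho_{\mu,r}(X)$ is an integral Jensen inequality for the concave functional $F\mapsto\rho(F)$ on $\mathcal{M}(\mathcal{X})$. I would proceed in two steps.

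\begin{enumerate}
\item Concavity gives the inequality directly for finite mixtures, i.e. for discrete $\mu$. This is essentially the setting of the discrete perturbation example.
\item For general $\mu_{X,r}$, approximate it by discrete measures $\mu^n$ supported in $\bar{B}(X,r)$, for instance by partitioning the ball into small Borel cells and putting mass at one representative of each cell. Then pass to the limit on both sides.
\end{enumerate}

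In step 2, the right side $\int\rho\,d\mu^n\to\rho_{\mu,r}(X)$ needs some regularity of $\rho$ along the cells, such as Lipschitz continuity as in \Cref{thm:r}. The left side needs upper semicontinuity of $\rho$ along $F_{\mu^n}\to F_{\mu,r}$. I expect this limiting step to be the main obstacle.

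If the approximation route is too delicate, the alternative is a supporting-functional argument. Write the concave $\rho$ on the convex set $\mathcal{M}(\mathcal{X})$ as an infimum of affine functionals of $F$, namely mixture-linear maps $F\mapsto\int g\,dF+c$. Affine maps commute with the mixture integral by Fubini. Taking the infimum then gives $\rho(\int F_Z\,d\mu)\geq\int\rho(F_Z)\,d\mu$ at once.

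\textbf{Spectral claim.} The quantile function of $X_{\bar{\mu},r}=F^{-1}_{\bar{\mu},r}(U)$ coincides with $F^{-1}_{\bar{\mu},r}$ on $(0,1)$, up to a null set. Hence
\[
\rho_\phi(X_{\bar{\mu},r})=-\int_0^1\int_{\bar{B}(X,r)}F^{-1}_Z(u)\,d\mu_{X,r}(Z)\,\phi(u)\,du .
\]
Exchanging the integrals turns this into $\int\rho_\phi(Z)\,d\mu_{X,r}(Z)=\rho_{\mu,r}(X)$. To justify the exchange I would check two things.
\begin{itemize}
\item Joint measurability of $(Z,u)\mapsto F_Z^{-1}(u)$. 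This holds because it is monotone and left-continuous in $u$ and measurable in $Z$.
\item Tonelli applied separately to the positive and negative parts of $F_Z^{-1}(u)\phi(u)$, which is legitimate since $\phi\geq0$. Finiteness follows from integrability of $\rho$ on the ball together with $X_{\bar{\mu},r}\in\mathcal{X}$.
\end{itemize}

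\textbf{Normalization.} Here I would work under \Cref{asp:lin}, so that by \Cref{lmm:r} $\rho_{\mu,r}(0)=\int\rho(Z)\,d\mu_{0,r}(Z)$ with $\mu_{0,r}$ symmetric.

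If $\rho=-E$, then $\rho$ is linear, and \Cref{lmm:r} gives $\rho_{\mu,r}(0)=\rho(0)=0$.

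For the converse, symmetry gives
\[
\rho_{\mu,r}(0)=\tfrac12\int\big(\rho(Z)+\rho(-Z)\big)\,d\mu_{0,r}(Z).
\]
For a spectral, hence coherent, $\rho$ we have $\rho(Z)+\rho(-Z)\geq\rho(0)=0$ by subadditivity. Moreover $\rho(Z)+\rho(-Z)=\int_0^1\big(F^{-1}_{-Z}(u)-F^{-1}_Z(u)\big)\,\cdots$, which after rewriting is $\int_0^1 F_Z^{-1}(u)\big(\phi(1-u)-\phi(u)\big)\,du$. This quantity is strictly positive whenever $\phi$ is not a.e.\ constant and $Z$ is non-degenerate. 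So $\rho_{\mu,r}(0)=0$ forces $\phi\equiv1$, i.e.\ $\rho=-E$.

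This last step requires that $\mu_{0,r}$ charge non-degenerate $Z$. I would make this implicit non-triviality of $\gamma_0$ explicit in the argument.
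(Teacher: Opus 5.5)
Your three-step plan (an integral Jensen inequality for the mixture, Fubini for the averaged quantiles, symmetry for normalization) is the paper's plan. However, neither of your routes establishes the first claim, because the step that carries all the content does not follow from the hypothesis.

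\textbf{First claim.} Route (a) needs continuity of $\rho$ on $\bar{B}(X,r)$ and upper semicontinuity of $F\mapsto\rho(F)$ along the approximating mixtures; as you note yourself, neither is assumed. Route (b) needs $\rho$ to be an infimum of maps $F\mapsto\int g\,dF+c$, with $g$ integrable against every element of $\mathcal{M}(\mathcal{X})$. That integrability is what makes your Fubini step legitimate, since $F_{\mu,r}\in\mathcal{M}(\mathcal{X})$. Mixture-concavity does not give this representation. You need upper semicontinuity of $\rho$ on $\mathcal{M}(\mathcal{X})$ for a locally convex topology whose continuous linear functionals have that integral form; Hahn--Banach then yields the representation.

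Concavity alone cannot suffice. Take $\rho(F):=-(\text{mass of the non-atomic part of }F)$, which is mixture-affine and hence concave on $\mathcal{M}(\mathcal{X})$. Let $\mu_{X,r}$ be an absolutely continuous law on the constants $c\in[0,1]$, e.g.\ $\mathcal{X}=L^1$, $X\equiv 1/2$, $r=1/2$, and $\gamma_X$ the law of a uniformly distributed constant. Then $\rho_{\mu,r}(X)=0$, while $X_{\mu,r}$ has a density, so $\rho(X_{\mu,r})=-1$.

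The paper does not approximate. It realizes $F_{\mu,r}$ as the distribution function of the barycenter $\nu=\int\nu_Z\,d\mu_{X,r}(Z)$, a Pettis integral in $C_0(\mathbb{R})^*$, with weak$^*$-measurability coming from measurability of $Z\mapsto F_Z(q)$ for rational $q$. It then quotes a Jensen inequality for Pettis integrals. The example shows that this citation can only apply under an additional (semicontinuity) hypothesis, so the paper's argument carries the same unverified requirement. The repair is the same for both: assume explicitly that $\rho$ is upper semicontinuous in a suitable topology. With that, your route (b) is a complete, self-contained proof (it is exactly the separation argument behind the cited Jensen inequality), and route (a) can be dropped.

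\textbf{Second and third claims.} These agree with the paper and are more careful than it, with one correction. Integrability of $\rho$ on the ball only controls $|\int_0^1F_Z^{-1}\phi\,du|$, not $\int_0^1|F_Z^{-1}|\phi\,du$. Product integrability comes instead from three facts: monotonicity of quantiles, properness of \eqref{eq:fmuq} at each $u$ (\Cref{lmm:agg}), and $\int_0^1|F^{-1}_{\bar{\mu},r}|\phi\,du<\infty$. Concretely, for $u\le 1/2$ one has $\int(F_Z^{-1}(u))^-\,d\mu_{X,r}(Z)\le|F^{-1}_{\bar{\mu},r}(u)|+\int(F_Z^{-1}(1/2))^+\,d\mu_{X,r}(Z)$. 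The positive parts on $[1/2,1)$ are handled symmetrically, and the other two pieces are bounded by their values at $u=1/2$.

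For normalization, your identity $\rho(Z)+\rho(-Z)=\int_0^1F_Z^{-1}(u)\big(\phi(1-u)-\phi(u)\big)\,du$ is the paper's symmetry argument made explicit. The strict positivity you assert is Chebyshev's integral inequality for the nondecreasing pair $F_Z^{-1}$ and $u\mapsto\phi(1-u)-\phi(u)$, the latter having zero integral. Its equality case forces $F_Z^{-1}$ to be a.e.\ constant unless $\phi$ is. Both hypotheses you add are genuinely needed, and the paper uses them silently:
\begin{itemize}
\item \Cref{asp:lin} is needed already for the ``if'' direction.
\item Non-degeneracy is needed for the ``only if'' direction, which fails for instance when $\gamma_0=\delta_0$, since then $\rho_{\mu,r}(0)=\rho(0)=0$ for every spectral $\rho$.
\end{itemize}
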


\begin{proof}
Let $\nu_Z$ be the measure associated with $F_Z$, $\nu_Z((a,b]):= F_Z(b)-F_Z(a)$, and define the barycentric measure
\[
\nu := \int_{\bar{B}(X,r)} \nu_Z\,d\mu_{X,r}(Z)
\]
as a Pettis integral in the space of Borel probability measures $\mathcal{M}(\mathbb{R})$. This space can be identified with the topological dual of $C_0(\mathbb{R})$, the space of continuous real functions that vanishes at infinity. For a reference for this duality, known as the Riesz–Markov-Kakutani Theorem, see Theorem 6.19 in  \cite{rudin1987real} for instance. For each rational $q$, the map $Z\mapsto F_Z(q)=\nu_Z((-\infty,q])$ is Borel measurable on $\mathcal{X}$. Hence, $Z\mapsto \nu_Z$ is weak$^*$-measurable in $C_0(\mathbb{R})^*$ and the Pettis (weak$^*$) integral
$\int \nu_Z\,d\mu_{X,r}(Z)$ is well-defined.
For every $x\in\mathbb{R}$ we then have that 
\[
\nu((-\infty,x]) 
  = \int_{\bar{B}(X,r)} \nu_Z((-\infty,x])\,d\mu_{X,r}(Z)
  = \int_{\bar{B}(X,r)} F_Z(x)d\mu_{X,r}(Z).
\]
Since $\nu$ is a Borel measure, its cumulative distribution function $F_\nu(x) := \nu((-\infty,x])$
is right-continuous.  If $d$ is a metric on $\mathcal{M}$, then $F_X=F_Y$ implies $\bar{B}(X,r)=\bar{B}(Y,r)$. In this case, we have that Law Invariance of $\rho$ is then preserved by $\rho_{\mu,r}$. Therefore, $F_{\mu,r}=F_\nu$, representing the Pettis (or barycentric) measure $\nu$. By Theorem 3 in \cite{vesely2017jensen}, we then have the following Jensen inequality for the Pettis integral \[\rho_{\mu,r}(X)=\int_{\bar{B}(X,r)}\rho(F_Z)d\mu_{X,r}(Z)\leq \rho(F_\nu)=\rho(F_{\mu,r})=\rho(X_{\mu,r}).\]

For the second claim, since $(u,Z)\mapsto F_{Z}^{-1}(u)$ is integrable on the product space, we can use the Fubini-Tonelli Theorem in order to get that \[\rho_{\mu,r}(X)=-\int_0^1\int_{\bar{B}(X,r)}F^{-1}_{Z}(u)d\mu_{X,r}(Z)\phi(u)du=\rho(X_{\bar{\mu},r}).\] Moreover,  since $-F^{-1}_{Z}(u)=F^{-1}_{-Z}(1-u)$, we have that $\rho_{\mu,r}(0)=0$ if and only if $\phi(u)=\phi(1-u)$. Since $\phi$ is monotone, this implies that we have normalization if and only if $\phi=1$, where we recover the expectation. This concludes the proof.
\end{proof}

We have the following corollary, that compares the robust averaging risk measure $\rho_{\mu,r}$ with its aggregated counterparts.  The notation $X\succeq Y$, for $X,Y\in \mathcal{X}$, indicates second-order stochastic dominance, that is, $E[f(X)]\leq E[f(Y)]$ for any increasing convex function $f\colon\mathbb{R}\rightarrow\mathbb{R}$.

\begin{Crl}\label{crl:avg}
Let $\mathcal{X}\subseteq L^1$ be a separable Banach lattice, $\rho$ a spectral risk measure, and \Cref{asp:lin} holds. Then  \[\rho(X)\leq \rho(X_{\bar{\mu},r})=\rho_{\mu,r}(X)\leq \rho(X_{\mu,r}).\] In particular we have that $X\succeq X_{\bar{\mu},r}\succeq X_{\mu,r}$. 
\end{Crl}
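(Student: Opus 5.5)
The plan is to prove the chain one link at a time: the left inequality from \Cref{crl:subd}, the equality from \Cref{thm:avg}, the right inequality from concavity of spectral risk measures on distributions, and then convert each risk inequality into a dominance statement via the spectral representation.

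\emph{Left inequality and equality.} A spectral risk measure on a Banach lattice $\mathcal{X}\subseteq L^1$ is a coherent, hence convex, risk measure. So \Cref{crl:subd} applies under separability and \Cref{asp:lin}, giving $\rho\le\rho_{\mu,r}$. The equality $\rho(X_{\bar{\mu},r})=\rho_{\mu,r}(X)$ is the second claim of \Cref{thm:avg}. To use it I need $X_{\bar{\mu},r}$ well defined and in $\mathcal{X}$. By \Cref{lmm:r}, $F^{-1}_{\bar\mu,r}(u)=\int F^{-1}_{X+Z}(u)\,d\mu_{0,r}(Z)$. Integrability of $(u,Z)\mapsto F^{-1}_{X+Z}(u)$ follows because $\int_0^1|F^{-1}_{X+Z}(u)|\,du=\|X+Z\|_1$, and this is bounded on the ball since the $L^1$ norm is dominated by a multiple of the lattice norm.

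\emph{Right inequality.} For a spectral $\rho$ with non-increasing $\phi$, the map $F\mapsto\rho(F)$ is concave under mixtures of distributions. This is the classical fact that spectral (distortion) risk measures with concave distortion are concave on $\mathcal{M}$. One route writes $\rho_\phi(F)=\int h(F(x))\,dx$-type formulas with $h$ the concave distortion $g(t)=\int_0^t\phi$, and uses concavity of $g$ together with linearity of $F_{\mu,r}=\int F_Z\,d\mu$. Then the first claim of \Cref{thm:avg} gives $\rho_{\mu,r}(X)\le\rho(X_{\mu,r})$, provided $X_{\mu,r}\in\mathcal{X}$. This membership is the main obstacle. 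My first attempt is a moment argument: $E|X_{\mu,r}|=\int E|Z|\,d\mu<\infty$ gives $X_{\mu,r}\in L^1$. If $\mathcal{X}$ is strictly smaller than $L^1$, I would either add this as an implicit assumption, as in \Cref{thm:avg}, or use law invariance of the norm where available.

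\emph{Dominance.} I would first derive the ordering pointwise from the same ingredients. Each inequality holds for every spectral $\rho_\phi$ simultaneously, in particular for every $ES_\alpha$, $\alpha\in(0,1)$. Indeed, the hypotheses on $\rho$ enter only through coherence and concavity, which all $ES_\alpha$ share. Hence $ES_\alpha(X)\le ES_\alpha(X_{\bar\mu,r})\le ES_\alpha(X_{\mu,r})$ for all $\alpha$, that is,
\[\int_0^\alpha F^{-1}_X\ge\int_0^\alpha F^{-1}_{X_{\bar\mu,r}}\ge\int_0^\alpha F^{-1}_{X_{\mu,r}}.\]
The means agree: by \Cref{lmm:r} with $\rho=-E$ linear, $E[X_{\bar\mu,r}]=\int E[X+Z]\,d\mu_{0,r}=E[X]$, and likewise $E[X_{\mu,r}]=E[X]$. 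The standard characterization of second-order (concave/increasing-convex) dominance by integrated quantiles then yields $X\succeq X_{\bar{\mu},r}\succeq X_{\mu,r}$ in the paper's sense.
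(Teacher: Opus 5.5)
Your proposal is correct and follows essentially the same route as the paper. The lower bound comes from the convex-risk-measure comparison (\Cref{crl:subd}, which the paper reaches through \Cref{Thm:main}), the equality and upper bound come from \Cref{thm:avg} via concavity of spectral measures on distributions, and the second-order dominance comes from the Expected Shortfall characterization. Your check that $X$, $X_{\bar{\mu},r}$ and $X_{\mu,r}$ all have mean $E[X]$ is a worthwhile addition, since it is what reconciles the ES ordering with the paper's increasing-convex definition of $\succeq$; your $L^1$ moment bound for membership is at the same level of rigor as the paper's appeal to its $L^p$ integrability discussion.
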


\begin{proof}
By \Cref{Thm:main}, we already have $\rho(X)\leq\rho_{\mu,r}(X)$.  Moreover, by \Cref{thm:avg} and \Cref{ex:lp}, we have that $\rho_{\mu,r}(X)=\rho(X_{\bar{\mu},r})\leq\rho(X_{\mu,r})$. The second-order stochastic dominance $X\succeq Y$ is equivalent to $ES^\alpha(X)\leq ES^\alpha(Y)$ for any $\alpha\in[0,1]$, see \cite{Bauerle2006} for instance. Thus, the claim follows directly from \Cref{thm:avg}.
\end{proof}

\begin{Exm}[$L^p$ and Law Invariance]
Regarding the framework of Law Invariance on $L^p$ spaces, \cite{Filipovic2012} is a reference, for instance. In fact, every law invariant comonotone (additive for comonotone pairs) convex risk measure in $L^p,\:p\in[1,\infty]$ is a spectral one. Moreover, we have that $Z\mapsto F_Z(x)$ is integrable for any $x\in\mathbb{R}$. For measurability of $Z\mapsto F_Z(x)$, choose bounded Lipschitz functions $\phi_n:\mathbb R\to[0,1]$ such that
$\phi_n\downarrow 1_{(-\infty,x]}$ point-wise.
Then $Z\mapsto \mathbb E[\phi_n(Z)]$ is continuous on $L^p$.
By monotone convergence, we get $F_Z(x)=\mathbb E[1_{(-\infty,x]}(Z)]=\lim_{n\to\infty}\mathbb E[\phi_n(Z)]$. Hence, $Z\mapsto F_Z(x)$ is Borel measurable on $L^p$. Since distributions lie on the unit interval, we have boundedness and, thus, integrability. For quantiles, Borel-measurability follows by using the definition of the generalized inverse since
\[
\{Z:\ F_Z^{-1}(u)<a\}=\bigcup_{q\:\text{rational},\ q<a}\{Z:\ F_Z(q)\ge u\},
\]
which is Borel measurable since $Z\mapsto F_Z(x)$ also is. Thus $Z\mapsto F_Z^{-1}(u)$ is Borel measurable on $L^p$. Moreover, for any random variable \(Z\) with 
\(\|Z\|_p \le R:=\|X\|_p+r\), by Markov's inequality,
\[
\mathbb{P}(|Z| \ge t) 
\le \frac{\mathbb{E}|Z|^p}{t^p} 
\le \frac{R^p}{t^p}.
\] Thus, it is straightforward to verify that \(Z \mapsto F_Z^{-1}(u)\) is bounded on the set 
    \(\{Z : \|Z\|_p \le R\}\) for any $u\in(0,1)$. Therefore,  $F^{-1}_{\bar{\mu},r}$ it is well defined. An application of the Jensen inequality together to such integrability implies that  $X_{\mu,r},X_{\bar{\mu},r}\in L^p$. In particular, spectral risk measures are integrable over closed balls, naturally fitting in our framework for $L^p$ spaces.
A sufficient condition for $R_\rho$ be concave in distributions when $\mathcal{X}=L^p$ is, see \cite{Acciaio2013} for details, $\rho$ is a law invariant convex risk measure, and $R_\rho^*$ is concave, where \[R_\rho^*(F)=\alpha_\rho(Q),\:Q\sim F.\] For example, when $\rho=Ent_\gamma$, then \[R_\rho^*(F)=\gamma^{-1}\int_{-\infty}^\infty x\log xdF(x),\] which is linear. Thus, the Entropic risk measure is concave in distributions and fits in the scope of \Cref{thm:avg}. This also holds for spectral risk measures.

    We can think of a ``median" like aggregation of quantiles. Let $W\in L^2$, the space of random variables with finite second moment, which is a Banach lattice.
Write  \(\sigma(W)\) for the standard-deviation of $W$. Cantelli’s one–sided inequality gives, for any real r.v. \(W\),
\[
F^{-1}_W(u)\le m(W)+\sigma(W)\sqrt{\tfrac{u}{1-u}},\qquad
F^{-1}_W(u)\ge m(W)-\sigma(W)\sqrt{\tfrac{1-u}{u}}.
\]

Apply this to \(W=Y_Z=X+Z\). Then
\[
m(Y_Z)=E[X]+m(Z),\qquad 
\sigma(Y_Z)\le \sigma(X)+\sigma(Z)\le \sigma(X)+2r,
\]
since \(\sigma(Z)\le |m(Z)|+ \|Z\|_2\le 2r\) for any $Z\in\bar{B}(0,r)$. Linearity and \Cref{lmm:r} give
\[
\int_{\bar{B}(0,r)} m(Y_Z)\,d\mu(Z)=E[X].
\]
Using the uniform bound \(\sigma(Y_Z)\le \sigma(X)+2r\) inside Cantelli’s inequalities yields the clean two–sided estimate for $u\in(0,1)$:
\[\int_{\bar{B}(0,r)} F^{-1}_{X+Z}(u)\,d\mu_{0,r}(Z)\in\left[
E[X]-\big(\sigma(X)+2r\big)\sqrt{\frac{1-u}{u}},
E[X]+\big(\sigma(X)+2r\big)\sqrt{\frac{u}{1-u}}\right]. 
\]
A natural minimax point estimate is the midpoint of this interval:
\[
\widehat{F^{-1}_{\bar{\mu},r}}(u)
:= E[X]+\big(\sigma(X)+2r\big)\,\frac{2u-1}{2\sqrt{u(1-u)}}.\]
\end{Exm}

\begin{Exm}[Wasserstein metric] \label{ex:wass}
Fix $p\in[1,\infty]$ and let $d=d_{W_p}$ be the one-dimensional Wasserstein metric of order $p$,
defined through quantiles by
\[
d_{W_p}(X,Z)
:=\|F_X^{-1}-F_Z^{-1}\|_{L^p(0,1)}
=
\begin{cases}
\Big(\int_0^1 |F_X^{-1}(u)-F_Z^{-1}(u)|^p\,du\Big)^{1/p}, & p<\infty,\\[4pt]
\operatorname*{ess\,sup}_{u\in(0,1)} |F_X^{-1}(u)-F_Z^{-1}(u)|, & p=\infty.
\end{cases}
\]
For a detailed discussion on this metric, see \cite{Villani2021}, while \cite{Esfahani2018} is a reference for its use in robust decision-making. For some sensitivity analysis of risk measures in this context, see \cite{Bartl2021} and \cite{Nendel2022}. It is straightforward to verify that $\bar{B}_{L^p}(X,r)\subseteq \bar{B}_{W^p}(X,r)$. Nonetheless, they agree when the center of the ball is a constant. 

This Wasserstein-based example is meant to illustrate the averaging construction in a metric
space of laws. Since Wasserstein balls need not coincide with norm balls in a Banach lattice,
the functional-analytic assumptions used in \Cref{thm:r} do not automatically apply in this setting unless explicitly imposed. Let $(\Omega,\mathcal F,\mathbb P)$ be atomless. We assume here the paradigm of Law Invariance, i.e. we can consider maps $X\mapsto\rho(F_X)$. Fix a center $X\in \mathcal{X}$ and write $q_X:=F_X^{-1}$.
Choose $h\in L^p(0,1)$ such that:
(i) $h$ is nondecreasing (a.e.); (ii) $\|h\|_{L^p}=1$; and (iii) for every $a\in[-r,r]$,
the function $q_X+a h$ is (a.e.) non-decreasing, hence it is a quantile function.
For each $a\in[-r,r]$, define $Z_a\in \mathcal{X}$ by prescribing its quantile
$q_{Z_a}(u):=F_{Z_a}^{-1}(u)=q_X(u)+a\,h(u),\; u\in(0,1)$.
Then \[d_{W_p}(X,Z_a)=\|q_X-q_{Z_a}\|_{L^p}=\|a h\|_{L^p}=|a|,\] so $\{Z_a:\ |a|\le r\}\subseteq\bar B(X,r)$. Let $\gamma$ be supported on the curve $\{Z_a:\ |a|\le r\}$ (equivalently, if we evaluate the
aggregation along this one-direction family). Then, define $I:Z_a\mapsto a$ and set $\eta:=I_\#\gamma_X$.
Then the induced local law on $a$ is
\[
\eta_{X,r}(da)
:=I_\#\mu_{X,r}(da)
=\frac{1}{\widetilde K(X,r)}\,1_{\{|a|\le r\}}\,\varphi(|a|)\,\eta(da),
\qquad
\widetilde K(X,r):=\int_{-r}^r \varphi(|a|)\,\eta(da).
\]
Hence, for every $u\in(0,1)$ we obtain the explicit representation
\[
F_{\bar\mu,r}^{-1}(u)
=\int_{-r}^r \big(q_X(u)+a\,h(u)\big)\,\eta_{X,r}(da)
= q_X(u)+m_{X,r}\,h(u),
\]
where
\[
m_{X,r}:=\int_{-r}^r a\,\eta_{X,r}(da)
=\frac{\int_{-r}^r a\,\varphi(|a|)\,\eta(da)}{\int_{-r}^r \varphi(|a|)\,\eta(da)}.
\]
In particular, since $\eta_{X,r}$ is supported on $[-r,r]$, we have $|m_{X,r}|\le r$ and thus
$\|F_{\bar\mu,r}^{-1}-F_X^{-1}\|_{L^p}=\|m_{X,r}h\|_{L^p}=|m_{X,r}|\le r$.

Using the explicit quantile formula above we obtain, by \Cref{thm:avg}, for any spectral risk measure
\[
\rho_\phi(\bar X_{\mu,r})
=
-\int_0^1 \big(q_X(u)+m_{X,r}h(u)\big)\phi(u)\,du
=
\rho_\phi(X)
-
m_{X,r}\int_0^1 h(u)\phi(u)\,du.
\]
Consequently,
\[
\big|\rho_\phi(\bar X_{\mu,r})-\rho_\phi(X)\big|
\le |m_{X,r}|\int_0^1 |h(u)|\phi(u)\,du
\le r\int_0^1 |h(u)|\phi(u)\,du.
\]
Notice that the inequality
$\rho_\phi(X)\le \rho_{\mu,r}(X)$ is not assured since the sign depends on the product
$m_{X,r}\cdot\int_0^1 h\phi$. This is not a contradiction to \Cref{Thm:main} since in the Wasserstein setting of this example the ball is defined by the
metric $d_{W_p}$, which is not a norm on a Banach lattice.
\end{Exm}

\section{The Gaussian measure and numerical examples} \label{sec:gaussian_examples}

In this section we discuss several applications of the results presented throughout the paper. Ubiquitous to the Gaussian examples is the non-central chi-squared distribution, which is formally defined below.

\begin{Def}(The non-central Chi-Squared Distribution) Let $X_1, \ldots,X_n$ be a sequence of independent random variables where $X_k \sim N(\mu_k, 1)$, for $k=1,\ldots,n$. The distribution of $\sum_{k=1}^n X_k^2$ is the non-central chi-square distribution with $n$ degrees of freedom and non-centrality parameter $\lambda = \sum_{k=1}^n \mu_k^2$.
\end{Def}

A standard result of these distributions (see \cite{fisher1928general} or \cite{siegrist2022probability} for a more modern derivation) states the density of a non-central chi-square (and hence its cdf) can be written as an infinite weighted sum of centered chi-square densities with weights given by Poisson pdf's.

\begin{Prp}[Poisson mixture representation] \label{prop:poisson_mixture}
If $J\sim\mathrm{Poisson}(\lambda/2)$ independently of $\chi^2_{k+2J}$, then
$\chi^2_k(\lambda)\overset{d}{=}\chi^2_{k+2J}$, so
\begin{equation}\label{eq:mixture}
  F_k(x;\lambda)
  \;=\;
  \sum_{j=0}^{\infty} p_j(\lambda)\,F_{\chi^2_{k+2j}}(x),
  \qquad
  p_j(\lambda) \;:=\; e^{-\lambda/2}\frac{(\lambda/2)^j}{j!},
\end{equation}
where $F_{\chi^2_{k+2j}}(x)$ denotes the cdf of the \emph{central}
chi-squared distribution with $k+2j$ degrees of freedom.  Moreover, the sum in \eqref{eq:mixture} converges uniformly on compact sets in $\lambda$.
\end{Prp}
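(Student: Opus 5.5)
The plan is to prove the distributional identity $\chi^2_k(\lambda)\overset{d}{=}\chi^2_{k+2J}$ through moment generating functions, then recover the cdf formula by conditioning on $J$, and finally obtain uniform convergence from a tail bound on the Poisson weights.

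\textbf{Step 1: reduce to a single shifted coordinate.} By rotation invariance of the standard Gaussian vector in $\mathbb{R}^k$, the law of $\sum_{i=1}^k X_i^2$ with $X_i\sim N(\mu_i,1)$ depends on the means only through $\lambda=\sum_i\mu_i^2$. We may therefore take $\mu=(\sqrt{\lambda},0,\dots,0)$. This gives
\[
\chi^2_k(\lambda)\overset{d}{=}(\xi+\sqrt{\lambda})^2+\chi^2_{k-1},
\]
with $\xi\sim N(0,1)$ independent of $\chi^2_{k-1}$.

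\textbf{Step 2: compute the mgf.} A Gaussian integral shows that for $t<1/2$,
\[
E\big[e^{t(\xi+\sqrt\lambda)^2}\big]=(1-2t)^{-1/2}\exp\!\Big(\frac{\lambda t}{1-2t}\Big).
\]
Multiplying by the mgf $(1-2t)^{-(k-1)/2}$ of the central part gives
\[
M(t)=(1-2t)^{-k/2}\exp\!\Big(\frac{\lambda t}{1-2t}\Big).
\]

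\textbf{Step 3: compute the mgf of the mixture.} By the tower property over $J\sim\mathrm{Poisson}(\lambda/2)$,
\[
E\big[(1-2t)^{-(k+2J)/2}\big]=(1-2t)^{-k/2}\,E\big[s^J\big],\qquad s=(1-2t)^{-1}.
\]
Using $E[s^J]=\exp\big((\lambda/2)(s-1)\big)$ and $s-1=2t/(1-2t)$, this equals $M(t)$. Both mgfs are finite on the open interval $t<1/2$, which contains a neighbourhood of $0$, so the two laws coincide by uniqueness of the mgf.

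\textbf{Step 4: derive the cdf series.} Conditioning on $J$ gives
\[
F_k(x;\lambda)=\sum_j P(J=j)\,F_{\chi^2_{k+2j}}(x),
\]
which is \eqref{eq:mixture}. The interchange of sum and probability is justified by countable additivity, since the events $\{J=j\}$ partition the sample space.

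\textbf{Step 5: uniform convergence on compacts.} Since $0\le F_{\chi^2_{k+2j}}\le 1$, the tail of the series after index $N$ is bounded by $\sum_{j>N}p_j(\lambda)$. For $\lambda\in[0,\Lambda]$ we have
\[
p_j(\lambda)\le \frac{(\Lambda/2)^j}{j!}.
\]
The right-hand side is summable in $j$ and does not depend on $\lambda$ or $x$, so the Weierstrass M-test gives convergence that is uniform on $[0,\Lambda]$, and in fact uniform in $x$ as well.

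The only step needing real care is the Gaussian integral in Step 2. It is handled by completing the square:
\[
tx^2-\tfrac12(x-\sqrt\lambda)^2=-\tfrac{1-2t}{2}\Big(x-\tfrac{\sqrt\lambda}{1-2t}\Big)^2+\frac{\lambda t}{1-2t}.
\]
The remaining steps are routine.
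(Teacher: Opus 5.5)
Your argument is correct and complete. The paper gives no proof of this proposition; it quotes it as a standard fact and points to the cited references (Fisher 1928, Siegrist 2022). So your proof supplies a self-contained argument where the text has none, rather than reworking one of its proofs.

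Your route has four ingredients:
\begin{itemize}
\item rotation invariance, to reduce to a single shifted coordinate;
\item the closed-form mgf $(1-2t)^{-k/2}\exp\bigl(\lambda t/(1-2t)\bigr)$, obtained by completing the square;
\item identification with the Poisson mixture through $E[s^J]=e^{(\lambda/2)(s-1)}$;
\item uniqueness of mgfs that are finite on $t<1/2$.
\end{itemize}
This works entirely at the level of transforms. It therefore avoids the series manipulations of densities (Bessel-type expansions of the non-central density) that a density-level derivation needs. The cdf identity then follows at once from the law of total probability. Your M-test bound $p_j(\lambda)\le(\Lambda/2)^j/j!$ on $[0,\Lambda]$ proves the uniform-convergence claim, and in fact uniformly in $x$ as well.

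One remark makes your Step 5 more useful than the statement alone requires. The paper's next corollary differentiates the mixture series term by term, citing only uniform convergence of the series itself, and that is not sufficient by itself. Your bound also controls the derivatives: for $j\ge 1$,
\[
|p_j'(\lambda)|=\tfrac12\,|p_{j-1}(\lambda)-p_j(\lambda)|\le\tfrac12\Bigl((\Lambda/2)^{j-1}/(j-1)!+(\Lambda/2)^j/j!\Bigr),
\]
and $|p_0'|\le\tfrac12$. The right-hand side is summable, so the differentiated series also converges uniformly on $[0,\Lambda]$, which actually justifies the term-by-term differentiation.
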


As an almost immediate consequence of the Poisson mixture representation, one can derive a simple identity for the derivative of the non-central chi-squared distribution.

\begin{Crl} The derivative of the non-central chi-squared cdf can be written as
\begin{equation}\label{eq:cdf-deriv}
  \frac{\partial F_k(x;\lambda)}{\partial\lambda}
  \;=\;
  \tfrac{1}{2}\!\left[F_{k+2}(x;\lambda) - F_k(x;\lambda)\right].
\end{equation}
\end{Crl}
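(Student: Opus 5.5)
The plan is to differentiate the Poisson mixture representation of \Cref{prop:poisson_mixture} term by term in $\lambda$ and then reindex the resulting series.

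First, compute the derivative of each weight. Since $p_j(\lambda)=e^{-\lambda/2}(\lambda/2)^j/j!$, one gets for $j\ge 1$
\[
p_j'(\lambda)=\tfrac12\left[p_{j-1}(\lambda)-p_j(\lambda)\right],
\]
and $p_0'(\lambda)=-\tfrac12 p_0(\lambda)$. This is the same formula if we adopt the convention $p_{-1}\equiv 0$.

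Second, justify interchanging the derivative and the infinite sum in \eqref{eq:mixture}. For fixed $x$, the terms $p_j'(\lambda)F_{\chi^2_{k+2j}}(x)$ are bounded in absolute value by $\tfrac12(p_{j-1}(\lambda)+p_j(\lambda))$. For $\lambda$ in a compact interval $[0,\Lambda]$, this in turn is bounded by $e^{0}(\Lambda/2)^{j-1}/(j-1)!+(\Lambda/2)^j/j!$, which is summable. By the Weierstrass M-test the differentiated series converges uniformly on compacts. Combined with the convergence of the original series (uniform on compacts, as stated in \Cref{prop:poisson_mixture}), the standard theorem on termwise differentiation of function series applies. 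This is the only genuinely analytic step and the one I expect to need care. It is routine, however, because the Poisson weights give factorial decay uniformly in $\lambda$ on compacts and the cdf factors lie in $[0,1]$.

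Third, reindex. Termwise differentiation gives
\[
\frac{\partial F_k(x;\lambda)}{\partial\lambda}
=\tfrac12\sum_{j\ge1}p_{j-1}(\lambda)F_{\chi^2_{k+2j}}(x)-\tfrac12\sum_{j\ge0}p_j(\lambda)F_{\chi^2_{k+2j}}(x).
\]
Substituting $i=j-1$ in the first sum turns it into $\sum_{i\ge0}p_i(\lambda)F_{\chi^2_{(k+2)+2i}}(x)$. By \eqref{eq:mixture} applied with $k+2$ degrees of freedom, this is exactly $F_{k+2}(x;\lambda)$. The second sum is $F_k(x;\lambda)$, again by \eqref{eq:mixture}, which yields \eqref{eq:cdf-deriv}. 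Both sums converge absolutely, so splitting the series into the two sums is legitimate.
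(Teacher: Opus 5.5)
Your proposal is correct and follows the paper's proof essentially step for step: you compute $p_j'(\lambda)=\tfrac12[p_{j-1}(\lambda)-p_j(\lambda)]$ with $p_{-1}\equiv 0$, differentiate the Poisson mixture termwise, and shift the index in the first sum to recognize $F_{k+2}(x;\lambda)$. Your Weierstrass M-test bound on the differentiated series is a sharper justification of the termwise differentiation than the paper's. The paper appeals only to uniform convergence of the original series, and that alone would not justify the interchange.
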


\begin{proof}
We prove this using the Poisson mixture representation of Proposition \ref{prop:poisson_mixture}. As the sum converges uniformly on compact sets in $\lambda$, we may differentiate term by term.

A direct calculation gives
\[
  \frac{\partial\, p_j(\lambda)}{\partial\lambda}
  \;=\;
  -\frac{1}{2}e^{-\lambda/2}\frac{(\lambda/2)^j}{j!}
  \;+\;
  e^{-\lambda/2}\frac{(\lambda/2)^{j-1}}{2[(j-1)!]}\cdot 1_{\{j\geq 1\}}
  \;=\;
  \tfrac{1}{2}\left[p_{j-1}(\lambda) - p_j(\lambda)\right],
\]
with the convention $p_{-1}\equiv 0$.  Therefore
\[
  \frac{\partial F_k(x;\lambda)}{\partial\lambda}
  \;=\;
  \frac{1}{2}
  \sum_{j=0}^{\infty}
  \left[p_{j-1}(\lambda) - p_j(\lambda)\right] F_{\chi^2_{k+2j}}(x).
\]
 
Splitting the sum and applying the shift $j\mapsto j+1$ to the first part
(the $j=0$ term vanishes since $p_{-1}=0$):
\[
  \sum_{j=0}^{\infty} p_{j-1}(\lambda)\,F_{\chi^2_{k+2j}}(x)
  \;=\;
  \sum_{j=0}^{\infty} p_j(\lambda)\,F_{\chi^2_{k+2(j+1)}}(x)
  \;=\;
  \sum_{j=0}^{\infty} p_j(\lambda)\,F_{\chi^2_{(k+2)+2j}}(x)
  \;=\;
  F_{k+2}(x;\lambda),
\]
where the last equality uses the representation \eqref{eq:mixture} with $k+2$
in place of $k$.  Hence
\[
  \frac{\partial F_k(x;\lambda)}{\partial\lambda}
  \;=\;
  \tfrac{1}{2}\!\left[F_{k+2}(x;\lambda) - F_k(x;\lambda)\right],
\]
which establishes \eqref{eq:cdf-deriv}.
\end{proof}

The following proposition is necessary to study the limit involved in the finite-dimensional characterization of the robust risk measure.

\begin{Prp}\label{prop:cdf-ratio}
For every fixed $x > 0$ and $\lambda \geq 0$,
\[
  \frac{F_{n+2}(x;\,\lambda)}{F_n(x;\,\lambda)}
  \;\sim\;
  \frac{x}{n}
  \qquad\text{as } n\to\infty.
\]
\end{Prp}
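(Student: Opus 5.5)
We work with the Poisson mixture of Proposition~\ref{prop:poisson_mixture} and reduce the statement to the analogous ratio for central chi-squared cdfs, uniformly in the mixing index $j$. For fixed $x>0$ and $m\to\infty$, the central cdf is dominated by the lower tail of the Gamma integral:
\[
F_{\chi^2_m}(x)=\frac{1}{\Gamma(m/2)2^{m/2}}\int_0^x t^{m/2-1}e^{-t/2}\,dt .
\]
The first step is the exact recursion
\[
F_{\chi^2_{m+2}}(x)=F_{\chi^2_m}(x)-\frac{(x/2)^{m/2}e^{-x/2}}{\Gamma(m/2+1)},
\]
obtained by integration by parts. Combined with the series expansion $F_{\chi^2_m}(x)=e^{-x/2}\sum_{i\ge0}\frac{(x/2)^{m/2+i}}{\Gamma(m/2+i+1)}$, it gives
\[
\frac{F_{\chi^2_{m+2}}(x)}{F_{\chi^2_m}(x)}=\frac{\sum_{i\ge1}(x/2)^{i}/\Gamma(m/2+i+1)}{\sum_{i\ge0}(x/2)^i/\Gamma(m/2+i+1)}.
\]
Each series is dominated by its first term, with relative corrections of order $x/m$. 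Hence the ratio equals $\frac{x/2}{m/2+1}(1+O(x/m))$, so it is $\sim x/m$. The bound is explicit, of the form $\frac{x}{m+2}\le R_m\le \frac{x}{m+2}\big(1+\frac{C x}{m}\big)$, and so it holds uniformly for all $m\ge n$.

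Next we write
\[
\frac{F_{n+2}(x;\lambda)}{F_n(x;\lambda)}=\frac{\sum_j p_j(\lambda)F_{\chi^2_{n+2j}}(x)\,R_{n+2j}}{\sum_j p_j(\lambda)F_{\chi^2_{n+2j}}(x)},
\]
where $R_m:=F_{\chi^2_{m+2}}(x)/F_{\chi^2_m}(x)$. This is a weighted average of the $R_{n+2j}$, with probability weights $w_j\propto p_j(\lambda)F_{\chi^2_{n+2j}}(x)$. Since $R_m$ is decreasing in $m$ and bounded above by $\frac{x}{m+2}(1+Cx/m)$, the average is at most $\frac{x}{n+2}(1+Cx/n)$. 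This yields $\limsup \frac{n}{x}\cdot\text{ratio}\le 1$.

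For the lower bound we show that the weights concentrate on $j=0$, or at least on bounded $j$. By the step above, $F_{\chi^2_{n+2j}}(x)/F_{\chi^2_n}(x)\le \prod_{k<j}R_{n+2k}\le (x/n)^j(1+o(1))^j$. Therefore $w_0\ge \big(1+\sum_{j\ge1}\frac{p_j}{p_0}(x/n)^j(1+o(1))^j\big)^{-1}\to1$, since the sum is $O(\lambda x/n)$. Hence the ratio is at least $w_0R_n=\frac{x}{n}(1+o(1))$, and together with the upper bound we obtain the claim. Uniform convergence of the mixture (Proposition~\ref{prop:poisson_mixture}) justifies manipulating the series.

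The main obstacle is making the central-case estimate $R_m\sim x/m$ quantitative and uniform in $m\ge n$. We would handle it via the Gamma series above, bounding the tail sums by geometric series with ratio $x/(m+2)<1/2$ for $n$ large. An alternative, if that is messy, is to use the derivative identity \eqref{eq:cdf-deriv} with the recursion in $k$, but we expect the direct series argument to suffice.
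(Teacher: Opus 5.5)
Your argument is correct in substance, but it follows a different route from the paper. The paper proceeds in four steps:
\begin{itemize}
\item it derives $\gamma(a,z)\sim z^a e^{-z}/a$ from integral bounds;
\item it combines this with Stirling to get $F_{\chi^2_m}(x)\sim \frac{e^{-x/2}}{\sqrt{\pi m}}(ex/m)^{m/2}$, and takes the ratio of two such equivalents to get $R_m\sim x/m$;
\item it iterates this to $F_{\chi^2_{n+2j+2}}(x)/F_{\chi^2_n}(x)=O\bigl((x/n)^{j+1}\bigr)$ for each fixed $j$;
\item it asserts that the $j=0$ terms dominate both Poisson sums.
\end{itemize}
You instead use the exact recursion and the series for the regularized incomplete gamma function. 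This gives non-asymptotic control of $R_m$ without Stirling. You then treat the noncentral ratio as a weighted average of the $R_{n+2j}$.

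Your route gains something real. The paper's termwise $O\bigl((x/n)^{j+1}\bigr)$ estimates carry $j$-dependent constants, so letting $n\to\infty$ inside the infinite sums needs a uniformity argument that the paper leaves implicit. Your telescoping bound on $\prod_{k<j}R_{n+2k}$, together with $\sum_{j\ge1}\frac{p_j}{p_0}(x/n)^j=e^{\lambda x/(2n)}-1$, supplies exactly that. The paper's route, in exchange, gives the explicit asymptotic shape of $F_{\chi^2_m}(x)$ itself.

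One correction: your explicit central bound has the wrong direction. Set $a=m/2$, $z=x/2$, $T=\sum_{i\ge0}z^i/\bigl((a+1)\cdots(a+i)\bigr)$ and $T'=\sum_{i\ge0}z^i/\bigl((a+2)\cdots(a+1+i)\bigr)$. Your two identities give $R_m=1-1/T=\frac{x}{m+2}\,T'/T$, and $T'\le T$ termwise. Hence
\[
\frac{x}{m+2}\Bigl(1-\frac{x}{m+2}\Bigr)\le R_m\le \frac{x}{m+2}\qquad (m+2>x),
\]
so the claimed inequality $\frac{x}{m+2}\le R_m$ is false. Nothing breaks, for two reasons. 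The upper bound you actually use is implied by the display. The lower side enters only through $R_n\sim x/n$, which still holds.

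The corrected bound also simplifies your argument. It gives $R_{n+2j}\le x/(n+2)$ and $\prod_{k<j}R_{n+2k}\le (x/n)^j$ directly. You therefore no longer need the monotonicity of $R_m$, which you invoked without proof; it is true, since $T$ decreases in $m$. Together with $w_0\ge e^{-\lambda x/(2n)}$, this yields the clean sandwich
\[
e^{-\lambda x/(2n)}\,\frac{x}{n+2}\Bigl(1-\frac{x}{n+2}\Bigr)\le \frac{F_{n+2}(x;\lambda)}{F_n(x;\lambda)}\le \frac{x}{n+2},
\]
from which the statement follows immediately.
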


\begin{proof}
The proof proceeds in three steps. We first derive some asymptotic results for the central chi-squared cdf. Then, we study the convergence of ratios of consecutive central chi-squared cdfs finally of ratios of non-central chi-squared cdfs.

The cdf of $\chi^2_m$ satisfies
\[
  F_{\chi^2_m}(x)
  \;=\;
  \frac{\gamma(m/2,\;x/2)}{\Gamma(m/2)},
\]
where $\gamma(a,z) = \int_0^z t^{a-1}e^{-t}\,dt$ is the lower incomplete
gamma function.  We claim that for fixed $z = x/2 > 0$,
\begin{equation}\label{eq:gamma-asymp}
  \gamma(a,z) \;\sim\; \frac{z^a e^{-z}}{a}
  \qquad\text{as }a\to\infty,
\end{equation}
in the sense that the ratio of the two sides tends to $1$.
Substituting $t = z-s$:
\[
  \gamma(a,z)
  \;=\;
  e^{-z}z^{a-1}\int_0^z\!\left(1-\frac{s}{z}\right)^{a-1}\!e^{s}\,ds.
\]
Using $1-s/z \leq e^{-s/z}$ we obtain the upper bound
\[
  \int_0^z\!\left(1-\frac{s}{z}\right)^{a-1}\!e^{s}\,ds
  \;\leq\;
  \int_0^{\infty}e^{-s(a-1)/z}\,e^{s}\,ds
  \;=\;
  \int_0^{\infty}e^{-s(a/z - 1/z - 1)}\,ds
  \;=\;
  \frac{z}{a-1-z},
\]
valid for $a > 1+z$, so that
\[
  \gamma(a,z)
  \;\leq\;
  e^{-z}z^{a-1}\cdot\frac{z}{a-1-z}
  \;=\;
  \frac{z^a e^{-z}}{a-1-z}
  \;\sim\;
  \frac{z^a e^{-z}}{a}.
\]
For the lower bound, fix $\delta\in(0,z)$ and restrict to $s\in[0,\delta]$. Since $g(s) = \log(1-s/z) + s/(z-\delta)$ satisfies $g(0) = 0$ and $g'(s) \geq 0$ on $[0,\delta]$, we have that $\log(1-s/z)\geq -s/(z-\delta)$ for $s\leq\delta$. Hence,
\[
  \int_0^z\!\left(1-\frac{s}{z}\right)^{a-1}\!e^{s}\,ds
  \;\geq\;
  \int_0^{\delta}e^{-(a-1)s/(z-\delta)}\,ds
  \;=\;
  \frac{z-\delta}{a-1}\!\left(1-e^{-(a-1)\delta/(z-\delta)}\right).
\]
Putting everything together, the lower incomplete gamma function is bounded by
\[  e^{-z}z^{a-1}\frac{z-\delta}{a-1}\!\left(1-e^{-(a-1)\delta/(z-\delta)}\right) \leq \gamma(a,z) \leq \frac{z^a e^{-z}}{a-1-z}.\]
Dividing by $z^a e^{-z}/a$, we get that
\[\frac{z-\delta}{z}\frac{a}{a-1}\!\left(1-e^{-(a-1)\delta/(z-\delta)}\right) \leq \frac{\gamma(a,z)}{z^a e^{-z}/a} \leq \frac{a}{a-1-z}.\]
Taking $a\rightarrow \infty$ for fixed $\delta$ and then $\delta \rightarrow 0$, we get \eqref{eq:gamma-asymp}.  

Applying Stirling's approximation
$\Gamma(m/2+1)\sim\sqrt{\pi m}\,(m/2e)^{m/2}$ then yields
\begin{equation}\label{eq:central-asymp}
  F_{\chi^2_m}(x)
  \;\sim\;
  \frac{e^{-x/2}}{\sqrt{\pi m}}
  \left(\frac{ex}{m}\right)^{m/2}
  \qquad\text{as }m\to\infty.
\end{equation}

Forming the ratio $F_{\chi^2_{m+2}}(x)/F_{\chi^2_m}(x)$ using
\eqref{eq:central-asymp}:
\[
  \frac{F_{\chi^2_{m+2}}(x)}{F_{\chi^2_m}(x)}
  \;\sim\;
  \sqrt{\frac{m}{m+2}}
  \cdot
  \frac{ex}{m+2}
  \cdot
  \left(\frac{m}{m+2}\right)^{m/2}.
\]
Since $\sqrt{m/(m+2)}\to 1$ and 
\[(m/(m+2))^{m/2} = (1-2/(m+2))^{m/2} = \left[\left( 1 + \frac{(-1)}{\frac{m+2}{2}}\right)^{\frac{m+2}{2}} \right]^{\frac{m}{m+2}}\to
e^{-1},\]
we obtain
\begin{equation}\label{eq:central-ratio}
  \frac{F_{\chi^2_{m+2}}(x)}{F_{\chi^2_m}(x)}
  \;\sim\;
  \frac{x}{m}
  \qquad\text{as }m\to\infty.
\end{equation}
Iterating \eqref{eq:central-ratio} $j+1$ times starting from $m=n$:
\begin{equation}\label{eq:central-ratio-j}
  \frac{F_{\chi^2_{n+2j+2}}(x)}{F_{\chi^2_n}(x)}
  \;\sim\;
  \prod_{k=0}^{j}\frac{x}{n+2k}
  \;=\;
  O\!\left(\!\left(\frac{x}{n}\right)^{\!j+1}\right),
\end{equation}
which decays as $(x/n)^{j+1}\to 0$ for every fixed $j\geq 0$.

Using the Poisson mixture representation of Proposition \ref{prop:poisson_mixture} and factoring out $F_{\chi^2_n}(x)$ out of both the numerator and denominator:
\[
  \frac{F_{n+2}(x;\lambda)}{F_n(x;\lambda)}
  \;=\;
  \frac{\displaystyle\sum_{j=0}^{\infty}p_j(\lambda)\,
        \dfrac{F_{\chi^2_{n+2j+2}}(x)}{F_{\chi^2_n}(x)}}
       {\displaystyle\sum_{j=0}^{\infty}p_j(\lambda)\,
        \dfrac{F_{\chi^2_{n+2j}}(x)}{F_{\chi^2_n}(x)}}.
\]
By \eqref{eq:central-ratio-j}, the $j$-th term in the numerator is
$O((x/n)^{j+1})$ and in the denominator is $O((x/n)^j)$, so the $j=0$ term
dominates each sum:
\[
  \frac{F_{n+2}(x;\lambda)}{F_n(x;\lambda)}
  \;=\;
  \frac{p_0(\lambda)\cdot\dfrac{x}{n}(1+o(1)) + O\!\left(\dfrac{x^2}{n^2}\right)}
       {p_0(\lambda)\cdot 1 + O\!\left(\dfrac{x}{n}\right)}
  \;\sim\;
  \frac{x}{n},
\]
where $p_0(\lambda) = e^{-\lambda/2}$ cancels. 
\end{proof}

In the remainder of this section we specialize the choices of Example \ref{ex:gaus} and present some numerical results. The reader is reminded that $\mathcal{X}$ is a real, separable Hilbert space with orthonormal basis $\{e_k\}_{k\ge 1}$. In this space we let $\gamma = \mathcal{N}(0,C)$ be a centered Gaussian measure with covariance operator C with $Ce_k = \lambda_ke_k$.

\begin{Exm}[Gaussian measure with linear risk] \label{sec:gauss_linear} 
We first study the robust risk measure when the baseline risk is given by the negative of the first component of the random variable $Z$. More formally, we take $\varphi(t) = e^{-t^2}; \, X = e_1 = (1,0,0,\ldots); \, \rho(Z) = f(Z) = -Z_1$ and $\lambda_1=1$. As the integrand $f$ only depends on the first component of $Z$, the eigenvalues $(\lambda_k)_{k\geq 2}$ do not impact the results and are also set to 1, without loss of generality. Let us denote $u \sim \gamma_n = N(0,I_n)$. Our interest lies on computing
\begin{equation}
\rho_{\mu,r}(X)=\int_{\bar{B}(X,r)}\rho(Z)d\mu_{X,r}(Z) = \lim_{n\rightarrow\infty} \frac{E_{\gamma_n}[-u_1 e^{-\|u-X^{(n)} \|^2} 1_{\|u-X^{(n)}\| \leq r]}]}{E_{\gamma_n}[e^{-\|u-X^{(n)} \|^2}  1_{\|u-X^{(n)}\| \leq r}]}
\label{eq:ex1_limit}    
\end{equation}
using the finite-dimensional measures on the right hand side of \eqref{eq:ex1_limit}.

First, note that
\[\| u-X^{(n)}\|^2 = (u_1 - 1)^2 + \sum_{i=2}^n u_i^2\]
and that
\[e^{-\|u-X^{(n)} \|^2} = e^{-\|u\|^2} e^{2u_1 -1}.\] As $\gamma_n$ has standard Gaussian density in $\R^n$ by combining the exponentials, we have that
\[E_{\gamma_n}[g(u) e^{\|u-X^{(n)}\|^2}] \propto E_{\tilde \gamma_n}[g(u)],\]
where $\tilde \gamma_n$ is the density of a $N\left(\frac{2}{3}X^{(n)}, \frac{1}{3}I_n\right)$.
As the proportionality constants are the same in the numerator and in the denominator of \eqref{eq:ex1_limit},
\begin{equation}
\rho_{\mu,r}(X) = \lim_{n\rightarrow \infty} \frac{E_{\tilde \gamma_n}[-u_1 1_{\|u-X^{(n)}\| \leq r}]}{E_{\tilde \gamma_n}[1_{\|u-X^{(n)}\| \leq r} ]} = \lim_{n\rightarrow\infty}E_{\tilde \gamma_n}[-u_1 \mid \|u-X^{(n)}\| \leq r]\label{eq:ex1_conditional}
\end{equation}
Defining $V^{(n)} = u - X^{(n)}$, $m= -\frac{1}{3}e_1$ and $\sigma^2 = \frac{1}{3}$ we have that $V^{(n)} \sim N(m, \sigma^2 I_n)$. Hence, the condition $\|u-X^{(n)}\| \leq r$ on the right most equation in \eqref{eq:ex1_conditional} becomes $\|V^{(n)}\|\leq r$. Using the fact that $u_1 = V^{(n)}_1 +1$, we conclude that
\[\rho_{\mu,r}(X) = -1 - \lim_{n\rightarrow \infty} E[V_1^{(n)} \mid \|V^{(n)}\| \leq r].\]

By the definition of conditional expectation,
\[
  E[V^{(n)}_1 \mid \|V^{(n)}\| \leq r]
  \;=\;
  \frac{E\!\left[V^{(n)}_1\,1_{\{\|V^{(n)}\|\leq r\}}\right]}{P(\|V^{(n)}\|\leq r)}.
\] Write $V^{(n)} = m + \sigma Z$ with $Z \sim N(0, I_n)$.  Then
\[
  \frac{\|V^{(n)}\|^2}{\sigma^2}
  \;=\;
  \left\|Z + \frac{m}{\sigma}\right\|^2,
\]
which has the non-central chi-squared distribution $\chi^2_n(\lambda)$ with
$n$ degrees of freedom and non-centrality parameter
$\lambda = \|m\|^2/\sigma^2$.  Writing $F_k(x;\lambda)$ for the cdf of
$\chi^2_k(\lambda)$, we obtain the following expression for the denominator
\[
  P(\|V^{(n)}\|\leq r) \;=\; F_n\!\left(\tfrac{r^2}{\sigma^2};\,\lambda\right).
\] For the numerator, we first notice the density of $V^{(n)}$ is given by
\[
  f_{V^{(n)}}(v)
  \;=\;
  \frac{1}{(2\pi\sigma^2)^{n/2}}
  \exp\!\left(-\frac{\|v-m\|^2}{2\sigma^2}\right).
\]
Differentiating with respect to $m_1$ gives
\[
  \frac{\partial}{\partial m_1} f_{V^{(n)}}(v)
  \;=\;
  \frac{v_1 - m_1}{\sigma^2}\,f_{V^{(n)}}(v),
\]
so that
\[
  v_1\,f_{V^{(n)}}(v)
  \;=\;
  \sigma^2\,\frac{\partial}{\partial m_1} f_{V^{(n)}}(v)
  \;+\;
  m_1\,f_{V^{(n)}}(v).
\]
Integrating over the ball $\{\|v\|\leq r\}$ and using dominated convergence to
interchange differentiation with integration, we obtain
\begin{equation}\label{eq:numerator}
  E\!\left[V_1\,1_{\{\|V^{(n)}\|\leq r\}}\right]
  \;=\;
  \sigma^2\,\frac{\partial}{\partial m_1} P(\|V^{(n)}\|\leq r)
  \;+\;
  m_1\,P(\|V^{(n)}\|\leq r).
\end{equation} Since $\lambda = \|m\|^2/\sigma^2$ and only $m_1$ is non-zero, we have
$\partial\lambda/\partial m_1 = 2m_1/\sigma^2$.  Applying the chain rule and
\eqref{eq:cdf-deriv} with $x = r^2/\sigma^2$:
\[
  \frac{\partial}{\partial m_1} P(\|V^{(n)}\|\leq r)
  \;=\;
  \frac{\partial F_n(r^2/\sigma^2;\,\lambda)}{\partial\lambda}
  \cdot\frac{2m_1}{\sigma^2}
  \;=\;
  \frac{m_1}{\sigma^2}
  \left[F_{n+2}\!\left(\tfrac{r^2}{\sigma^2};\lambda\right)
        - F_n\!\left(\tfrac{r^2}{\sigma^2};\lambda\right)\right].
\] Substituting into \eqref{eq:numerator} and writing $F_k$ as shorthand for
$F_k(r^2/\sigma^2;\,\lambda)$:
\[
  E\!\left[V^{(n)}_1\,1_{\{\|V^{(n)}\|\leq r\}}\right]
  \;=\;
  \sigma^2 \cdot \frac{m_1}{\sigma^2}\!\left[F_{n+2} - F_n\right]
  \;+\;
  m_1\,F_n
  \;=\;
  m_1\!\left[F_{n+2} - F_n + F_n\right]
  \;=\;
  m_1\,F_{n+2}.
\]
Dividing by the denominator $P(\|V^{(n)}\|\leq r) = F_n$ gives the result:
\[
  E[V^{(n)}_1 \mid \|V^{(n)}\|\leq r]
  \;=\;
  \frac{m_1\,F_{n+2}}{F_n}
  \;=\;
  m_1\,
  \frac{P\!\left(\chi^2_{n+2}(\lambda)\leq r^2/\sigma^2\right)}
       {P\!\left(\chi^2_{n  }(\lambda)\leq r^2/\sigma^2\right)}.
  \qedhere
\]


Hence, the risk measure can be written as
\[
  \rho_{\mu,r}(X)
  \;=\;
  -1 + \lim_{n\rightarrow \infty} \frac{1}{3}
  \frac{F_{n+2}(3r^2;\,1/3)}{\,F_{n}(3r^2;\,1/3)}.
\]
Taking $n \to \infty$ for fixed $r$, Proposition \ref{prop:cdf-ratio} gives $F_{n+2}(3r^2;\,1/3)/F_n(3r^2;\,1/3) \sim 3r^2/n \to 0$,
so
\[
  \rho_{\mu,r}(X)
  \;=\;
  -1 + \lim_{n\to\infty}\frac{1}{3}\frac{3r^2}{n}
  \;=\;
  -1
\]
for every finite $r > 0$, which is consistent with the theoretical results of Lemma \ref{lmm:r}. We now examine the behaviour of
$\rho_{\mu,r}(X) = -1$ as $r$ varies, having already taken $n \to \infty$.
 
\medskip
\noindent\textbf{Limit $r \to 0$.}  As $r \to 0$, the ball $\bar{B}(X,r)$
shrinks to the single point $X = e_1$.  Since $\rho_{\mu,r}(X) = -1$ for
all finite $r > 0$, the limit is $-1$, which equals $\rho(X) = -X_1 = -1$,
the value of the risk functional evaluated at the centre of the ball.  This
is consistent: as the ball shrinks, the conditional measure concentrates on
$X$ and the risk measure converges to $\rho(X)$. It also illustrates the small ball limit of Theorem \ref{thm:r}.
 
\medskip
\noindent\textbf{Limit $r \to \infty$.}  Since $\rho_{\mu,r}(X) = -1$ for
every finite $r > 0$, the limit as $r \to \infty$ is again $-1$, also in line with the large ball limit of Theorem \ref{thm:r}. This is, however, in sharp contrast with what happens at finite $n$: for fixed $n$, as $r \to
\infty$ the ball covers all of $\mathbb{R}^n$, the conditioning becomes
vacuous, and
\[
  \rho_{\mu,r}^{(n)}(X)
  \;\to\;
  E[-V_1]
  \;=\;
  -(m_1+1)
  \;=\;
  -\tfrac{2}{3}.
\]
The two limits therefore do not commute:
\[
  \lim_{r\to\infty}\lim_{n\to\infty}\,\rho_{\mu,r}^{(n)}(X)
  \;=\;
  -1
  \;\neq\;
  -\tfrac{2}{3}
  \;=\;
  \lim_{n\to\infty}\lim_{r\to\infty}\,\rho_{\mu,r}^{(n)}(X).
\]
This is a manifestation of the curse of dimensionality: in high dimensions,
any ball of fixed radius captures a vanishing fraction of the Gaussian
measure, so the conditioning concentrates entirely on the centre of the ball
regardless of how large $r$ is, giving $\rho_{\mu,r}(X) = -1$ for all
finite $r > 0$.



\end{Exm}

\begin{Exm}[Quadratic risk functional]
We now consider the same setup as in Section \ref{sec:gauss_linear} with $\varphi(t)=e^{-t^2}$,
$X=e_1=(1,0,0,\ldots)$, and $\lambda_1=\lambda_2=\cdots=1$, but replace the
linear functional by the quadratic one $\rho(Z)=f(Z)=-Z_1^2$, for which the Lipschitz hypothesis of Theorem~\ref{thm:r} does not hold globally. Following the same finite-dimensional approximation as in \eqref{eq:ex1_limit}, we need to compute
\[
\rho_{\mu,r}(X)
= \lim_{n\to\infty}
\frac{E_{\gamma_n}\!\left[-u_1^2\,e^{-\|u-X^{(n)}\|^2}\,
      1_{\|u-X^{(n)}\|\le r}\right]}
{E_{\gamma_n}\!\left[e^{-\|u-X^{(n)}\|^2}\,
      1_{\|u-X^{(n)}\|\le r}\right]},
\]
where $u\sim\gamma_n=\mathcal{N}(0,I_n)$.

Exactly as in the linear case, the exponential weight
combined with the $\mathcal{N}(0,I_n)$ density tilts the measure to
$\tilde{\gamma}_n=\mathcal{N}\!\left(\tfrac{2}{3}X^{(n)},\tfrac{1}{3}I_n\right)$,
with the proportionality constants canceling between numerator and denominator. Hence,
\[
\rho_{\mu,r}(X)
= \lim_{n\to\infty}
E_{\tilde{\gamma}_n}\!\left[-u_1^2 \;\middle|\; \|u-X^{(n)}\|\le r\right].
\]

Setting $V^{(n)}=u-X^{(n)}$, we have
$V^{(n)}\sim\mathcal{N}\!\left(-\tfrac{1}{3}e_1,\tfrac{1}{3}I_n\right)$
and $u_1=V_1^{(n)}+1$, so
\[
u_1^2 = \bigl(V_1^{(n)}+1\bigr)^2
      = \bigl(V_1^{(n)}\bigr)^2 + 2V_1^{(n)} + 1.
\]
Therefore,
\[
\rho_{\mu,r}(X)
= -1
  - 2\lim_{n\to\infty}E\!\left[V_1^{(n)}\mid\|V^{(n)}\|\le r\right]
  -  \lim_{n\to\infty}E\!\left[\bigl(V_1^{(n)}\bigr)^2\mid\|V^{(n)}\|\le r\right].
\]

The term $\lim_{n\to\infty}E[V_1^{(n)}\mid\|V^{(n)}\|\le r]$ is exactly the one
appearing in the linear example and it is equal to $0$ for every finite $r>0$. For the second term, note that $\sqrt{3}\,V_1^{(n)}\sim\mathcal{N}(-1/\sqrt{3},1)$, so
$3(V_1^{(n)})^2$ is the squared first component of a vector whose squared
norm $3\|V^{(n)}\|^2$ follows a non-central $\chi^2(n)$ distribution with
non-centrality parameter $\lambda=1/3$. By the curse-of-dimensionality
argument used by the end of Section \ref{sec:gauss_linear} $-$ in high dimensions, any ball of fixed
radius captures a vanishing fraction of the Gaussian measure, so the
conditioning becomes asymptotically vacuous $-$ the conditional second moment
converges to the unconditional one:
\[
\lim_{n\to\infty}
E\!\left[\bigl(V_1^{(n)}\bigr)^2\mid\|V^{(n)}\|\le r\right]
= E\!\left[\bigl(V_1^{(n)}\bigr)^2\right]
= \mathrm{Var}(V_1^{(n)}) + \bigl(E[V_1^{(n)}]\bigr)^2
= \frac{1}{3}+\frac{1}{9}
= \frac{4}{9}.
\] Combining the three terms, we obtain
\[
\rho_{\mu,r}(X) = -1 - 0 - \frac{4}{9} = -\frac{13}{9}
\]
for every finite $r>0$.

The two limits in $r$ and $n$ do not commute, mirroring exactly the
phenomenon observed in the linear case. In the order $\lim_{r\to 0}
\lim_{n\to\infty}$, the result is $-13/9$ for all finite $r>0$.
In the reversed order, for fixed $n$ and $r\to 0$ the ball shrinks to the
single point $X=e_1$ and
$\rho_{\mu,r}^{(n)}(X)\to\rho(X)=-X_1^2=-1$,
which is the value expected from \Cref{thm:r}.
This non-commutativity is again a manifestation of the curse of dimensionality:
in the infinite-dimensional limit, any ball of fixed radius concentrates its
$\gamma$-mass entirely at the center, so the conditioning is insensitive to
$r$ and never recovers the base risk measure value $\rho(X)=-1$.
\end{Exm}

\begin{Exm}[A Bayesian example: Gaussian families under the Wasserstein metric]
\label{sec:bayesian}
The preceding examples studied $\rho_{\mu,r}$ in the Hilbert-space setting of
Example \ref{ex:gaus}, where the ambient metric is induced by the norm and the
theoretical machinery of Theorems \ref{thm:r} applies in full.
We now shift perspective and work \emph{directly at the level of distributions},
placing ourselves in the Law-Invariance framework of Section \ref{sec:agg}.
The goal is twofold: to provide a tractable, simulation-friendly
illustration of the aggregation bounds in Corollary \ref{crl:avg}, and to show
that the averaging approach is well-calibrated relative to both the
baseline risk measure and the worst-case benchmark as the uncertainty
radius $r$ varies.
 
We restrict attention to the parametric family of one-dimensional
Gaussian distributions,
\[
  \mathcal{G} \;=\; \bigl\{Z \sim \mathcal{N}(\mu_Z, \sigma_Z^2)
    : \mu_Z \in \mathbb{R},\; \sigma_Z > 0\bigr\},
\]
so that every element of the uncertainty neighborhood is identified
by its mean--standard-deviation pair $(\mu_Z, \sigma_Z)$.
We endow $\mathcal{G}$ with the \emph{$2$-Wasserstein distance}, which
takes the closed form
\begin{equation}
  d_{\mathcal{W}_2}(X, Z)
  \;=\;
  \sqrt{(\mu_X - \mu_Z)^2 + (\sigma_X - \sigma_Z)^2},
  \label{eq:W2_Gaussian}
\end{equation}
where we identify each distribution with its parameters.
The closed uncertainty ball of radius $r$ around $X \sim \mathcal{N}(\mu_X, \sigma_X^2)$
is therefore the Euclidean disk
\[
  \bar{B}(X, r)
  \;=\;
  \bigl\{(\mu_Z, \sigma_Z) \in \mathbb{R} \times (0,\infty)
    : (\mu_X - \mu_Z)^2 + (\sigma_X - \sigma_Z)^2 \leq r^2\bigr\}.
\] This setting corresponds to the Wasserstein framework of Example \ref{ex:wass}
  rather than to the Banach-lattice setting of Theorem \ref{Thm:main}, so the
  lower bound $\rho \le \rho_{\mu,r}$ is not guaranteed by our theory.
  Whether it holds for specific risk measures and prior choices is
  therefore an empirical question, and one we examine numerically below.

As the base measure $\gamma_X$ we use a \emph{Normal-Gamma} distribution
for the pair $(\mu_Z, \tau_Z)$, where $\tau_Z: = \sigma_Z^{-2}$ denotes
the precision.  This is the standard conjugate prior for a Gaussian
likelihood with unknown mean and variance, and it ensures that
$\gamma_X$ is concentrated on $\mathcal{G}$.
Specifically, we write
$(\mu_Z, \tau_Z) \sim \mathrm{NormalGamma}(\mu_X, k, \alpha_{\mathrm{NG}}, \beta_{\mathrm{NG}})$,
meaning
\[
  \mu_Z \mid \tau_Z \;\sim\; \mathcal{N}\!\bigl(\mu_X,\,(k\tau_Z)^{-1}\bigr),
  \qquad
  \tau_Z \;\sim\; \mathrm{Gamma}(\alpha_{\mathrm{NG}}, \beta_{\mathrm{NG}}),
\]
with density
\[
  p(\mu_Z, \tau_Z)
  \;=\;
  \frac{\beta_{\mathrm{NG}}^{\alpha_{\mathrm{NG}}}\sqrt{k}}{\Gamma(\alpha_{\mathrm{NG}})\sqrt{2\pi}}\,
  \tau_Z^{\alpha_{\mathrm{NG}} - \frac{1}{2}}\,
  e^{-\beta_{\mathrm{NG}}\tau_Z}\,
  \exp\!\Bigl(-\tfrac{k\tau_Z}{2}(\mu_Z - \mu_X)^2\Bigr).
\]
We center the prior at the baseline distribution $X$ by setting $\beta_{\mathrm{NG}} = \alpha_{\mathrm{NG}}/\tau_X$, so that
$\mathbb{E}[\mu_Z] = \mu_X$ and $\mathbb{E}[\tau_Z] = \tau_X:=\sigma^{-2}_X$. Under this parametrization, the marginal variance of $\mu_Z$
is $\text{Var}(\mu_Z) = \beta_{\mathrm{NG}} / (k(\alpha_{\mathrm{NG}}-1))$ and  $\text{Var}(\tau_Z) =\tau_X^2/\alpha_{\mathrm{NG}}$,
so $\alpha_{\mathrm{NG}}$ and $k$ jointly control how tightly
the prior concentrates around $(\mu_X, \tau_X)$. The hyperparameters $\alpha_{\mathrm{NG}}$ and $k$ control the \emph{concentration}
of the prior: large $\alpha_{\mathrm{NG}}$ (or large $k$) concentrates mass near
$(\mu_X, \tau_X)$, mimicking the effect of a small radius $r$.
This parallel between Bayesian concentration and uncertainty-ball
shrinkage is a useful interpretive bridge: the radius $r$ and the
prior precision $(\alpha_{\mathrm{NG}}, k)$ are complementary ways of encoding
the practitioner's confidence in the baseline model.
 
We take the Expected Shortfall at level $a \in (0,1)$ as the 
baseline risk measure. For $Z \sim \mathcal{N}(\mu_Z, \sigma_Z^2)$ 
it reads
\begin{equation}\label{eq:gaussian_ES}
  \rho(Z) \;=\; \mathrm{ES}_a(Z)
  \;=\; -\mu_Z + \sigma_Z\,\frac{\phi_{\mathrm{n}}(\Phi^{-1}(a))}{1 - a},
\end{equation}
where $\phi_{\mathrm{n}}$ and $\Phi$ denote the standard Gaussian 
pdf and cdf, respectively; we use the subscript $\mathrm{n}$ to 
distinguish $\phi_{\mathrm{n}}$ from the spectral weight function 
$\phi$ of \Cref{sec:agg}. Since $\mathrm{ES}_a$ is a spectral risk measure, it is 
in particular concave on $\mathcal{M}(\mathcal{X})$. Theorem \ref{thm:avg}
therefore yields both the equality 
$\rho_{\mu,r}(X) = \rho(X_{\bar{\mu},r})$ and the inequality 
$\rho_{\mu,r}(X) \le \rho(X_{\mu,r})$. Whether the lower 
bound from Corollary \ref{crl:avg}, which relies on the Banach-lattice 
setting of Theorem \ref{Thm:main}, extends to the present Wasserstein 
setting is not guaranteed by our theory, and we verify it 
numerically:
\begin{equation}
  \rho(X) \;\stackrel{?}{\le}\; \rho_{\mu,r}(X) 
  \;=\; \rho(X_{\bar{\mu},r})
  \;\le\; \rho(X_{\mu,r}).
  \label{eq:dominance_chain}
\end{equation} The weighting kernel is chosen as $\varphi(t) = e^{-\lambda t^2}$,
giving more importance to distributions close to $X$ in the
Wasserstein sense.  The normalizing constant is
\begin{align*}
  K(X, r)
  &= \int_{\bar{B}(X,r)}
       \exp\!\Bigl(-\lambda\bigl[(\mu_X-\mu_Z)^2
         + (\sigma_X - \sigma_Z)^2\bigr]\Bigr)\,
       p(\mu_Z, \tau_Z)\,d\mu_Z\,d\tau_Z,
\end{align*}
and the robust risk measure is
\begin{align}
  \rho_{\mu,r}(X)
  &= \frac{1}{K(X,r)}
     \int_{\bar{B}(X,r)}
       \Bigl(-\mu_Z + \sigma_Z\,
         \tfrac{\phi_n(\Phi^{-1}(a))}{1-a}\Bigr)\,
       \exp\!\Bigl(-\lambda\bigl[(\mu_X-\mu_Z)^2
         +(\sigma_X-\sigma_Z)^2\bigr]\Bigr)\notag\\
  &\quad\times p(\mu_Z, \tau_Z)\,d\mu_Z\,d\tau_Z.
  \label{eq:rho_bayesian}
\end{align} 

Both integrals are evaluated by Monte Carlo:
we draw $N$ samples $(\mu_Z^{(i)}, \tau_Z^{(i)})$ from the
Normal-Gamma prior, retain those falling inside $\bar{B}(X, r)$,
apply the weights $e^{-\lambda d_{\mathcal{W}_2}(X,Z^{(i)})^2}$,
and compute the weighted average of $\rho(Z^{(i)})$. The baseline law has parameters $\mu_X = 0$ and $\sigma_X = 1$
(so $\tau_X = 1$), and the ES level is $a = 0.95$.
The kernel decay is $\lambda = 2$, and all Monte Carlo estimates
use $N = 10^6$ draws over the radius grid $r \in [0, 2]$.
The Normal-Gamma prior is specified by shape
$\alpha_{\mathrm{NG}} = 25$, rate
$\beta_{\mathrm{NG}} = \alpha_{\mathrm{NG}}/\tau_X = 25$,
and precision scaling $k = 4$; these correspond to a prior
standard deviation of $20\%$ on the precision $\tau_Z$ and
$0.5\,\sigma_X$ on the mean $\mu_Z$.

To numerically verify \eqref{eq:dominance_chain} we also compute
the two aggregated benchmarks of \Cref{def:diff_aggregations}. For the \emph{quantile-aggregated} quantity $X_{\bar{\mu},r}$,
we use the fact that, within the Gaussian family, the quantile
function of $Z\sim\mathcal{N}(\mu_Z,\sigma_Z^2)$ is
$F_Z^{-1}(u)=\mu_Z+\sigma_Z\,\Phi^{-1}(u)$.
The averaged quantile function \eqref{eq:fmuq} therefore reads
\begin{equation}\label{eq:quantile_agg_gaussian}
  F_{\bar{\mu},r}^{-1}(u)
  \;=\;
  \int_{\bar{B}(X,r)} F_Z^{-1}(u)\,d\mu_{X,r}(Z)
  \;=\;
  \bar{\mu}_{\mu,r} + \bar{\sigma}_{\mu,r}\,\Phi^{-1}(u),
\end{equation}
where $\bar{\mu}_{\mu,r}$ and $\bar{\sigma}_{\mu,r}$ denote the
$\mu_{X,r}$-weighted averages of $\mu_Z$ and $\sigma_Z$ over
$\bar{B}(X,r)$, both computed by Monte Carlo alongside
$\rho_{\mu,r}(X)$.
Since $X_{\bar{\mu},r}\sim\mathcal{N}(\bar{\mu}_{\mu,r},\bar{\sigma}_{\mu,r}^2)$,
its ES is available in closed form:
\begin{equation}\label{eq:ES_quantile_agg}
  \rho(X_{\bar{\mu},r})
  \;=\; -\bar{\mu}_{\mu,r}
    + \bar{\sigma}_{\mu,r}\,
      \frac{\phi_n(\Phi^{-1}(a))}{1-a}.
\end{equation}
By \Cref{thm:avg}, $\rho(X_{\bar{\mu},r})=\rho_{\mu,r}(X)$, so
\eqref{eq:ES_quantile_agg} provides a closed-form cross-check
for the Monte Carlo estimate of $\rho_{\mu,r}(X)$.

For the \emph{distribution-aggregated} quantity $X_{\mu,r}$,
we form the mixture cdf \eqref{eq:fmu}, which here becomes a
Gaussian location-scale mixture:
\[
  F_{\mu,r}(x)
  \;=\; \int_{\bar{B}(X,r)}
        \Phi\!\left(\frac{x - \mu_Z}{\sigma_Z}\right)
        d\mu_{X,r}(Z).
\]
This mixture is not Gaussian in general, so
$\rho(X_{\mu,r})=\mathrm{ES}_a(X_{\mu,r})$ is computed numerically
from the empirical quantile of the simulated sample
$\{F_{\mu,r}^{-1}(U^{(i)})\}_{i=1}^N$. For reference we also compute the worst-case risk measure
\[
  \rho^{WC}(X)
  \;=\; \sup_{Z \in \bar{B}(X,r)} \rho(Z)
  \;=\; \sup_{\substack{(\mu_Z,\sigma_Z):\,\sigma_Z>0,\\
        d_{\mathcal{W}_2}(X,Z)\le r}}
    \Bigl(-\mu_Z + \sigma_Z\,c_a\Bigr),
\]
where we write $c_a := \phi_n(\Phi^{-1}(a))/(1-a)$ for
brevity. Since $\rho(Z)$ is affine in $(\mu_Z,\sigma_Z)$, the
supremum can be computed in closed form. Setting
$\mathbf{v}=(\mu_Z,\sigma_Z)^\top$,
$\mathbf{v}_0=(\mu_X,\sigma_X)^\top$, and
$\mathbf{w}=(-1,c_a)^\top$, the problem reduces to
\[
  \sup_{\|\mathbf{v}-\mathbf{v}_0\|\le r} \mathbf{w}^\top\mathbf{v}.
\]
Decomposing $\mathbf{w}^\top\mathbf{v}
= \mathbf{w}^\top\mathbf{v}_0
+ \mathbf{w}^\top(\mathbf{v}-\mathbf{v}_0)$
and applying Cauchy--Schwarz gives
\[
  \mathbf{w}^\top\mathbf{v}
  \;\le\;
  \mathbf{w}^\top\mathbf{v}_0
  + \|\mathbf{w}\|\,\|\mathbf{v}-\mathbf{v}_0\|
  \;\le\;
  \mathbf{w}^\top\mathbf{v}_0 + r\|\mathbf{w}\|,
\]
with equality when $\mathbf{v}-\mathbf{v}_0
= r\,\mathbf{w}/\|\mathbf{w}\|$, i.e.\ the optimizer moves
from the center $\mathbf{v}_0$ in the direction of the gradient
$\mathbf{w}$. Since
$\mathbf{w}^\top\mathbf{v}_0 = -\mu_X + c_a\sigma_X = \rho(X)$
and $\|\mathbf{w}\|=\sqrt{1+c_a^2}$, this yields
\begin{equation}\label{eq:WC_closed_form}
  \rho^{WC}(X)
  \;=\; \rho(X) + r\sqrt{1+c_a^2}.
\end{equation}
The positivity constraint $\sigma_Z>0$ is automatically satisfied
at the optimizer: the optimal value $\sigma_Z^*
= \sigma_X + rc_a/\|\mathbf{w}\|$ is strictly positive for all
$r\ge 0$ since $c_a>0$. Hence \eqref{eq:WC_closed_form} is valid
for all $r\ge 0$, and $\rho^{WC}(X)$ grows linearly in $r$ at
rate $\sqrt{1+c_a^2}$, providing a rigorous upper envelope against
which to compare the less conservative averaging measure.

\begin{figure}
    \centering    \includegraphics[width=0.9\linewidth]{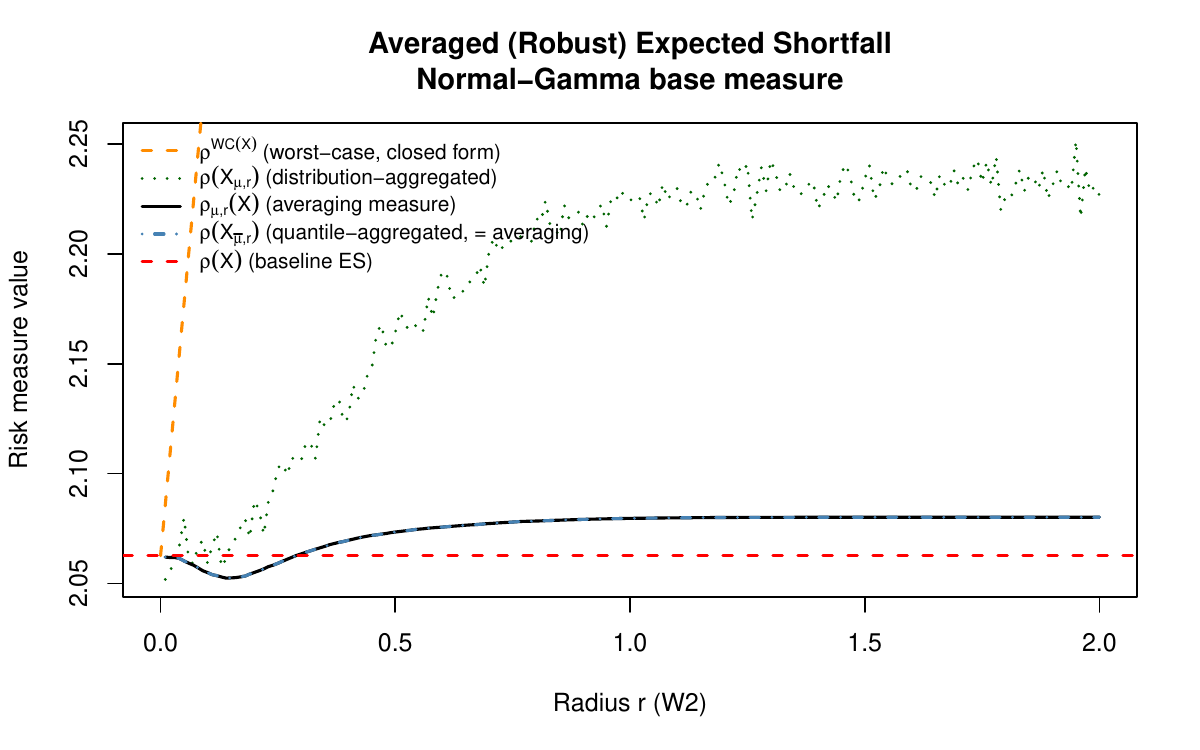}
\caption{Dominance chain for the averaging measure under the
  Normal-Gamma base measure with $\mu_X=0$, $\sigma_X=1$,
  $a=0.95$, $\lambda=2$, $\alpha_{\mathrm{NG}}=25$, $k=4$,
  $N=10^6$. Five quantities are plotted as functions of the
  Wasserstein radius $r$: baseline ES $\rho(X)$ (red dashed),
  averaging measure $\rho_{\mu,r}(X)$ (solid black),
  quantile-aggregated ES $\rho(X_{\bar{\mu},r})$ (dot-dashed
  blue, numerically coinciding with $\rho_{\mu,r}(X)$ in
  agreement with \Cref{thm:avg}), distribution-aggregated ES
  $\rho(X_{\mu,r})$ (dotted green), and worst-case ES
  $\rho^{WC}(X)$ (dashed orange, closed form
  \eqref{eq:WC_closed_form}).}
    \label{fig:ng_ES_dominance}
\end{figure}

\Cref{fig:ng_ES_dominance} displays five quantities as functions of
$r\in[0,2]$: the baseline $\rho(X)$ (red dashed, constant),
the averaging measure $\rho_{\mu,r}(X)$ (solid black),
the quantile-aggregated $\rho(X_{\bar{\mu},r})$ (dot-dashed blue,
equal to $\rho_{\mu,r}(X)$ by \Cref{thm:avg}),
the distribution-aggregated $\rho(X_{\mu,r})$ (dotted green),
and the worst-case $\rho^{WC}(X)$ (dashed orange, from
\eqref{eq:WC_closed_form}). Several features in this figure deserve comment. First, as $r\downarrow 0$ all five quantities converge to $\rho(X)$. In particular, $\rho_{\mu,r} \rightarrow \rho(X)$ when $r\rightarrow 0$, consistent with the small radius result from \Cref{thm:r}. When $r \rightarrow \infty$ (large radius limit), the proposed robust risk measure converges to a constant. Second, the solid black and dot-dashed blue curves are numerically 
indistinguishable up to Monte Carlo error of order $O(N^{-1/2})$,
providing empirical support for the identity
$\rho_{\mu,r}(X)=\rho(X_{\bar{\mu},r})$ from \Cref{thm:avg}.
It also serves as an implementation cross-check: since the two
quantities are computed by different code paths --- a direct
weighted average of $\rho(Z^{(i)})$ versus the closed-form
\eqref{eq:ES_quantile_agg} evaluated at $(\bar{\mu}_{\mu,r},
\bar{\sigma}_{\mu,r})$ --- their agreement up to sampling
noise confirms the correctness of both estimators. Third, we comment on the empirical dominance chain. Apart from very small values of $r$ (where Monte Carlo estimates are extremely noisy), the ordering
    \[
      \rho(X) \;\le\; \rho_{\mu,r}(X) \;\le\; \rho(X_{\mu,r})
      \;\le\; \rho^{WC}(X)
    \]
    holds numerically.  The middle two inequalities are guaranteed
    by \Cref{thm:avg} and \Cref{crl:avg}; the leftmost one
    (marked $\stackrel{?}{\le}$ in \eqref{eq:dominance_chain})
    is not covered by our theory in the Wasserstein setting but is
    confirmed here empirically, suggesting that $\rho\le\rho_{\mu,r}$
    may hold more generally than \Cref{Thm:main} currently guarantees. Finally, the averaging measure $\rho_{\mu,r}(X)$ grows smoothly and sub-linearly in $r$, while $\rho^{WC}(X)$ grows at the constant rate $\sqrt{1+(\phi_n(\Phi^{-1}(a))/(1-a))^2}$ per unit of $r$.  This confirms the main motivation for the averaging approach: it provides robustness without the excessive conservatism of the worst-case.

Figure~\ref{fig:bayesian_sensitivity} isolates the effect of each tuning parameter in turn, keeping the dominance-chain quantities fixed at the main-experiment values for reference.

\begin{figure}[!ht]
  \centering
  \includegraphics[width=\textwidth]{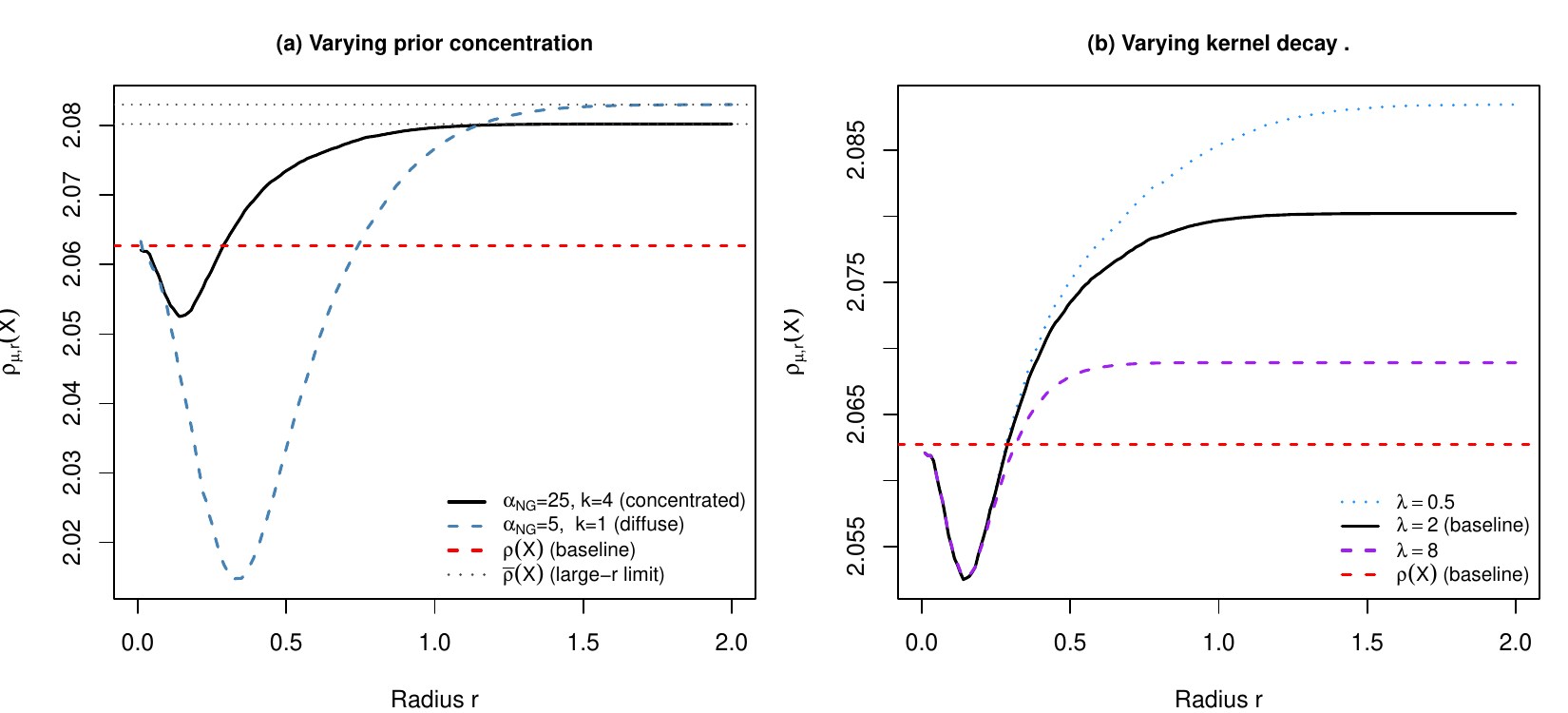}
  \caption{Sensitivity of the averaging measure $\rho_{\mu,r}(X)$
    to tuning parameters ($\mu_X=0$, $\sigma_X=1$, $a=0.95$,
    $N=10^6$).
    \emph{Panel~(a)}: varying prior concentration with $\lambda=2$
    fixed; concentrated prior ($\alpha_{\mathrm{NG}}=25$, $k=4$,
    solid black) versus diffuse prior ($\alpha_{\mathrm{NG}}=5$,
    $k=1$, dashed blue).  Horizontal dotted lines mark the
    respective large-$r$ limits $\bar\rho(X)$ of \Cref{thm:r}.
    \emph{Panel~(b)}: varying kernel decay $\lambda\in\{0.5,2,8\}$
    with the concentrated prior fixed.
    Baseline ES $\rho(X)$ shown as red dashed line in both panels.}
  \label{fig:bayesian_sensitivity}
\end{figure}
 Panel~(a) fixes $\lambda=2$ and contrasts the concentrated prior ($\alpha_{\mathrm{NG}}=25$, $k=4$) with a diffuse one ($\alpha_{\mathrm{NG}}=5$, $k=1$).  The diffuse prior assigns appreciable weight to distributions far from $X$ even for small $r$, so $\rho_{\mu,r}(X)$ rises more steeply near $r=0$ and reaches its large-$r$ limit $\bar\rho(X)$ of \Cref{thm:r} sooner.  The concentrated prior keeps most mass near $(\mu_X,\sigma_X)$ until $r$ is large enough to include the tails of the prior, producing a slower and more gradual increase. The two limits $\bar\rho(X)$ differ between the priors (shown as horizontal dotted lines), since $\bar\rho$ integrates $\rho$ against the tilted measure $\bar\gamma_0$ of \Cref{thm:r}, which depends on $\gamma_X$. Panel~(b) fixes the concentrated prior and varies $\lambda\in\{0.5,2,8\}$. Larger $\lambda$ concentrates $\mu_{X,r}$ near $X$, making $\rho_{\mu,r}(X)$ closer to $\rho(X)$ for any given $r$; in the limit $\lambda\to\infty$ the kernel degenerates to a point mass at $X$ and $\rho_{\mu,r}(X)\to\rho(X)$ for every $r$, recovering the base risk measure. Conversely, $\lambda\to 0$ (uniform kernel) yields the most dispersed average.  The three curves in Panel~(b) span a non-trivial range, confirming that $\lambda$ is not redundant with $r$: unlike $r$, which shifts the \emph{reach} of the uncertainty neighborhood, $\lambda$ controls the \emph{shape} of the weighting within it. A practitioner therefore has two orthogonal robustness dials, and $(\alpha_{\mathrm{NG}}, k)$ can be fixed on substantive grounds while $r$ and $\lambda$ are calibrated to the desired level and profile of robustness.
\end{Exm}

\section*{Funding}
Righi acknowledges financial support from CNPq (Brazilian Research Council) project numbers 401720/2023-3 and 302869/2024-7.

\bibliography{Theory,reference}
\bibliographystyle{apalike}

\end{document}